\documentclass[onecolumn,superscriptaddress, nofootinbib]{revtex4-2}
\usepackage{amsmath, amssymb,amsthm}
\usepackage{graphicx} 
\usepackage{xcolor}

\definecolor{royalblue(web)}{rgb}{0.25, 0.41, 0.88}

\newtheorem{definition}{Definition}

\newtheorem{theorem}{Theorem}
\newtheorem{corollary}{Corollary}[theorem]
\newtheorem{question}{Question}
\newtheorem{lemma}[theorem]{Lemma}

\usepackage{cleveref}


\setlength{\parskip}{\baselineskip}
\setlength{\parindent}{0pt}

\newcommand{\ket}[1]{{|{#1} \rangle}}
\newcommand{\bra}[1]{{\langle {#1} |}}

\begin{document}

\title{Wavefunction Flows: Efficient Quantum Simulation of Continuous Flow Models}

\author{David Layden}
\affiliation{IBM Research \vspace{-1em}}
\affiliation{MIT-IBM Watson AI Lab \vspace{-1em}}

\author{Ryan Sweke}
\affiliation{African Institute for Mathematical Sciences (AIMS), South Africa \vspace{-1em}}
\affiliation{Department of Mathematical Sciences, Stellenbosch University, Stellenbosch 7600, South Africa\vspace{-1em}}
\affiliation{National Institute for Theoretical and Computational Sciences (NITheCS), South Africa\vspace{-1em}}

\author{Vojt\v{e}ch Havl\'{i}\v{c}ek}
\affiliation{IBM Research \vspace{-1em}}

\author{Anirban Chowdhury}
\affiliation{IBM Research \vspace{-1em}}

\author{Kirill Neklyudov}
\affiliation{Universit\'{e} de Montr\'{e}al\vspace{-1em}}
\affiliation{Mila - Quebec AI Institute\vspace{-1em}}
\affiliation{Institut Courtois}

\begin{abstract}
Flow models are a cornerstone of modern machine learning. They are generative models that progressively transform probability distributions according to learned dynamics. Specifically, they learn a continuous-time Markov process that efficiently maps samples from a simple source distribution into samples from a complex target distribution. We show that these models are naturally related to the Schr\"odinger equation, for an unusual Hamiltonian on continuous variables. Moreover, we prove that the dynamics generated by this Hamiltonian can be efficiently simulated on a quantum computer. Together, these results give a quantum algorithm for preparing coherent encodings (a.k.a., qsamples) for a vast family of probability distributions---namely, those expressible by flow models---by reducing the task to an existing classical learning problem, plus Hamiltonian simulation. For statistical problems defined by flow models, such as mean estimation and property testing, this enables the use of quantum algorithms tailored to qsamples, which may offer advantages over classical algorithms based only on samples from a flow model. More broadly, these results reveal a close connection between state-of-the-art machine learning models, such as flow matching and diffusion models, and one of the main expected capabilities of quantum computers: simulating quantum dynamics.
\end{abstract}

\maketitle

\section{Introduction}

Generative modeling is a fundamental task in machine learning (ML) which has seen enormous progress in the last decade. Its main goal is to model a target probability distribution $p_\text{targ}$ which is given empirically; i.e., given a dataset of samples generated from $p_\text{targ}$, one seeks to learn a model generating new samples from $p_\text{targ}$. This formulation, due to its generality, has revolutionized many fields of science including language modeling \citep{brown2020language}, computer vision \citep{ho:2020}, audio synthesis \citep{van2016wavenet}, protein folding \citep{jumper2021highly}, and modeling quantum states \cite{Carrasquilla_2019}.

Different approaches to this task have won out in different settings. We focus here on settings with continuous variables, where flow models achieve state-of-the-art performance~\cite{lipman2023flowmatchinggenerativemodeling, lipman2024flowmatchingguidecode}. These models learn a continuous-time Markov process that can be efficiently realized on a classical (i.e., non-quantum) computer, which transforms some source distribution~$p_0$ that we can easily sample from, into a distribution $p_T$ that approximates $p_\text{targ}$. Unlike some of their predecessors~\cite{yang2024diffusionmodelscomprehensivesurvey, ghojogh2022restrictedboltzmannmachinedeep, Kobyzev_2021}, their success owes partly to the fact that they describe dynamics---not thermal equilibrium---allowing them sidestep the issue of slow-mixing Markov chains that plagued earlier generative models, like Boltzmann machines \cite{ACKLEY1985}. 

In quantum computing, \emph{coherent encodings} of probability distributions, often known as \emph{qsamples} \cite{aharonov2003adiabaticquantumstategeneration,temme2025quantizedmarkovchaincouplings}, are an important class of states. For a probability distribution $p$, the corresponding qsample is the quantum state $\sum_x \sqrt{p(x)}\ket{x}$, if $x$ takes discrete values, or $\int \sqrt{p(x)} \ket{x} \, dx$ if $x$ takes continuous values. While measuring a qsample in the $\{\ket{x}\}$ basis simply gives a random sample from $p$, the ability to prepare qsamples of $p$ is more powerful than the ability to sample $x\sim p$. That is, if one can prepare qsamples for a distribution of interest, it is possible to construct quantum algorithms for a wide variety of distribution problems---such as mean estimation and property testing---which offer meaningful sample complexity advantages over optimal or best-known classical algorithms~\cite{Brassard_2002, brassard2011optimalquantumalgorithmapproximate, heinrich2001quantumsummationapplicationintegration, montanaro2015quantum, qchebyshev, qsubgaussian, knill2007optimal, Arunachalam2022simplerclassical, Cornelissen_2022, Bravyi_2011, gilyen2020distributional, chakraborty2010newresultsquantumproperty, li2018quantum, montanaro2013survey}. Qsample access also enables comparatively stronger models of learning compared to classical ones \cite{NEURIPS2023_acb7ce5a}, or improvements to distributional property testing \cite{gilyen2020distributional}. Accordingly, there has been long-standing interest in algorithms for generating qsamples on quantum computers \cite{kitaev2002classical, aharonov2003adiabaticquantumstategeneration, grover2002creatingsuperpositionscorrespondefficiently}, although this area remains much less developed than that of generative modeling in ML.

We propose a new quantum algorithm, at the intersection of these two ideas, that efficiently prepares qsamples by quantizing flow models. Specifically, we show that flow models naturally map to the Schrödinger equation, with an unusual Hamiltonian involving continuous variables. We call it the \textit{continuity Hamiltonian}, and refer to the quantum dynamics it generates as a \textit{wavefunction flow}. This continuity Hamiltonian is fully specified by the already-learned classical model, and requires no further learning, parameterization, or optimization. It evolves an initial fiducial state into a qsample for the distribution $p_T$ learned by the underlying flow model. And since flow models have been shown to efficiently generate a vast family of distributions, the corresponding Schrödinger equation can generate a correspondingly vast family of qsamples.

Crucially, flow models and the continuity Hamiltonian to which they map, are both defined on continuous variables. Turning this mapping into an algorithm for digital quantum computers poses a challenging Hamiltonian simulation problem, which comprises our main technical contribution. Specifically, we show how to discretize both space and time with efficiently controllable error, using Fourier collocation, a well-established numerical method. Besides handling the continuity Hamiltonian, which has an atypical form, our simulation technique gives a remarkably simple error bound which requires only mild regularity conditions, compared to earlier methods, thanks to our improved proof techniques. We then show that the state produced by this quantum simulation can be used for statistical inference problems defined by the classical flow model, for which quantum approaches may offer an advantage.

The resulting quantum algorithm has several potential applications in quantum computing (QC) and machine learning. But more fundamentally, it uses Hamiltonian simulation to form a new bridge between quantum complexity theory, in which qsamples are an important class of states, and classical machine learning. Concretely, it shows that the question of which qsamples are efficiently preparable can be cast as a theoretical ML question; and conversely, that the question of which distributions are efficiently representable with flow models can be cast as a quantum complexity question.

\section{Background}

\subsection{Qsamples}

The task of preparing qsamples was first studied by Aharonov and Ta-Shma \cite{aharonov2003adiabaticquantumstategeneration}, for the probability distribution defined by a classical circuit evaluated on uniformly random inputs. They gave efficient polynomial-time quantum algorithms to prepare qsamples corresponding to limiting distributions of certain rapidly mixing Markov chains. Qsample generation has since been studied extensively in the context of quantizing classical Markov chain Monte Carlo methods, where quantum algorithms are known to give polynomial speed-ups over classical algorithms \cite{somma2008annealing,harrow2010adaptive,Arunachalam2022simplerclassical,wocjan2008quantumsampling,montanaro2015quantum,Orsucci2018fasterquantummixing,lemieux2024quantumsamplingalgorithmsquantum,temme2025quantizedmarkovchaincouplings}. However, these are of limited efficacy if the underlying Markov chain is slow-mixing, as is often the case in practice \cite{goodfellow_book}.

Efficient quantum algorithms for qsample preparation are known when $p$ is simple to specify, e.g., Gaussian or efficiently integrable, and a number of authors have undertaken detailed resource estimation in such cases \cite{grover2002creatingsuperpositionscorrespondefficiently, kitaev2009wavefunctionpreparationresamplingusing, mcardle2025quantumstatepreparationcoherent, rattew2022preparingarbitrarycontinuousfunctions, lemieux2024quantumsamplingalgorithmsquantum}. There are also a number of heuristic proposals whose performance is more difficult to quantify \cite{liu:2018, dallaire:2018, lloyd:2018, amin:2018, zoufal:2019, coyle2020born, wild2021quantumsampling, coopmans:2024, zhang:2024}. However, the problem of preparing qsamples in general is likely hard: it was shown that efficient qsample preparation for all distributions that can be efficiently sampled classically would lead to unlikely complexity theoretic conclusions \citep{aharonov2003adiabaticquantumstategeneration}.

\subsection{Flow models}\label{sec:flow-models}

Consider the first-order ordinary differential equation (ODE)
\begin{equation}
\frac{d}{dt} x_t = v_t(x_t),
\label{eq:ODE}
\end{equation}
where $x_t \in \mathbb{R}^d$ for time $t \in [0,T]$, and where $v: [0,T] \times \mathbb{R}^d \rightarrow \mathbb{R}^d$ is a time-dependent vector field that is generally non-linear. This equation can be interpreted as describing the position of a particle moving through $d$-dimensional space with time- and position-dependent velocity $v_t$, so the latter is often called the \textit{velocity field}. For a given initial condition $x_0$, the mapping $x_0 \mapsto x_t$ defined by Eq.~\eqref{eq:ODE} is a flow, which is fully specified by the velocity field $v_t$.

The ODE in \eqref{eq:ODE} is fully deterministic. However, if we sample a random initial condition $x_0 \sim p_0$ from some distribution $p_0$, then the time-evolved vector $x_t \sim p_t$ will also be a random variable with some probability density function (PDF) $p_t$, for all $t \in [0,T]$. This defines a continuous-time Markov process, and the family of marginal distributions $(p_t)_{t \in [0,T]}$ is called the \textit{probability path} generated by $v_t$. By invoking the change of variable formula for PDFs on infinitesimal timestep of Eq.~\eqref{eq:ODE}, one can show that $p_t$ satisfies the \textit{continuity equation}
\begin{equation}
\frac{\partial} {\partial t} p_t(x)
=
- \nabla \cdot \Big[ v_t(x) \, p_t(x) \Big]
\label{eq:continuity_eq}
\end{equation}
with respect to the velocity field $v_t$, for time $t\in [0,T]$. 

Notice that while the ODE in Eq.~\eqref{eq:ODE} and the continuity equation \eqref{eq:continuity_eq} are closely related, they describe the evolution of very different objects. For a given initial condition, Eq.~\eqref{eq:ODE} describes the dynamics of a $d$-dimensional vector, and can be solved numerically using standard ODE solvers, provided $v_t$ is well-behaved and $d$ is not unreasonably large. On the other hand, the PDF $p_t$ assigns a non-negative density $p_t(x)$ to every $x \in \mathbb{R}^d$, so numerically solving the continuity equation involves tracking $p_t(x)$ for all $x$ and $t$, which would require vastly more computational resources. There is no need to do so, however, because the ODE in \eqref{eq:ODE}, with a random initial condition $x_0 \sim p_0$, simulates the continuity equation \eqref{eq:continuity_eq}, in that solving \eqref{eq:ODE} lets us efficiently sample from any $p_t$ in \eqref{eq:continuity_eq} without needing to compute the density $p_t(x)$ explicitly. This is reminiscent of the sense in which a quantum computer can simulate the Schr\"odinger equation without explicitly computing any quantum amplitudes.

Flow models aim to learn a velocity field $v_t$ under which the continuity equation transforms a simple source distribution $p_0$ at $t=0$ into $p_\text{targ}$ at $t=T$. Many probability paths connect these two distributions, and accordingly, many velocity fields are possible.  In practice, a chosen ground truth velocity field $v_t$ is approximated by some parameterized function $v_t^\theta$, usually implemented by a neural network with weights $\theta$. The learned velocity field $v_t^\theta \approx v_t$ then transforms $p_0$ into a distribution $p_T$ approximating $p_\text{targ}$. One can easily sample from this $p_T$ by sampling $x_0 \sim p_0$, and using it as an initial condition to numerically solve the ODE \eqref{eq:ODE}, with $v_t^\theta$ in place of $v_t$. In the simplest case, $p_0$ is taken to be a $d$-dimensional isotropic Gaussian (i.e., with covariance matrix $\propto \! I$), and $p_\text{targ}$ is a distribution from which we are given training samples. While we will mostly focus on this simple case, flow models easily handle more complex sampling problems too, such as conditional sampling.

There are several principled ways to learn a velocity field that transforms $p_0$ into $p_T \approx p_\text{targ}$. For our quantum algorithm, we only care that such a velocity field has been learned, but we are agnostic as to how it was learned. To provide some context, however, we briefly discuss two prominent methods of learning $v_t^\theta$.

\subsubsection{Flow matching}
\label{sec:flow_matching}

Flow models, as presented above, were first introduced under the name \textit{continuous normalizing flows\footnote{The word ``continuous'' here differentiates this technique from earlier ones called \textit{normalizing flows}, which use a finite number of discrete steps instead \cite{papamakarios:2021}.}}, and were trained using a maximum likelihood objective, or equivalently, by minimizing an unbiased estimate of the KL-divergence between $p_T$ and $p_\text{targ}$ \cite{chen2019}. However, this training procedure required repeatedly solving the ODE \eqref{eq:ODE} for different candidate velocity fields, and was therefore impractically slow. Recently, simulation-free training of continuous normalizing flows---the framework learning $v_t^\theta$ without the computational bottleneck of ODE integration---has been developed independently by different groups from different perspectives: \textit{flow matching} \cite{lipman:2023}, \textit{rectified flows} \cite{liu2022flow}, \textit{stochastic interpolants} \cite{albergo2023stochastic}, and \textit{action matching} \cite{neklyudov2023action}. Here, we briefly summarize \textit{flow matching}; the most popular, minimalistic approach which currently offers state-of-the-art performance.

We denote the PDF of a $d$-dimensional isotropic Gaussian distribution as
\begin{equation}
\mathcal{N}(x \, | \,\mu, \sigma^2 I) 
=
(2 \pi \sigma^2)^{-d/2} \exp \left( - \frac{\|x-\mu\|^2}{2 \sigma^2} \right),
\end{equation}
where $\| \cdot \|$ is the Euclidean norm, $\mu \in \mathbb{R}^d$ is the mean, and $\sigma^2 I$ is the covariance for some $\sigma>0$. The simplest version of flow matching uses the source distribution $p_0 = \mathcal{N}( \, \cdot \,   |\,  0, \,  I)$, and by convention, sets $T=1$ (or equivalently, absorbs $T$ into the velocity field). Then, for any $t \in [0,1]$, it defines the random variable 
\begin{equation}
x_t = (1-t) x_0 + t x_\text{targ} \; \sim p_t,
\end{equation}
where $x_0 \sim p_0$ and $x_\text{targ} \sim p_\text{targ}$ are sampled independently from the source and target distributions, respectively. This specifies a probability path. It is simple to show that the PDF of this $x_t$ is a convolution of $p_\text{targ}$ with an appropriately scaled Gaussian, namely:
\begin{equation}
p_t(x) = \int_{\mathbb{R}^d}   \mathcal{N} \Big[ x \, \big|\,  t x', (1-t)^2 I \Big] \, p_\text{targ}(x') dx',
\label{eq:linear_path_marginal}
\end{equation}
which reduces to $p_0$ at $t=0$ and to $p_\text{targ}$ as $t \rightarrow 1$. 

Under mild regularity conditions (see, e.g., \citep{ambrosio2005gradient}), there exists a velocity field $v_t$ such that the density $p_t$ from Eq.~\eqref{eq:linear_path_marginal} satisfies the continuity equation \eqref{eq:continuity_eq}. In principle, if this $v_t$ were known, one could model it with a parametric ansatz $v_t^\theta$ by minimizing the following \textit{flow matching loss} with respect to the parameters $\theta$:
\begin{equation}
\mathcal{L}_{\text{FM}}(\theta)
=
\mathbb{E}_{t \sim \text{unif}[0,1]} \; 
\mathbb{E}_{x_0 \sim p_0} \; 
\mathbb{E}_{x_\text{targ} \sim p_\text{targ}}
\Big \| v_t^\theta \big[ (1-t) x_0 + t x_\text{targ} \big] - v_t \big[(1-t) x_0 + t x_\text{targ}\big]
\Big \|^2\,.
\end{equation}
In other words, one could generate the samples from $p_t$ as the linear interpolation of samples drawn from $p_0$ and $p_\text{targ}$, and then minimize the quadratic error between the modeled velocity field and the ground truth. Of course, in practice, the ground truth velocity field $v_t$ that generates $p_t$ is unknown, which makes this objective impossible to optimize.

The key step that enables efficient training of generative models is the derivation of the \textit{conditional flow matching loss} $\mathcal{L}_\text{CMF}(\theta)$, which is equivalent to $\mathcal{L}_{\text{FM}}(\theta)$, but amenable to optimization. In particular, one can show that
\begin{align}
\nabla_\theta \mathcal{L}_{\text{FM}}(\theta) 
= 
\nabla_\theta ~&\mathcal{L}_{\text{CFM}}(\theta)\,,\\
\text{where}\;\;~&\mathcal{L}_{\text{CFM}}(\theta)
=
\mathbb{E}_{t \sim \text{unif}[0,1]} \; 
\mathbb{E}_{x_0 \sim p_0} \; 
\mathbb{E}_{x_\text{targ} \sim p_\text{targ}}
\Big \| v_t^\theta \big[ (1-t) x_0 + t x_\text{targ} \big] - (x_\text{targ} - x_0)
\Big \|^2\,.
\end{align}
Note that the new objective $\mathcal{L}_{\text{CFM}}(\theta)$ does not depend on the ground truth velocity field $v_t$. Moreover, it allows for an efficient and unbiased Monte Carlo estimate by sampling $x_\text{targ} \sim p_\text{targ}$ from training data, $t \sim \text{unif}[0,1]$, and $x_0 \sim p_0$---all of which is easy---and evaluating the parametric ansatz at their linear interpolation. Crucially, it does not require numerically solving a differential equation. In principle, if we could minimize $\mathcal{L}_{\text{CFM}}$ exactly, we would recover $v_t$ and could therefore generate new samples from $p_\text{targ}$. In practice, the learned parametric ansatz $v_t^\theta \approx v_t$ is used instead of $v_t$ in the ODE \eqref{eq:ODE} that, when solved with a random initial condition $x_0 \sim p_0$, produces a sample $x_T$ from a distribution $p_T \approx p_\text{targ}$.

\subsubsection{Diffusion models as flow models}
\label{sec:diffusion}

Another common way to learn a velocity field $v_t^\theta$ comes from diffusion models \cite{Sohl-Dickstein:2015, ho:2020, song:2021}. Diffusion models are a different type of generative model that also learn continuous-time Markov processes, and that predate flow matching. While they are not exactly flow models, they can easily be mapped to such. 

Specifically, diffusion models are naturally defined in terms of two stochastic differential equations (SDEs). Consider first the \textit{forward SDE} for a random vector $x_t \in \mathbb{R}^d$:
\begin{equation}
dx_t = f_t(x_t) \, dt + g_t \, dw_t
\end{equation}
for time running from $t=0$ to $t=T$, where $f_t$ and $g_t$ are well-behaved functions called drift and diffusion coefficients respectively, that will be specified later, and $w_t$ is a $d$-dimensional Wiener process. We denote the marginal PDF of the state $x_t$ at time $t$ as $p_t$, where $x_0 \sim p_0$ is the the initial condition. Next, consider the \textit{reverse SDE} for a random vector $y_t \in \mathbb{R}^d$:
\begin{equation}
dy_t = \Big[ f_t(y_t) - g_t^2 \, \nabla \ln p_t(y_t) \Big] dt + g_t \, d\bar{w}_t
\label{eq:reverse_SDE}
\end{equation}
with time running backwards from $t=T$ to $t=0$, where $\bar{w}_t$ is a $d$-dimensional Wiener process with time flowing backwards, and $dt$ denotes an infinitesimal negative timestep. If the initial condition for the reverse SDE is $y_T \sim p_T$, then one can show---with considerable mathematical effort---that $y_t \sim p_t$ \cite{ANDERSON1982313}. That is, the time-evolved states $x_t$ and $y_t$, from the forward and reverse SDEs respectively, are identically distributed random variables, for all $t \in [0,T]$. 

Diffusion models use the forward SDE for learning, and the reverse SDE for generating new samples. Specifically, they use $x_0 \sim p_0 = p_\text{targ}$ as the initial condition\footnote{Unfortunately, the literature on diffusion models associates $t=0$ with $p_\text{targ}$ and $t=T$ with a source distribution, which is opposite to the convention used in later flow models, and introduced above.} in the forward equation. They then pick drift and diffusion coefficients $f_t$ and $g_t$ so that $p_t \rightarrow \mathcal{N}( \, \cdot \, |\, 0, \, I)$ quickly as $t$ grows, regardless of $p_\text{targ}$, so that $p_T \approx \mathcal{N}( \, \cdot \, |\, 0, \, I)$ for sufficiently large $T$. A convenient choice, implicitly used in denoising diffusion probabilistic models (DDPM) \cite{ho:2020, song:2021}, is 
\begin{equation}
f_t(x_t) = -x_t \beta_t/2 \quad \text{and} \quad g_t = \sqrt{\beta_t},
\label{eq:DDPM_functions}
\end{equation}
where $\beta_t$ is some affine, increasing function of $t$. Then, the forward SDE is repeatedly solved, to generate samples $x_t \sim p_t$, which are used to estimate the function 
\begin{equation}
s_t(x) = \nabla \ln p_t(x)
\end{equation}
called the \textit{score} of $p_t$, through a well-established statistical process called \textit{score matching} \cite{implicit_SM, conditional_SM}. Perhaps surprisingly, this is much easier than estimating the density function $p_\text{targ}(x)$ directly \cite{NEURIPS2019_3001ef25}. We denote the learned score function as $s_t^\theta(x)$. To generate new samples from a distribution close to $p_\text{targ}$, it suffices to sample $y_T \sim \mathcal{N}( \, \cdot \, |\, 0, \, I)$ and solve the reverse SDE \eqref{eq:reverse_SDE} with $s_t^\theta \approx s_t$ in place of $\nabla \ln p_t$. 

Finally, it is possible to show that the ODE
\begin{equation}
\frac{d}{dt} x_t = f_t(x_t) - \frac{1}{2} g_t^2 \, \nabla  \ln p_t(x_t),
\label{eq:probability_flow_ODE}
\end{equation}
with random initial condition $x_T \sim p_T$, produces a random time-evolved state $x_t \sim p_t$ when solved backwards in time from $t=T$. So once the score function has been learned, one can equivalently generate new samples by solving this ODE---which defines a flow model---rather than the reverse SDE, with $s_t^\theta$ again in place of $\nabla \ln p_t$. Eq.~\eqref{eq:probability_flow_ODE} is known as the \textit{probability flow ODE} \cite{song:2021, DDIM}, although it can be viewed simply as an instance of Eq.~\eqref{eq:ODE} with a particular choice of velocity field expressed in terms of the score function. Note that when $f_t$ is a conservative vector field (i.e., the gradient of some scalar-valued function), as in Eq.~\eqref{eq:DDPM_functions}, the right-hand side of the probability flow ODE~\eqref{eq:probability_flow_ODE} is also conservative.

\section{Mapping the continuity equation to the Schr\"odinger equation}
\label{sec:flow_quantum_mapping}

We show in this section that flow models are closely connected to quantum dynamics. 

\subsection{General Case}

Suppose a probability path $(p_t)_{t\in [0,T]}$ satisfies the continuity equation \eqref{eq:continuity_eq} with respect to a velocity field $v_t$, in $d$-dimensional space. Then it is simple to show that the corresponding qsample, namely, the wavefunction $\Psi_t(x) = \sqrt{p_t(x)}$, satisfies the Schr\"odinger equation:
\begin{align}
i \frac{\partial }{\partial t} \Psi_t(x)
&=  \; \; 
\frac{i}{2 \sqrt{p_t(x)}} \frac{\partial}{\partial t} p_t(x) 
\; \; = \; \;  
-\frac{i}{2 \Psi_t(x)} \; \nabla \cdot \Big[ v_t(x) \Psi_t(x)^2 \Big] \nonumber \\[1ex]
&= 
-\frac{i}{2 \Psi_t(x)} \bigg \{ \Psi_t (x) \; \nabla \cdot \Big[ v_t(x) \Psi_t(x) \Big]
+
\Big[ v_t(x) \Psi_t(x) \Big] \cdot \nabla \Psi_t(x) \bigg \} \label{eq:schrodinger_derivation} \\[1ex]
&= \qquad \qquad \,
\underbrace{
\frac{1}{2} \bigg \{  
(-i \nabla) \cdot \Big[ v_t(x) \Psi_t(x) \Big]
+ v_t(x) \cdot (-i \nabla)\Psi_t(x) \bigg \}}_{\hat{\mathcal{H}}_t \Psi_t(x)} . \nonumber
\end{align}
That is, if $p_t$ obeys the continuity equation with velocity field $v_t$, then $\Psi_t = \sqrt{p_t}$ obeys the Schr\"odinger equation $i \frac{\partial}{\partial t} \Psi_t = \hat{\mathcal{H}}_t \Psi_t$ with Hamiltonian 
\begin{equation}
\hat{\mathcal{H}}_t = \frac{1}{2} 
\Big[ 
\hat{p} \cdot v_t(\hat{x})
+
v_t(\hat{x}) \cdot \hat{p}
\Big],
\label{eq:continuity_Hamiltonian}
\end{equation}
where $\hat{x}=(\hat{x}_1, \dots, \hat{x}_d)$ and $\hat{p} = (\hat{p}_1, \dots, \hat{p}_d)$ are $d$-dimensional position and momentum operators, whose components act as $(\hat{x}_j \Psi_t)(x) = x_j \Psi_t(x)$ and $(\hat{p}_j \Psi_t)(x) = -i \frac{\partial}{\partial x_j} \Psi_t(x)$, respectively, at $x = (x_1, \dots, x_d)$. We therefore refer to $\hat{\mathcal{H}}_t$ from Eq.~\eqref{eq:continuity_Hamiltonian} as the \textit{continuity Hamiltonian} in this context, and the quantum dynamics it generates as a \textit{wavefunction flow} (in analogy to the probability flow of \cite{song:2021}). While the derivation here tacitly assumes $p_t(x) >0$ everywhere, we show in Appendix~\ref{app:continuity_H} that this assumption is not necessary. 

This close correspondence between the continuity equation---which underlies flow models---and the Schr\"odinger equation with Hamiltonian $\hat{\mathcal{H}}_t$, reveals a strikingly natural connection between modern machine learning and the dynamics of quantum systems. In particular, it points to a simple quantum algorithm to prepare a broad family of qsamples, allowing one to probe the learned distribution $p_T$ quantum mechanically.
\begin{description}
\item[1 (Already done in flow models)] Specify a probability path connecting a simple source distribution $p_0$ to a target distribution $p_\text{targ}$, which implicitly defines a velocity field $v_t$ that generates this path. Learn to approximate the ground truth velocity field $v_t$ with a parameterized function $v_t^\theta$, that generates a nearby probability path  $(p_t)_{t\in[0,T]}$ where $p_T \approx p_\text{targ}$. Most commonly, this is done using training samples from $p_\text{targ}$, e.g., as explained in Sections~\ref{sec:flow_matching} and \ref{sec:diffusion}.
\item[2] Prepare the initial qsample $\Psi_0 = \sqrt{p_0}$ on a quantum computer. For instance, if $p_0 = \mathcal{N}(\, \cdot \, | \, 0, \, I)$, then $\Psi_0$ is simply a tensor product of 1-dimensional standard normal qsamples, which are easy to prepare \cite{grover2002creatingsuperpositionscorrespondefficiently, kitaev2009wavefunctionpreparationresamplingusing}. 
\item[3] Rather than query the learned velocity field $v_t^\theta$ with an ODE solver (to solve Eq.~\eqref{eq:ODE}) on a classical computer, query it on a quantum computer to do Hamiltonian simulation. That is, simulate evolution by the continuity Hamiltonian $\hat{\mathcal{H}}_t$, with $v_t^\theta$ in place of $v_t$, for time $t\in [0,T]$.
\end{description}
In principle, these steps provably produce the wavefunction $\Psi_T(x) = \sqrt{p_T(x)}$. The distance between this state and the target qsample $\sqrt{p_\text{targ}(x)}$ depends solely on the distance between the distributions $p_T$ and $p_\text{targ}$, which can be analyzed through purely classical means \cite{chen2023sampling, benton2024error}. In practice, realizing steps 2 and 3 on a digital quantum computer will also incur some discretization error. We show in Section~\ref{sec:qsim-algorithm} that this additional error can efficiently be made arbitrarily small.

\subsection{Special case: conservative velocity fields}

Without loss of generality, one can restrict the family of the velocity fields used by the flow models to conservative fields, i.e., $v_t(x) = \nabla V_t(x)$ for some scalar-valued potential function $V_t$. This is due to a celebrated result for probability measures: Theorem 8.3.1 in \citep{ambrosio2005gradient}. In particular, for every probability path $(p_t)_{t\in [0,T]}$ that represents an absolutely continuous curve in the 2-Wasserstein space, one can define a unique potential function $V_t$ such that the corresponding conservative velocity field satisfies the continuity equation \eqref{eq:continuity_eq}. In Sec.~\ref{sec:diffusion}, we demonstrate the explicit form of this potential for diffusion, which allows for its efficient parameterization \citep{salimans2021should}. Furthermore, one can efficiently learn the corresponding potential function for any type of a process while having access only to uncorrelated samples from the marginal densities \citep{neklyudov2023action}. 

When $v_t$ is conservative, the continuity Hamiltonian $\hat{\mathcal{H}}_t$ can be expressed in a simpler form:
\begin{equation}
\hat{\mathcal{H}}_t^c = i \big [ \hat{K}, \, V_t(\hat{x}) \big],
\label{eq:conservative_continuity_Hamiltonian}
\end{equation}
where $[\, \cdot \, , \cdot \,]$ denotes a commutator, and $\hat{K} = \frac{1}{2} \, \hat{p} \cdot \hat{p}$ can be interpreted as the kinetic energy operator for a particle of unit mass, which acts as $(\hat{K} \Psi_t)(x) = -\frac{1}{2} \nabla^2 \Psi_t(x)$. This can be seen by substituting $v_t = \nabla V_t$ into the last line of Eq.~\eqref{eq:schrodinger_derivation}, and using the vector calculus identity $\nabla^2 V_t \Psi_t  = \Psi_t \nabla^2 V_t + 2 (\nabla V_t) \cdot (\nabla \Psi_t) + V_t \nabla^2 \Psi_t$ to get
\begin{align}
\hat{\mathcal{H}}_t \Psi_t(x)
&= 
-\frac{i}{2} \Big[ 
\nabla \cdot \big[\Psi_t(x) \nabla V_t(x) \big]
+ (\nabla V_t)(x) \cdot (\nabla \Psi_t)(x) 
\Big]  \nonumber \\
&= 
-\frac{i}{2} \Big[ 
\Psi_t(x) \nabla^2 V_t(x) 
+ 2 (\nabla V_t)(x) \cdot (\nabla \Psi_t)(x) \Big] \\ 
&= 
- \frac{i}{2} \Big\{ 
\nabla^2 \big[V_t(x) \Psi_t(x)\big] - V_t(x) \nabla^2 \Psi_t(x)
\Big \}
\qquad \qquad \quad = \;\; 
\hat{\mathcal{H}}_t^c  \, \Psi_t(x). \nonumber
\end{align}

\subsection{Context}

Before analyzing the cost of simulating the continuity Hamiltonian on a digital quantum computer in the next section, we pause to comment on some more fundamental aspects of the results above.

First, a different connection between the Schr\"odinger equation and continuity equation has been known since the early days of quantum mechanics: the Madelung equations \cite{Madelung1927}. Suppose a wavefunction $\Phi_t(x)$ evolves by the  Schr\"odinger equation with the more standard Hamiltonian $\hat{K} + V_t(\hat{x})$. The Madelung equations re-express its dynamics using two real differential equations: a continuity equation for the probability density $|\Phi_t(x)|^2$, and another equation for the phase $\arg[ \Phi_t(x)]$. Crucially, these differential equations are coupled, so the effective velocity field in the first depends on phase, which in turn evolves according to the probability density. What is different about the continuity Hamiltonian $\hat{\mathcal{H}}_t$ defined above is that---by construction---it decouples these two equations, so that the phase stays fixed, while the velocity field can be specified directly. Of course, $\hat{\mathcal{H}}_t$ may therefore not occur in nature, as it does not have the canonical form of kinetic plus potential energy. But it can nevertheless be efficiently simulated on a quantum computer, as we show in the next section.

Second, note that while $V_t(\hat{x})$ in Eq.~\eqref{eq:conservative_continuity_Hamiltonian} can be viewed as a potential energy operator (in that it acts through pointwise multiplication), the PDF $p_t$ is unrelated to the Boltzmann distribution defined by $V_t$. That is because neither flow models, nor their corresponding wavefunction flows, implement a convergent process that asymptotes to some steady state. Crucially, this means that neither suffers from slow thermalization/mixing. Rather, both are fundamentally dynamical. In particular, our quantum algorithm is more reminiscent Grover's algorithm \cite{grover:1996}, in that it rotates a simple quantum state into a complicated one in some finite time $T$, and would overshoot if it were somehow run for longer.

\section{Digital quantum simulation of the continuity Hamiltonian}\label{sec:qsim-algorithm}

In this section, we analyze the space and time complexity of simulating the continuity Hamiltonian on a digital quantum computer. We focus on the important case where the velocity field is conservative ($v_t = \nabla V_t$), although we expect the general case can be handled similarly. Concretely, we consider a probability path $(p_t)_{t\in[0,T]}$ generated by $\nabla V_t$ for a given potential potential function $V_t$. Rather than produce the wavefunction $\sqrt{p_T(x)}$ corresponding to the PDF $p_T$, we aim to prepare the finite-dimensional state
\begin{equation}
\ket{\psi_T} \propto \sum_x \sqrt{p_T(x)} \ket{x}
\label{eq:discrete_qsample}
\end{equation}
to within a specified error $\epsilon$, where the sum is over an appropriate grid of points in $\mathbb{R}^d$. We will do so using only standard quantum operations and quantum oracle access to $V_t$, rather than to potentially complicated integrals thereof (which are not learned by flow models), as in many first-quantized simulation methods based on Galerkin discretization \cite{su:2021}.

Digitally simulating the dynamics generated by $\hat{\mathcal{H}}_t^c$ involves three main challenges: first, we must discretize space and bound the resulting error. Second, the ensuing finite-dimensional Hamiltonian depends on time, so we must bound the error from approximating it with a piecewise-continuous (in $t$) Hamiltonian. And third, this last Hamiltonian involves commutators of simple terms, rather than sums thereof, so several well-known techniques (e.g., moving to the interaction picture of the discretized kinetic energy operator) do not apply directly. The first point, presented in Sec.~\ref{sec:discretizing_space}, is our main technical contribution of this section, since $\hat{\mathcal{H}}_t^c$ does not have the widely-analyzed form of kinetic plus potential energy. For the second and third points, presented in Sec.~\ref{sec:discretizing_time}, we have adapted existing techniques to bound the time complexity.

We begin by summarizing the results. Considering a $d$-dimensional cube of edge length $L$, we show how prepare the discrete qsample $\ket{\psi_T}$ from Eq.~\eqref{eq:discrete_qsample} to within an error tolerance $\epsilon \in (0, 2T]$ using a simple product formula with
\begin{equation}
r=O \left( \frac{d^{\,6} \, T^{2+2/s}}{\epsilon^{1+2/s}} \right)
\end{equation}
timesteps and 
\begin{equation}
n = O
\bigg( d \log \Big[ Ld \, \big ( T/\epsilon \big)^{1/s} \Big] \bigg)
\end{equation}
qubits, plus additional ancillas to compute/uncompute $V_t$, where $s = \Omega(d)$ is an adjustable parameter related to the smoothness of $p_t$ and $V_t$. (Larger $s$ gives better scaling, but requires smoother functions and could yield larger multiplicative factors.) We require only mild regularity conditions concerning the smoothness of these functions, which are often satisfied by construction in flow models. The big-$O$ notation here suppresses multiplicative factors that control the norms/derivatives of $p_t$ and $V_t$, which are unavoidable in the analysis of such differential equations. Exact lower bounds on $r$ and $n$ are given in Eqs.~\eqref{eq:dlogN_bound} and \eqref{eq:r_bound} respectively. 

\subsection{Discretizing space}
\label{sec:discretizing_space}

We will simulate the dynamics of $\hat{\mathcal{H}}_t^c$ on a $d$-dimensional torus $\mathbb{T}^d$, defined as the cube $[0,L]^d$ with opposite sides identified, for some edge length $L>0$. This simulation domain has two important features: it is bounded, and it has periodic boundaries. The first feature is already present (implicitly) in classical flow models, to avoid overflow, and its impact can be made negligible by choosing a sufficiently large $L$. The second feature seems more uniquely quantum, and ensures that the Laplace operator $\nabla^2$ can be discretized in the same way everywhere, so we do not have to treat a boundary separately. As with the first feature, we can handle distributions that are not inherently periodic by choosing $L$ to be large enough that most of their probability mass is contained in $[0,L]^d$, then approximating them by their corresponding wrapped distribution.

To discretize $\mathbb{T}^d$, we first define the 1-dimensional grid
\begin{equation}
\mathbb{X}_N = \frac{L}{N} \, \Big \{ 0, 1, \dots, N-1 \Big\} \; \subset \mathbb{T}^1,
\label{eq:X_def}
\end{equation}

where $N\ge 1$ is some integer power of $2$ that we will choose in order to control the discretization error. Then the $d$-dimensional grid $\mathbb{X}_N^d \subset \mathbb{T}^d$ comprises $N^d$ points covering $\mathbb{T}^d$ uniformly, with spacing $L/N$ along each spatial dimension. We will seek to prepare the $n=d\log_2(N)
$ qubit state
\begin{equation}
\ket{\psi_t} \propto \sum_{x\in \mathbb{X}_N^d} \sqrt{p_t(x)} \ket{x}
\; \in \mathbb{C}^{N^d},
\label{eq:Psi_t}
\end{equation}
for $t=T$, with $\|\ket{\psi_t}\|=1$ under the Euclidean norm. Since the grid points $x=(x_1, \dots, x_d) \in \mathbb{X}_N^d$ generally have non-integer coordinates, we use the shorthand notation
\begin{equation}
\ket{x} = \ket{ x_1  \,N / L}  \otimes \ket{ x_2  \,N / L} \otimes \cdots \otimes \ket{ x_d  \,N / L},
\label{eq:|x>_notation}
\end{equation}
with integers $x_j N/L \in \{0, \dots, N-1\}$ indexing one-hot vectors $\ket{ x_j  \,N / L} \in \mathbb{C}^N$ in the usual way.

In this section, we define a finite-dimensional Hamiltonian whose dynamics generate a state arbitrarily close to $\ket{\psi_t}$. We used a Fourier pseudo-spectral discretization \cite{lubich2008quantum}, inspired by Ref.~\cite{Childs2022}, to construct it, and derive a bound on the resulting spatial discretization error. Here, however, we simply define said Hamiltonian and state the error bound, with little reference to the underlying formalism, and relegate the lengthy proofs to Appendix~\ref{app:discretizing_space}. 

To prepare $\ket{\psi_t}$, we define the finite-dimensional \textit{discrete continuity Hamiltonian}: 
\begin{equation}
H_t^c = i[K, D_{V_t}],
\label{eq:discrete_H}
\end{equation}
where the $N^d \times N^d$-dimensional matrices $K$ and $D_{V_t}$ are pseudospectral discretizations of the kinetic and potential energy operators, $\hat{K}$ and $V_t(\hat{x})$ respectively, from the conservative form of the continuity Hamiltonian, $\hat{\mathcal{H}}_t^c$, in Eq.~\eqref{eq:conservative_continuity_Hamiltonian}. Specifically, we define the diagonal matrix
\begin{equation}
D_{V_t} = \sum_{x \in \mathbb{X}_N^d} V_t(x) \ket{x} \! \bra{x}
\label{eq:D_V_t}
\end{equation}
which enacts pointwise multiplication by $V_t$ on grid points. Then to construct $K$, we define an $N \times N$ matrix that is a discretization of $-\frac{1}{2} \frac{\partial^2}{\partial x_j^2}$ on the $\log_2(N)$-qubit register describing the $j^\text{th}$ spatial coordinate, and combine $d$ copies of it using a Kronecker sum. To that end, let 
\begin{equation}
F = \frac{1}{\sqrt{N}} \sum_{j,\ell=0}^{N-1} \exp \left( \frac{i 2\pi j \ell}{N} \right) \ket{j}\!\bra{\ell}
\label{eq:F}
\end{equation}
be the $N\times N$ quantum Fourier transform matrix, and define the diagonal matrices
\begin{equation}
S = \sum_{j=0}^{N-1} (-1)^j \ket{j} \! \bra{j}
\label{eq:S}
\end{equation}
and
\begin{equation}
D_K = \left( \frac{2 \pi}{L} \right)^2 \; \sum_{j=0}^{N-1} \left(j -\frac{N}{2} \right)^2 \ket{j} \! \bra{j}
\label{eq:D_K}
\end{equation}
of the same size, following Refs.~\cite{Childs2021highprecision, Childs2022}. Note that $F$ and $S$ are both unitary, while $D_K$ is Hermitian. Finally, define
\begin{equation}
K = \frac{1}{2} \Big( S F D_K F^\dag S^\dag \Big)^{\oplus \, d},
\label{eq:K_def}
\end{equation}
where $M^{\oplus d} = (M \otimes I^{\otimes d-1}) + (I \otimes M \otimes I^{\otimes d-2}) + \cdots + (I^{\otimes d-1} \otimes M)$ denotes the $d$-fold Kronecker sum of a matrix $M$. One way to interpret $K$ is as a finite-difference stencil that involves not just the nearest or next-nearest neighbors of a grid point $x \in \mathbb{X}_N^d$, but rather, all grid points to appropriate degrees. Crucially, it can easily be diagonalized using quantum Fourier transforms.

Now denote as $\ket{\phi_t} \in \mathbb{C}^{N^d}$ the solution to the finite-dimensional Schr\"odinger equation with Hamiltonian $H_t^c$,
\begin{equation}
i \frac{d}{dt} \ket{\phi_t} = H_t^c \ket{\phi_t},
\end{equation}
and with an initial condition $\ket{\phi_0}$ meant to approximate the ideal initial state $\ket{\psi_0}$ defined by Eq.~\eqref{eq:Psi_t}. Then $\ket{\phi_T}$ can be made arbitrarily close to the target state $\ket{\psi_T}$ by choosing a sufficiently large $N$, as described by Theorem~\ref{thm:spatial_disc_error}. To state this error bound, we must first introduce some definitions.

\textbf{Definitions.} Define $\|f\|_{\mathsf{L}^2} = \sqrt{\int_{\mathbb{T}^d} |f(x)|^2 dx}$ and $\|f\|_{\mathsf{L}^\infty} = \max_{x\in \mathbb{T}^d} |f(x)|$; let $\sqrt{f \,}$ denote the pointwise square root of a function $f$, and likewise, let $fg$ denote its pointwise product with another function $g$; and finally, let $\nabla^{2(s+1)}f$ denote the Laplacian $\nabla^2$ applied $s+1$ times to $f$. Next, define $\partial_x^\kappa f = \frac{\partial^{\kappa_1}}{\partial x_1^{\kappa_1}} \cdots \frac{\partial^{\kappa_d}}{\partial x_d^{\kappa_d}} f$, where the multi-index $\kappa = (\kappa_1, \dots, \kappa_d)$ is a tuple of non-negative integers. Let $\mathsf{C}^k$ denote the space of functions $f:\mathbb{T}^d \rightarrow \mathbb{C}$ for which $\partial_x^\kappa f$ is continuous for all $\|\kappa \|_1 \le k$ (including at the periodic boundaries).

\begin{theorem}[Bound on spatial discretization error]
Let $\sqrt{p_t}, \; V_t \in \mathsf{C}^{2(s+1)}$ for all $t\in [0,T]$, for some integer $s \ge (d+7)/4$. Moreover, for any $x \in \mathbb{T}^d$, let $V_t(x), \; \nabla^2 \sqrt{p_t(x)}$ and $\nabla^2 [V_t(x) \sqrt{p_t(x)}]$ be continuous in $t$ for all $t \in [0,T]$. Define
\begin{equation}
\delta = 
\big \| \ket{\psi_0} - \ket{\phi_0}\big \|
+ 
T c_s \left( \frac{Ld}{N} \right)^{2s},
\label{eq:delta}
\end{equation}
where $\| \cdot \|$ denotes the Euclidean norm, and
\begin{equation}
c_s = 3 \max_{t\in [0,T]}\bigg[ \Big( \big \| V_t \big \|_{\mathsf{L}^\infty} +1 \Big) \, \big \|\nabla^{2(s+1)}  \sqrt{p_t} \big \|_{\mathsf{L}^2}
+
\big \|\nabla^{2(s+1)} (V_t \, \sqrt{p_t}) \big \|_{\mathsf{L}^2} \bigg] + 1.
\end{equation}
If $\delta \le T$, then $\big \| \ket{\psi_T} - \ket{\phi_T}\big \| \le \delta$. 
\label{thm:spatial_disc_error}
\end{theorem}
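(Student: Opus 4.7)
The plan is to compare the discrete solution $\ket{\phi_t}$ to an appropriately sampled version of the continuous wavefunction $\sqrt{p_t}$, and to bound the difference via Duhamel's principle using a local commutator-discretization error. I would first introduce a sampling map $\mathcal{S}_N$ taking functions $f : \mathbb{T}^d \to \mathbb{C}$ to the vector $(L/N)^{d/2}\sum_{x \in \mathbb{X}_N^d} f(x)\ket{x} \in \mathbb{C}^{N^d}$; this map is an isometry on the band-limited trigonometric polynomials of degree at most $N/2$ in each coordinate. By standard Fourier truncation estimates for $\sqrt{p_t} \in \mathsf{C}^{2(s+1)}$, both $\bigl|\,\|\mathcal{S}_N \sqrt{p_t}\| - 1\bigr|$ and $\bigl\|\,\mathcal{S}_N\sqrt{p_t}/\|\mathcal{S}_N\sqrt{p_t}\| - \ket{\psi_t}\bigr\|$ are $O\bigl((Ld/N)^{2s}\|\nabla^{2(s+1)}\sqrt{p_t}\|_{\mathsf{L}^2}\bigr)$, so it suffices to bound $\|\mathcal{S}_N \sqrt{p_T} - \ket{\phi_T}\|$ and absorb the normalization correction into the $+1$ appearing in $c_s$.

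Second, I would establish the key local estimate: for each $t \in [0,T]$,
\begin{equation*}
\Bigl\|H_t^c\,\mathcal{S}_N\sqrt{p_t} - \mathcal{S}_N\bigl(\hat{\mathcal{H}}_t^c\sqrt{p_t}\bigr)\Bigr\| \;\lesssim\; (Ld/N)^{2s}\Bigl[\bigl(\|V_t\|_{\mathsf{L}^\infty}+1\bigr)\bigl\|\nabla^{2(s+1)}\sqrt{p_t}\bigr\|_{\mathsf{L}^2} + \bigl\|\nabla^{2(s+1)}(V_t\sqrt{p_t})\bigr\|_{\mathsf{L}^2}\Bigr].
\end{equation*}
Expanding $H_t^c = i[K,D_{V_t}]$ splits this into two terms mirroring the two components of $c_s$. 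For the $D_{V_t}K$ piece, applying $K$ to $\mathcal{S}_N\sqrt{p_t}$ reproduces $\mathcal{S}_N \hat{K}\sqrt{p_t}$ up to a Fourier aliasing/truncation error controlled by $(Ld/N)^{2s}\|\nabla^{2(s+1)}\sqrt{p_t}\|_{\mathsf{L}^2}$ (the factor of $d$ coming from the Kronecker-sum structure of $K$ in Eq.~\eqref{eq:K_def}, which accumulates across coordinates), and subsequent multiplication by $D_{V_t}$ contributes $\|V_t\|_{\mathsf{L}^\infty}$. For the $KD_{V_t}$ piece, the product $V_t\sqrt{p_t}$ itself lies in $\mathsf{C}^{2(s+1)}$ by the Leibniz rule, so $K\mathcal{S}_N(V_t\sqrt{p_t})$ approximates $\mathcal{S}_N\hat{K}(V_t\sqrt{p_t})$ with error controlled by $\|\nabla^{2(s+1)}(V_t\sqrt{p_t})\|_{\mathsf{L}^2}$; the step of replacing $D_{V_t}\mathcal{S}_N\sqrt{p_t}$ by $\mathcal{S}_N(V_t\sqrt{p_t})$ is exact pointwise. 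The lower bound $s \ge (d+7)/4$ is the threshold at which a Sobolev embedding makes these $\mathsf{L}^2$-norms dominate $\mathsf{L}^\infty$-type residuals that arise when handling aliasing cleanly.

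Third, I would invoke Duhamel. By Eq.~\eqref{eq:schrodinger_derivation}, $i\partial_t \sqrt{p_t} = \hat{\mathcal{H}}_t^c\sqrt{p_t}$, and $\ket{\phi_t}$ satisfies $i\partial_t\ket{\phi_t} = H_t^c\ket{\phi_t}$, so for the unitary propagator $U_{t,s}$ generated by $H_t^c$,
\begin{equation*}
\mathcal{S}_N\sqrt{p_T} - \ket{\phi_T} = U_{T,0}\bigl(\mathcal{S}_N\sqrt{p_0} - \ket{\phi_0}\bigr) + i\int_0^T U_{T,t}\Bigl(H_t^c\,\mathcal{S}_N\sqrt{p_t} - \mathcal{S}_N\hat{\mathcal{H}}_t^c\sqrt{p_t}\Bigr)\,dt.
\end{equation*}
Unitarity of $U_{T,t}$ combined with the local estimate from step two bounds the integral by $T c_s (Ld/N)^{2s}$ (with the $+1$ in $c_s$ absorbing the normalization-correction and sampling-vs-norm discrepancies). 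Adding the initial error $\|\ket{\psi_0}-\ket{\phi_0}\|$ then yields $\|\ket{\psi_T} - \ket{\phi_T}\| \le \delta$. The hypothesis $\delta \le T$ enters only when converting bounds on the unnormalized sampled state $\mathcal{S}_N\sqrt{p_T}$ into bounds on the normalized $\ket{\psi_T}$ via Taylor expansion of $1/\|\mathcal{S}_N\sqrt{p_T}\|$, which requires the raw sampling error to stay well below the unit scale.

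The main obstacle will be the local commutator estimate in the second step. The pseudo-spectral framework makes $K$ clean in Fourier space, but the pointwise-multiply-then-sample operator $D_{V_t}$ mixes high-frequency content through aliasing, and separating the genuine discretization error (controlled by $\nabla^{2(s+1)}$ of $V_t\sqrt{p_t}$) from spurious Nyquist artefacts demands careful bookkeeping. The appearance in $c_s$ of the smoothness of the product $V_t\sqrt{p_t}$ rather than of each factor separately is a strong hint that the proof must estimate $K$ applied to $\mathcal{S}_N$ of a single smooth function at a time, handling $V_t\sqrt{p_t}$ as one monolithic object to avoid losing control through repeated pointwise multiplications.
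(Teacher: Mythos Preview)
Your proposal is correct and follows essentially the same architecture as the paper's proof: sample the continuous wavefunction onto the grid (your $\mathcal{S}_N$ coincides with the paper's evaluation map composed with the oblique projector $\mathcal{P}$ onto bandlimited functions), bound the defect $\bigl\|H_t^c\,\mathcal{S}_N\sqrt{p_t} - \mathcal{S}_N\hat{\mathcal{H}}_t^c\sqrt{p_t}\bigr\|$ via Fourier aliasing/truncation estimates organized as commutators $[\mathcal{P},\nabla^2]$, and then propagate through Duhamel (the paper's vector-ODE lemma). Two minor corrections worth knowing before you execute: the condition $s \ge (d+7)/4$ arises in the paper not from a Sobolev embedding but from a uniform lattice-sum bound $\sum_{x\in\mathbb{Z}^d\setminus\{0\}}\|x+y\|^{-4s}$ requiring $4s \ge d+2\pi$, and the hypothesis $\delta \le T$ is used earlier than you suggest---it bounds the normalization factor $a_0 \le 3/2$ that multiplies the defect $b_t$, not just the final Taylor correction.
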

\begin{proof}
See Appendix~\ref{app:discretizing_space}.
\end{proof}
In other words, $\delta$ from Eq.~\eqref{eq:delta} is an upper bound on the spatial discretization error, provided it is small enough. (The large error regime is harder to control and would need to be treated separately.) We can make $\delta$ small by having a small state preparation error and using a sufficiently large $N$. Notice that Theorem~\ref{thm:spatial_disc_error} gives not just one bound, but a family of bounds---one for each value of $s$---from which we are free to choose the best one. A larger $s$ gives better scaling with $N$, but requires smoother functions, and could cause $c_s$ to be larger. Also, the fact that we are quantizing flow models makes this an unusual Hamiltonian simulation problem, in that we care about the final PDF $p_T$, but we are free to choose a convenient probability path $(p_t)_{t\in[0,T]}$ that produces it. In particular, one could choose a path designed to satisfy the conditions of Theorem~\ref{thm:spatial_disc_error}, or minimize $c_s$. The probability paths in Sections~\ref{sec:flow_matching} and \ref{sec:diffusion}, for instance, are infinitely differentiable in $x$ and $t$ by construction.

In order to digitally simulate evolution by $H_t^c$, we will also need to discretize time, which is the focus of Sec.~\ref{sec:discretizing_time}. For now, we note simply that to prepare $\ket{\psi_T}$ within an error tolerance $\epsilon$, it suffices to demand that the errors from discretizing space and time both be at most $\epsilon/2$. For simplicity, suppose we can prepare the initial state perfectly ($\ket{\psi_0} = \ket{\phi_0}$), and that $T/\epsilon \ge 1/2$. Then under the conditions of Theorem~\ref{thm:spatial_disc_error}, we can ensure $\| \ket{\psi_T} - \ket{\phi_T} \| \le \epsilon/2$ by picking $N$ to be an integer power of 2 such that
\begin{equation}
N \ge L d \left( \frac{2 T c_s}{\epsilon} \right)^{1/2s},
\label{eq:N_bound}
\end{equation}
where $s \ge (d+7)/4$ is an integer of our choice. Recall that $N$ is the number of grid points along each dimension, so the number of qubits needed to represent $\ket{\psi_T}$ to the desired accuracy is
\begin{align}
n = d \log_2 (N) 
\ge 
d \log_2 \left[ Ld \left( \frac{2 T c_s}{\epsilon} \right)^{1/2s} \right].
\label{eq:dlogN_bound}
\end{align}
Observe that the number of qubits $n$ here is not a fixed quantity describing the size of the problem, but rather, an adjustable parameter that controls the spatial discretization error. Inevitably, simulating longer times $T$ requires a finer grid and therefore more qubits, since no finite-dimensional Hamiltonian $H_t^c$ can fully capture the original one $\hat{\mathcal{H}}_t^c$, and the resulting discrepancy between $\ket{\psi_t}$ and $\ket{\phi_t}$ grows with time. Note also that Eqs.~\eqref{eq:N_bound} and \eqref{eq:dlogN_bound} only describe the size of the main register in which we prepare $\ket{\psi_T}$. In order to simulate evolution by $H_t^c$, we will also need an ancilla register in which to repeatedly compute and uncompute $V_t$. We do not explicitly count these ancillas here, although we assume that enough are used that rounding errors are negligible.

\subsection{Discretizing time}
\label{sec:discretizing_time}

Define the unitary
\begin{equation}
U(t_1, t_0) = \mathcal{T} \exp \left( -i \int_{t_0}^{t_1} H_t^c \, dt \right)
\label{eq:U_time_ordered}
\end{equation}
describing evolution by the discrete continuity Hamiltonian $H_t^c$ from time $t_0$ to $t_1$, for some $0 \le t_0 \le t_1 \le T$. While we generally cannot prepare $\ket{\phi_T} = U(T,0)\ket{\phi_0}$ exactly, we can efficiently prepare a state that is within some error tolerance $\epsilon/2$ of it in Euclidean distance, and therefore within $\epsilon$ of $\ket{\psi_T}$ from Eq.~\eqref{eq:Psi_t}. We will do this using a product formula (PF) \cite{lloyd:1996}, which is the simplest possible approach. While PFs give time complexity with slightly sub-optimal scaling \cite{childs:trotter_error_comm}, they sometimes outperform more sophisticated approaches that incur larger overheads \cite{childs:PNAS}. More importantly here, PFs are conceptually straightforward to implement given a potential function learned by a classical flow model, and require no complex oracles. We therefore leave ``post-Trotter'' implementations as a subject for future work.

While several choices are possible \cite{bagherimehrab2025,chen:2022PF}, we approximate $U(t_1, t_0)$ by the simple 8-step product formula from Ref.~\cite{childs_wiebe_comm}, which concatenates two group commutators with opposite signs to achieve an $O(\Delta t^2)$ error, where $\Delta t=t_1-t_0$. Specifically, we approximate $U(t_1, t_0) \approx W(t_0)$, where
\begin{equation}
W(t) = 
e^{i \beta D_{V_t}} \;
e^{i \alpha K} \;
e^{-i \beta D_{V_t}} \;
e^{-i \alpha K} \;
e^{-i \beta D_{V_t}} \;
e^{-i \alpha K} \;
e^{i \beta D_{V_t}} \;
e^{i \alpha K}
\label{eq:PF}
\end{equation}
for Hermitian matrices $D_{V_t}$ and $K$ defined in Eq.~\eqref{eq:D_V_t} and Eqs.~\eqref{eq:F}--\eqref{eq:K_def} respectively, with angles
\begin{equation}
\alpha = \frac{L}{\pi N} \sqrt{\frac{\Delta t}{d}}
\qquad \qquad 
\beta = \frac{\pi N}{2L} \sqrt{d \, \Delta t},
\label{eq:PF_angles}
\end{equation}
where the argument in $W(t)$ specifies the time parameter in $D_{V_t}$. The approximation error is bounded in Theorem~\ref{thm:time_disc_error}.

\begin{theorem}[Bound on local time discretization error] 
Let $V_t(x)$ be continuously differentiable in $t$ for all $x\in \mathbb{T}^d$ and $t\in[0,T]$, then
\begin{equation} 
\Big \| U(t_1, t_0) - W(t_0) \Big \|
\le 
\left[ 
\frac{3 \pi^4}{4} \, \frac{d^2 N^4}{L^4} \Big(1+ \|V_{t_0}\|_{\mathsf{L}^\infty}  \Big)^4 
+ \frac{\pi^2}{2} \frac{d N^2}{L^2} \max_{t\in[t_0, t_1]} \Big\| \frac{\partial}{\partial t} V_t \Big \|_{\mathsf{L}^\infty}
\right] \Delta t^2,
\end{equation}
where $\Delta t = t_1-t_0$ and $\|\cdot \|$ denotes the spectral norm.
\label{thm:time_disc_error}
\end{theorem}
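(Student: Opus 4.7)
The plan is to split the error via the triangle inequality, inserting the \textit{time-frozen} propagator $e^{-iH_{t_0}^c \Delta t}$ (the one generated by pretending $V_t \equiv V_{t_0}$ throughout):
\begin{equation*}
\bigl\| U(t_1,t_0) - W(t_0) \bigr\| \le \underbrace{\bigl\| U(t_1,t_0) - e^{-iH_{t_0}^c \Delta t}\bigr\|}_{\text{time-freezing error}} + \underbrace{\bigl\| e^{-iH_{t_0}^c \Delta t} - W(t_0)\bigr\|}_{\text{product-formula error}}.
\end{equation*}
These two pieces will produce, respectively, the second and first terms of the stated bound. The single quantitative input I need throughout is the spectral estimate $\|K\| \le d\pi^2 N^2/(2L^2)$, which follows from $\|D_K\| = (\pi N/L)^2$ (read off Eq.~\eqref{eq:D_K}) together with $K = \tfrac{1}{2}(SFD_KF^\dagger S^\dagger)^{\oplus d}$ and the fact that Kronecker sums add spectra.

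For the time-freezing error I apply Duhamel's formula to the interpolant $G(s) = U(t_1,s)\,e^{-iH_{t_0}^c(s-t_0)}$, whose endpoints are $U(t_1,t_0)$ at $s=t_0$ and $e^{-iH_{t_0}^c \Delta t}$ at $s=t_1$. Differentiating with $\partial_s U(t_1,s) = iU(t_1,s)H_s^c$ and using unitarity of the flanking factors gives
\begin{equation*}
\bigl\|U(t_1,t_0) - e^{-iH_{t_0}^c \Delta t}\bigr\| \le \int_{t_0}^{t_1} \|H_s^c - H_{t_0}^c\|\, ds.
\end{equation*}
Since $H_t^c = i[K,D_{V_t}]$ is linear in $V_t$, the standard commutator inequality $\|[A,B]\|\le 2\|A\|\|B\|$ together with $\|D_f\| = \|f\|_{\mathsf{L}^\infty}$ and the mean-value bound gives $\|H_s^c - H_{t_0}^c\| \le 2(s-t_0)\|K\|\max_{t\in[t_0,t_1]}\|\partial_t V_t\|_{\mathsf{L}^\infty}$. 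The elementary integral $\int_{t_0}^{t_1}(s-t_0)\,ds = \tfrac{1}{2}\Delta t^2$ combined with the spectral bound on $\|K\|$ reproduces the second term of the theorem exactly, with the matching constant $\pi^2/2$.

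For the product-formula error I set $X = i\beta D_{V_{t_0}}$ and $Y = i\alpha K$. The calibration $\alpha\beta = \Delta t/2$ (verified from Eq.~\eqref{eq:PF_angles}) implies $2[X,Y] = \Delta t [K, D_{V_{t_0}}] = -iH_{t_0}^c \Delta t$, so the bridge propagator is literally $e^{2[X,Y]}$. Writing $W(t_0) = V\cdot V'$ with $V = e^X e^Y e^{-X} e^{-Y}$ and $V' = e^{-X} e^{-Y} e^X e^Y$, the Baker--Campbell--Hausdorff expansion gives $\log V = [X,Y] + R_3(X,Y) + R_4(X,Y) + \cdots$ and $\log V' = [X,Y] - R_3(X,Y) + R_4(X,Y) + \cdots$, where the weight-three nested commutators (e.g.\ proportional to $[X,[X,Y]] + [Y,[X,Y]]$) flip sign under $X,Y \mapsto -X,-Y$ while weight-four terms do not. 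Applying BCH once more to $V\cdot V'$ cancels the weight-three contributions, and because $[[X,Y],[X,Y]] = 0$ the leading term of $[\log V,\log V']$ is already of weight five, leaving $W(t_0) = \exp(2[X,Y] + R)$ with $\|R\|$ bounded by an absolute constant times $(\|X\|+\|Y\|)^4$. Since $W(t_0)$ is unitary and $2[X,Y]$ is anti-Hermitian, $R$ is also anti-Hermitian, so the Duhamel bound $\|e^{A+R} - e^A\| \le \|R\|$ applies. Substituting
\begin{equation*}
\|X\| \le \tfrac{\pi N}{2L}\sqrt{d\Delta t}\,\|V_{t_0}\|_{\mathsf{L}^\infty}, \qquad \|Y\| \le \tfrac{\pi N}{2L}\sqrt{d\Delta t}
\end{equation*}
gives $(\|X\|+\|Y\|)^4 \le \tfrac{\pi^4 d^2 N^4}{16 L^4}(1+\|V_{t_0}\|_{\mathsf{L}^\infty})^4 \Delta t^2$, which reproduces the first term of the theorem once the combinatorial constant in $\|R\|$ is tracked (the prefactor $3\pi^4/4$ requires this constant to be at most $12$).

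The main obstacle is that last bookkeeping step: extracting a sharp enough bound on the quartic remainder $R$ from the double BCH expansion. The cleanest route is to expand each of the eight factors of $W(t_0)$ using an iterated integral-remainder form of $e^{\pm X}$ and $e^{\pm Y}$, then enumerate the fourth-order monomials in $X,Y$ that survive the sign cancellation from concatenating $V$ and $V'$. By comparison, the time-freezing half is a one-line Duhamel application once $\|K\|$ is in hand, and the identification $e^{-iH_{t_0}^c \Delta t} = e^{2[X,Y]}$ is immediate from the calibration of $\alpha\beta$.
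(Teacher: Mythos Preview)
Your decomposition and the time-freezing half are exactly what the paper does (its Lemma~C1 is your Duhamel estimate, packaged as $\tfrac{\Delta t^2}{2}\max_t\|H'(t)\|$ and then bounded via $\|[K,\partial_t D_{V_t}]\|\le 2\|K\|\,\|\partial_t V_t\|_{\mathsf{L}^\infty}$), and your rescaling of $K$ and $D_{V_{t_0}}$ by $\gamma=\|K\|^{1/2}$ is also precisely the paper's choice.

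The one methodological difference is in the product-formula half. The paper (Lemma~C2) does \emph{not} go through BCH at all: it Taylor-expands $F(\tau)=S(\tau/\sqrt{2})S(-\tau/\sqrt{2})$ directly about $\tau=0$ with an integral remainder of order~4, verifies $F(\tau)=I-i\tau^2 H+\text{(integral)}$, and bounds $\|F^{(4)}(s)\|$ by the general Leibniz rule plus the multinomial theorem to get the clean constant $\tfrac{8}{3}(\|A\|+\|B\|)^4+\tfrac{1}{2}\|H\|^2\le 3(\|A\|+\|B\|)^4$. This sidesteps the convergence and remainder-tracking issues of the double BCH expansion that you correctly flag as the main obstacle. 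In fact, the ``cleanest route'' you propose at the end---expanding each of the eight factors with iterated integral remainders and collecting fourth-order terms---is essentially the paper's argument; the paper just organizes it as a single fourth-order Taylor remainder for the product rather than factor by factor. So your proposal is correct, and where it differs from the paper it already points to the paper's method as the preferable one.
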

\begin{proof}
See Appendix~\ref{app:discretizing_time}.
\end{proof}

Notice that the upper bound in Theorem~\ref{thm:time_disc_error} grows with $N$, since a larger $N$ means including higher spatial frequencies in the simulation, so $\|K\| = O( N^2 d/L^2)$, and in turn $\|H_t^c\|$, grows accordingly. This is expected, since $K$ comes from discretizing $-\frac{1}{2} \nabla^2$, an unbounded operator. We therefore want the smallest possible $N$ here, while keeping the spatial discretization error below $\epsilon/2$. That entails choosing $N$ to be the smallest integer power of $2$ satisfying Eq.~\eqref{eq:N_bound}, meaning:
\begin{equation}
L d \left( \frac{2 T c_s}{\epsilon} \right)^{1/2s} \le \; \;  N \; \; \le \; \; 2 L d \left( \frac{2 T c_s}{\epsilon} \right)^{1/2s}.
\end{equation}
Finally, we can divide the interval $[0,T]$ into $r$ timesteps of duration $\Delta t= T/r$ and, by the triangle inequality, approximate $U(T,0) = U(T, T-\Delta t) \cdots U(\Delta t, 0)$ with $W_\text{tot} = W[(r-1)\Delta t]\cdots W(\Delta t)\, W(0)$, incurring a global error bounded by the sum local errors given by Theorem~\ref{thm:time_disc_error} \cite{PhysRevLett.128.210501}. To ensure $\|U(T,0)-W_\text{tot}\|\le \epsilon/2$, and in turn that $ \| \ket{\psi_T} - W_\text{tot}\ket{\psi_0}  \| \le \epsilon$, it therefore suffices to use
\begin{equation}
r \ge 
4 \pi^2 \left[ 3 \pi^2 (1+V_\text{max})^4\; d^{\,6} \left( \frac{2 T c_s}{\epsilon} \right)^{2/s} 
+
\dot{V}_\text{max} \; d^{\,3} \left( \frac{2 T c_s}{\epsilon} \right)^{1/s}
\right] \frac{T^2}{\epsilon}
\label{eq:r_bound}
\end{equation}
timesteps, where $V_\text{max} = \max_{t\in[0,T]} \|V_t\|_{\mathsf{L}^\infty}$ and $\dot{V}_\text{max} = \max_{t\in[0,T]} \|\frac{\partial}{\partial t} V_t\|_{\mathsf{L}^\infty}$.

Note that $W_\text{tot}$ involves only exponentials of $K$ and $D_{V_t}$ which are simple to implement. The latter can be implemented by computing and uncomputing $V_t$ on an ancilla register, with single-qubit rotations in between \cite{zalka1998, kassal:2008, Childs2022}. Similarly, for any real $\phi$,
\begin{equation}
e^{i \phi K} = \Big( SF \;  e^{i \phi  D_K /2} \; F^\dag S^\dag \Big) ^{\otimes d},
\end{equation}
since $K$ is defined through a Kronecker sum over $d$ registers, and $S$ and $F$ are unitary. Therefore, implementing $e^{i \phi K}$ involves single-qubit rotations (for $S$ and $S^\dag$), a quantum Fourier transform and its inverse ($F$ and $F^\dag$), and realizing $e^{i \phi D_K/2}$ as described above on each $\log_2(N)$-qubit register encoding a spatial dimension.

\section{Applications of qsample preparation for flow models}\label{sec:applications}

In practice, flow models are often trained to generate the available data, yet simply reproducing the training data does not allow for answering questions of scientific relevance. Instead, after learning a flow model, one may want to estimate the mean of a variety of functions with respect to the learned distribution (mean estimation), or generate a sample from the learned distribution with some specific property (property optimization). Said another way, after the first step of learning the flow model, one often wants to perform some sort of post-training ``inference'' task. Specific examples of different scientific contexts, and their associated inference tasks include: 
\begin{enumerate}
    \item Sampling molecular conformations via Boltzmann Generators \citep{noe2019boltzmann}. They approach sampling by learning a potentially inaccurate flow model that allows both for sampling and density evaluation. Then the learned model is used as the proposal distribution for Self-Normalized Importance Sampling to obtain a consistent sampling algorithm from the target Boltzmann density;
    \item Compositional generation \citep{du2023reduce, skreta2025feynman}, where the goal is to produce generations fitting two or more criteria at the same time, e.g., a molecule structure that can bind to several target proteins. This is usually formalized as sampling from the product of two densities given by the same flow model with different conditions;
    \item Constrained reward optimization \citep{domingo2024adjoint,singhal2025general}, i.e., generation of realistic samples that maximize specified reward function. This can be formalized as a product of the density defined by the flow model trained on realistic examples with the density proportional to the reward function;
    \item Free-energy estimation \citep{mate2024neural}---one of the fundamental problems in computational chemistry. The goal here is to estimate the ratio of normalization constants of the density defined by the flow model and restricted to two given meta-stable states. This problem is of particular importance because it allows for simulating the binding process of molecular systems.
    \item Solving inverse problems \citep{mardani2023variational}, where one is given a (usually lossy) observation process and a dataset of the original signals. One can formalize this problem as Bayesian inference, where the prior is specified by the flow model, the likelihood is the observation process, and one has to recover the posterior distribution of clean signals based on the corrupted observation.
\end{enumerate}

In the standard classical setting, one typically relies on the Monte Carlo estimates of the values of interest, or designs task-specific inference algorithms such as Sequential Monte Carlo \citep{singhal2025general,skreta2025feynman}. Naturally, this creates the bottleneck of generating individual samples from the learned model and corresponding convergence rates of Monte Carlo estimators. However, the quantum algorithm presented in the previous section for generating qsamples for a flow model creates an additional possibility; namely, using quantum algorithms based on qsamples. As such, the natural question is whether one can gain any advantages using quantum algorithms with access to qsamples, over classical algorithms with access to samples.

For the case of \textit{discrete} distributions, there is a rich history of work aimed at understanding the extent to which one may or may not gain advantages for statistical problems of the type discussed above, when one replaces access to classical samples from $p$ with some form of \textit{quantum access} to $p$~~\cite{Brassard_2002,brassard2011optimalquantumalgorithmapproximate,heinrich2001quantumsummationapplicationintegration,montanaro2015quantum,qchebyshev,qsubgaussian, Cornelissen_2022,Bravyi_2011,gilyen2020distributional,chakraborty2010newresultsquantumproperty,li2018quantum,montanaro2013survey}. Through this line of work a variety of models of ``quantum access" to a distribution have emerged, however here we highlight two such models:

\begin{definition}[qsample access]\label{def:qsample-access} A distribution $p:\Omega\rightarrow[0,1]$ over some finite set $\Omega$, is accessible via qsample access if one has access to copies of the qsample state
\begin{equation}\label{eq:qsamp}
|\psi_p\rangle = \sum_{x\in\Omega}\sqrt{p(x)}|x\rangle.
\end{equation}
\end{definition}

\begin{definition}[qsample preparation access]\label{def:qsample-prep-access} A distribution $p:\Omega\rightarrow[0,1]$ over some finite set $\Omega$, is accessible via qsample preparation access if one has the ability to implement both $U_p$ and $U^{-1}_p$ for a unitary $U_p$ satisfying
\begin{equation}\label{eq:qsamp-prep}
U_p|0\rangle = |\psi_p\rangle
\end{equation}
for some fiducial state $\ket{0}$. In this model, one implementation of either $U_p$ or $U^{{-1}}_p$ is considered to be a ``query'' to the unknown distribution $p$.
\end{definition}

While qsample access may seem like the natural quantization of classical samples, most works on quantum algorithms for ``distribution problems'' in fact make the stronger assumption of qsample \textit{preparation} access. Indeed, with this assumption one can show that, at least for discrete distributions, quantum algorithms can obtain meaningful query complexity advantages over classical algorithms for both mean estimation, and a variety of property testing tasks.

More specifically, for mean estimation with such quantum access there is a long line of works on advantageous quantum mean estimators under different assumptions~\cite{Brassard_2002,brassard2011optimalquantumalgorithmapproximate,heinrich2001quantumsummationapplicationintegration,montanaro2015quantum,qchebyshev,qsubgaussian,kothari2023mean,Cornelissen_2022}, culminating in the quantum sub-Gaussian estimators of Ref.~\cite{qsubgaussian} and Ref.~\cite{Cornelissen_2022} (the latter for the estimation of the mean of multi-variate random variables). For property testing, advantageous quantum algorithms have been developed for testing uniformity and closeness between distributions~\cite{Bravyi_2011,gilyen2020distributional}, certifying distributions (i.e., testing identity)~\cite{chakraborty2010newresultsquantumproperty}, and testing entropy~\cite{li2018quantum,gilyen2020distributional} from quantum access (see also~\cite{montanaro2013survey} for a review). We stress, however, that all of the above algorithms (and the advantages that they are able to obtain) are for the case of \textit{discrete} distributions, and to the best of our knowledge, the extent to which one can or cannot gain advantages for distribution problems in the continuous setting is as of yet unexplored. 

But as we have already noted above, the quantum algorithm for simulating the continuity equation given in Section~\ref{sec:qsim-algorithm} allows us to prepare \textit{approximate discretized} qsamples for flow models---i.e., for a continuous distribution which is the solution to the continuity equation for some known velocity field $v_t = \nabla V_t$---and therefore provides a concrete and efficient way to realize a specific type of quantum access to such distributions. Motivated by this, and by the myriad of statistical ``inference'' problems defined by flow models of scientific relevance, in the following section we formally define the notion of ``approximate discretized qsample preparation access'' which  can be realized through the quantum algorithm for simulating the continuity equation given in Section~\ref{sec:qsim-algorithm}. With this in hand, we then show how approximate discretized qsample preparation access allows one to gain rigorous query complexity advantages over classical algorithms for mean estimation. This then directly implies advantages for mean estimation of functions with respect to flow models, over standard naive approaches based simply on collecting and post-processing classical samples from the flow model.

\subsection{Approximate discretized qsample preparation access to continuous distributions}\label{ss:disc_approx_qsample_preparation}

As done previously in this work, we consider probability density functions $p:[0,L]^d\rightarrow [0,\infty)$. In order to define the notion of approximate discretized  qsample preparation access to some such PDF $p$, we start by defining a canonical discretization. To do this, we follow the approach used in Section~\ref{sec:discretizing_space}, and start by defining the 1-dimensional grid
\begin{equation}
\mathbb{X}_N = \frac{L}{N} \, \Big \{ 0, 1, \dots, N-1 \Big\} \; \subset [0,L]^d.
\label{eq:X_def}
\end{equation}
Then $\mathbb{X}_N^d \subset [0,L]^d$  is the $d$-dimensional grid comprising $N^d$ points covering $[0,L]^d$ uniformly, with spacing $L/N$ along each spatial dimension. Given a  PDF $p:[0,L]^d\rightarrow [0,\infty)$ such that $p(x) < \infty$ for all $ x \in \mathbb{X}_N^d$ and $p(x)>0$ for some $ x \in \mathbb{X}_N^d$, we then define its discretization $\bar{p}_N:\mathbb{X}^d_N\rightarrow[0,1]$ via: 
\begin{equation}
\bar{p}_N(x) = \frac{p(x)}{S} \text{ for all } x\in\mathbb{X}_N^d,
\end{equation}
where $S$ is a normalization constant necessary to ensure that $\bar{p}_N$ is a valid probability mass function (PMF). 

With this in hand, recall from Eqs.~\eqref{eq:qsamp} and~\eqref{eq:qsamp-prep} that $U_{\bar{p}_N}$ is the unitary defined via
\begin{equation}
U_{\bar{p}_N}|0\rangle = |\psi_{\bar{p}_N}\rangle = \sum_{x\in\mathbb{X}_N^d}\sqrt{\bar{p}_N(x)}|x\rangle.
\end{equation}
We can then define the notion of discretized qsample preparation access to $p$ in a natural way. 

\begin{definition}[Discretized qsample preparation access]\label{def:disc-qsample-prep-access} Let $p:[0,L]^d\rightarrow[0,\infty)$ be a probability density function. We say that the distribution $p$ is accessible via discretized qsample preparation access with discretization scale $N$ if one has the ability to implement both $U_{\bar{p}_N}$ and $U^{-1}_{\bar{p}_N}$.
\end{definition}

With the above definition established, it should now be clear that the quantum algorithm presented in Section~\ref{sec:qsim-algorithm} \textit{almost} allows one to efficiently realize discretized qsample preparation access for any PDF which is the solution to the continuity equation for some known conservative velocity field $v_t$. Unfortunately however, the simulation algorithm from Section~\ref{sec:qsim-algorithm} only \textit{approximately} implements $U_{\bar{p}_N}$ (i.e., up to an arbitrarily small error). As such, we finally define the notion of \textit{approximate} discretized qsample preparation access:

\begin{definition}[Approximate discretized qsample preparation access]\label{def:app-disc-qsample-prep-access} Let $p:[0,L]^d\rightarrow[0,\infty)$ be a probability density function. We say that the distribution $p$ is accessible via $\epsilon$-approximate discretized qsample preparation access with discretization scale $N$ if one has the ability to implement both $U$ and $U^{-1}$ for some unitary $U$ satisfying ${\|U|0\rangle - |\psi_{\bar{p}_N}\rangle\| \leq \epsilon}$, where $\| \cdot \|$ denotes the Euclidean/2-norm.
\end{definition}

This notion is now precisely the type of access that can be concretely and efficiently realized by the algorithm given in Section~\ref{sec:qsim-algorithm}, for any conservative flow model---i.e., for any PDF which is the solution to the continuity equation for some known conservative velocity field $v_t$ obeying appropriate regularity conditions. Also, we note that the $2$-norm appearing in Definition~\ref{def:app-disc-qsample-prep-access} is inherited from the guarantees given on the algorithm described in Section~\ref{sec:qsim-algorithm}. While this might appear strange, this norm upper bounds the trace distance between the states $U|0\rangle$ and $|\psi_{\bar{p}_N}\rangle$ \cite{aharonov:mixed}, which in turn upper bounds the total variation distance between the output distributions obtained from measuring these states in the computational basis---i.e., between $\bar{p}_N$ and the Born distribution of $U|0\rangle$, which are perhaps the more natural objects to consider. Given this, the immediate question is whether or not quantum algorithms with such quantum access to a PDF $p$ can gain any advantages for natural ``inference tasks''---i.e., statistical problems defined by flow models---over classical algorithms with standard classical sample access. In the following section we show that this is indeed the case for the ubiquitous task of \textit{mean estimation}. 

\subsection{Advantages in continuous mean estimation with approximate discretized qsample preparation access}\label{ss:app-mean-estimation}

To illustrate the power of the quantum distribution access models defined above, here we consider the problem of \textit{mean estimation} of a random variable with respect to an underlying probability distribution. More specifically, given some function ${f:\Omega\rightarrow E\subset \mathbb{R}}$, together with a probability mass function ${p: \Omega\rightarrow[0,1]}$ (when $\Omega$ is a finite set) or probability density function $p:\Omega\rightarrow[0,\infty)$ (when $\Omega$ is an infinite set), we are interested in the problem of estimating the mean $\mu = \mathbb{E}_{x\sim p}[f(x)]$. In the classical case we assume that we are given independent and identically distributed (i.i.d.) samples from $p$, and in the quantum case we assume that we have some sort of quantum access to $p$ (such as those defined in Definitions~\ref{def:qsample-access}-\ref{def:app-disc-qsample-prep-access}) as well as some sort of quantum access to $f$. More specifically, in the case when $\Omega$ is finite, one typically assumes that one has access to an ``evaluation oracle'' for $f$, i.e., the unitary satisfying $U_f |x,0\rangle = |x,f(x)\rangle$. We note that one can construct an explicit and efficient implementation of such an oracle whenever an efficient classical circuit for computing $f$ is known~\cite{nielsen2010quantum}.

We are then interested in obtaining \textit{high probability deviation bounds} for different estimators, which are bounds of the following type: Given $m$ ``experiments'' and an allowed failure probability $\Delta\in (0,1)$, what is the smallest error $\epsilon(m,\Delta,f)$ such that the estimator $\tilde{\mu}$ satisfies $|\tilde{\mu}-\mu| >\epsilon(m,\Delta,f)$ with probability at most $\Delta$? In the classical case, an ``experiment'' is simply a single sample from $p$, and in the quantum case, an ``experiment'' is any single implementation of the evaluation oracle $U_f$, or any of the unitaries appearing in the quantum access models in Definitions~\ref{def:qsample-access}-\ref{def:app-disc-qsample-prep-access}.

Classically, the best estimators are the \textit{sub-Gaussian} estimators (such as ``median-of-means''), which obtain 
\begin{equation}\label{eq:classical-sub-gaussian}
\mathrm{Pr}\left[|\tilde{\mu}-\mu|> C\sigma\sqrt{\frac{\log(1/\Delta)}{m}}\right]\leq \Delta
\end{equation}
for some constant $C$, where $\sigma = \sqrt{\text{Var}_{x \sim p} f(x)}$. From a quantum perspective, for the case of discrete underlying distributions $p$, there is a long line of work on quantum mean estimators, under different assumptions on both $p$ and $f~$\cite{Brassard_2002,brassard2011optimalquantumalgorithmapproximate,heinrich2001quantumsummationapplicationintegration,montanaro2015quantum,qchebyshev,qsubgaussian}. These works have culminated in the quantum sub-Gaussian estimator of Ref.~\cite{qsubgaussian}. In particular, prior to the results of Ref.~\cite{qsubgaussian} all quantum mean estimators either required a strong promise on the variance, or are less efficient than classical sub-Gaussian estimators for heavy-tailed distributions. With this in mind, the state-of-the art result from Ref.~\cite{qsubgaussian} is as follows:

\begin{theorem}[Quantum sub-Gaussian mean estimator -- Theorem 4.2 Ref.~\cite{qsubgaussian}]\label{thm:discreteq} Assume some underlying PMF $p:\Omega\rightarrow[0,1]$ and function $f:\Omega\rightarrow E\subset \mathbb{R}$. There exists a quantum sub-Gaussian estimator which, when given qsample preparation access to the unknown PMF $p$ (as per Definition~\ref{def:qsample-prep-access}), together with access to the evaluation oracle $U_f$, as well as $m$ and $\Delta\in(0,1)$ such that $m\geq \log(1/\Delta)$, uses $O(m\log^{3/2}(m)\log\log (m))$ ``experiments'' and outputs a mean estimate $\tilde{\mu}$ satisfying
\begin{equation}
\mathrm{Pr}\left[|\tilde{\mu}-\mu|> \sigma\frac{\log(1/\Delta)}{m}\right]\leq \Delta.
\end{equation}
\end{theorem}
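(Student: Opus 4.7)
The plan is to combine quantum amplitude estimation with a dyadic decomposition of $f$ and a median-of-means confidence boost, following Montanaro's quantum Monte Carlo strategy together with the extensions needed for heavy-tailed random variables. The base primitive is the following: given qsample-preparation access to $p$ and access to $U_f$, one can prepare, for any $g:\Omega\to[0,1]$, the state
\[
\ket{\psi_g}=\sum_{x\in\Omega}\sqrt{p(x)}\,\ket{x}\bigl(\sqrt{1-g(x)}\,\ket{0}+\sqrt{g(x)}\,\ket{1}\bigr)
\]
and then apply quantum amplitude estimation to the indicator on the second register. This extracts $\mathbb{E}_{x\sim p}[g(x)]$ with additive error $O(1/m)$ using $O(m)$ calls to $U_p,U_p^{-1},U_f$ at constant success probability, realizing the Heisenberg-limited $1/m$ scaling that underlies the theorem.

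Next I would reduce estimation of a heavy-tailed $f$ to this primitive via a dyadic shell decomposition. Write $f=f_{\mathrm{trunc}}+\sum_{k\ge 0}(f_k^{+}-f_k^{-})$, where $f_{\mathrm{trunc}}$ is $f$ clipped to $[-B_0,B_0]$ for some base scale $B_0\asymp\sigma$ and $f_k^{\pm}$ captures the positive and negative parts of $f$ on the shell $B_0 2^k\le|f|\le B_0 2^{k+1}$, and feed each rescaled piece to the base primitive. Chebyshev's inequality bounds the mass of the $k$-th shell by $\sigma^2/(B_0 2^k)^2$, so it contributes at most $O(\sigma\,2^{-k})$ to the mean and can be estimated with a smaller query budget. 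Allocating $O(\log m)$ shells and balancing per-shell precisions yields total error $O(\sigma/m)$ up to logarithmic factors. Since $\sigma$ is not known a priori, $B_0$ must be obtained by a preliminary quantum variance estimator that runs within the same overall budget.

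I would then boost the success probability from a constant to $1-\Delta$ using the median-of-means trick: run the above procedure $L=\Theta(\log(1/\Delta))$ times independently and return the median of the $L$ estimates. A standard binomial concentration argument gives failure probability at most $\Delta$, and rebalancing so that the total query budget remains $O(m\log^{3/2}(m)\log\log(m))$ delivers the stated deviation $\sigma\log(1/\Delta)/m$.

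The hardest part is the joint bookkeeping between dyadic shells, adaptive variance estimation, and median boosting: the per-shell errors must combine in an $\ell^2$ fashion (rather than $\ell^1$) to preserve the Heisenberg $1/m$ scaling, while the adaptive estimation of $\sigma$ cannot consume more than a constant fraction of the query budget. The $\log^{3/2}(m)\log\log(m)$ overhead is precisely the price paid for this coordination, and pinning down the correct power of $\log m$ is the technical heart of the argument.
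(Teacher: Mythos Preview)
The paper does not prove this theorem at all: it is quoted verbatim as Theorem~4.2 of Ref.~\cite{qsubgaussian} and used as a black box to derive Corollary~\ref{c:qmeansapprox}. So there is no ``paper's own proof'' to compare your proposal against.

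That said, your sketch is a faithful outline of the strategy actually used in Ref.~\cite{qsubgaussian}: amplitude estimation as the Heisenberg-scaling primitive, a dyadic (level-set) decomposition of $f$ with Chebyshev-controlled tail masses, a preliminary quantum variance estimate to set the truncation scale, and a median-of-means boost to reach confidence $1-\Delta$. The one place where your description is slightly loose is the accounting of the polylogarithmic overhead: in the original argument the $\log^{3/2}(m)\log\log(m)$ factor arises from a quantile-based truncation combined with iterated amplitude estimation at geometrically increasing precisions, and getting the exponent right requires a somewhat delicate analysis of the interplay between the unknown-variance estimation and the per-level budget allocation. Your proposal correctly flags this as the technical heart, but the phrase ``combine in an $\ell^2$ fashion'' is not quite how the original proof proceeds; the actual mechanism is a careful choice of per-level accuracies so that the dominant error comes from a single level, with the remaining levels contributing geometrically. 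If you intend to reproduce the full proof rather than cite it, that is the step that would need the most care.
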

As discussed in Ref.~\cite{qsubgaussian}, this essentially gives a quadratic speed-up over the number of classical i.i.d.\ samples necessary to estimate the mean of heavy-tailed distributions with a sub-Gaussian error rate, without requiring any information on the variance beyond a promise that it is finite. We note that the estimator described above has since been extended to a quantum sub-Gaussian estimator for \textit{multivariate} mean estimation~\cite{Cornelissen_2022}---i.e., for ${f:\Omega\rightarrow E\subset \mathbb{R}^d}$---however here we restrict ourselves to univariate mean estimation for ease of presentation.

As mentioned earlier, here we are now particularly interested in mean estimation in the case when the underlying distribution is \textit{continuous}, i.e., when $\Omega = [0,L]^d$ and the distribution is described by some PDF ${p:[0,L]^d\rightarrow [0,\infty)}$. Additionally, we are particularly interested in the setting where one is given approximate discretized qsample preparation access to $p$ (as per Definition~\ref{def:app-disc-qsample-prep-access}). Again, this is motivated by the fact that such quantum access can be concretely and efficiently realized by the quantum simulation algorithm described in Section~\ref{sec:qsim-algorithm}. 

Here there is a particularly simple and natural strategy: simply apply the \textit{discrete} quantum mean estimator of Theorem~\ref{thm:discreteq}, ignoring the fact that the unknown underlying distribution is continuous. This works, as the access model defined in Definition~\ref{def:app-disc-qsample-prep-access} gives one the ability to apply the unitary for the preparation of a qsample for a \textit{discrete} distribution, which is an approximation of the true underlying continuous distribution. As such, one effectively performs mean estimation with respect to the approximate distribution, and all that is necessary to analyze is the effect of the approximations in the access model on the performance of the resulting estimator. If one does this, one obtains the following result, which is essentially a corollary of Theorem~\ref{thm:discreteq}.

\begin{corollary}[Quantum mean estimation for continuous distributions]\label{c:qmeansapprox} Assume a PDF $p:[0,L]^d\rightarrow[0,\infty)$ which is Lipschitz with Lipschitz constant $\ell_p$, and some function $f:[0,L]^d\rightarrow E\subset \mathbb{R}$ which is Lipschitz with Lipschitz constant $\ell_f$.
The quantum sub-Gaussian estimator of Theorem~\ref{thm:discreteq}, when given $\epsilon$-approximate discretized qsample preparation access to $p$ with discretization scale $N$ (as per Definition~\ref{def:app-disc-qsample-prep-access}), together with evaluation oracle access $U_f$ to the restriction of $f$ on $\mathbb{X}_N^d$, as well as $m$ and $\Delta\in(0,1)$ such that $m\geq \log(1/\Delta)$, uses $O(m\log^{3/2}(m)\log\log (m))$ ``experiments'' and outputs a mean estimate $\tilde{\mu}$ satisfying
\begin{equation}
\mathrm{Pr}\left(|\tilde{\mu}-\mu|> \sqrt{\sigma^2 + \epsilon_{\mathrm{var}}}\; \frac{\log(1/\Delta)}{m} + \epsilon_{\mathrm{mean}}\right) \leq \Delta.
\end{equation}
where $\sigma = \sqrt{\mathrm{Var}_{x \sim p} f(x)}$ and
\begin{align}
\epsilon_{\mathrm{mean}} &= 2\|f\|_{\mathsf{L}^\infty} \frac{\ell_p\sqrt{d}L^{d+1}}{N} + \ell_f \sqrt{d}\frac{L}{N}  + 2\epsilon\|f\|_{\mathsf{L}^\infty} , \label{eq:mean-error-cor} \\
\epsilon_{\mathrm{var}} &=  6\|f\|^2_{\mathsf{L}^\infty} \frac{\ell_p\sqrt{d}L^{d+1}}{N} + 4 \| f\|_{\mathsf{L}^\infty} \; \ell_f \sqrt{d}\frac{L}{N} + 6\epsilon \|f\|^2_{\mathsf{L}^\infty}. \label{eq:variance-error-cor} 
\end{align}
\end{corollary}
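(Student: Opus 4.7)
My strategy is to obtain the corollary as a near-direct consequence of Theorem~\ref{thm:discreteq} by tracking three sources of error: (i) the sub-Gaussian concentration of the estimator around a suitable discrete mean; (ii) the discrepancy between the Born distribution $q$ of $U\ket{0}$ (which effectively governs the behavior of the sub-Gaussian estimator when fed $U$ and $U^{-1}$) and the intended discretization $\bar{p}_N$; and (iii) the discrepancy between $\bar{p}_N$ and the true continuous density $p$. The $\epsilon$-approximation hypothesis enters via (ii), while the Lipschitz hypotheses on $p$ and $f$ enter via (iii).

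\textbf{From $q$ to $\bar{p}_N$.} The hypothesis $\|U\ket{0} - \ket{\psi_{\bar{p}_N}}\|\leq \epsilon$ bounds the trace distance between these two pure states by $\epsilon$, which in turn bounds the total-variation distance $\mathrm{TV}(q,\bar{p}_N)\leq \epsilon$ between their Born distributions. Viewing the sub-Gaussian estimator of Theorem~\ref{thm:discreteq} as effectively operating on $q$ rather than $\bar{p}_N$ yields $\mathrm{Pr}[|\tilde{\mu}-\mu_q| > \sigma_q \log(1/\Delta)/m]\leq \Delta$, with $\mu_q=\mathbb{E}_q[f]$ and $\sigma_q^2=\mathrm{Var}_q[f]$. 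Boundedness of $f$ then gives $|\mu_q - \mathbb{E}_{\bar{p}_N}[f]|\leq 2\epsilon\|f\|_{\mathsf{L}^\infty}$, and expanding $\mathrm{Var} = \mathbb{E}[f^2]-\mathbb{E}[f]^2$ with the analogous TV-based bound on $\mathbb{E}[f^2]$, together with the factorization $|\mu_q^2-\mathbb{E}_{\bar{p}_N}[f]^2|\leq 2\|f\|_{\mathsf{L}^\infty}|\mu_q-\mathbb{E}_{\bar{p}_N}[f]|$, yields $|\sigma_q^2-\mathrm{Var}_{\bar{p}_N}[f]|\leq 6\epsilon\|f\|^2_{\mathsf{L}^\infty}$.

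\textbf{From $\bar{p}_N$ to $p$.} Let $h=L/N$, $S=\sum_{x\in\mathbb{X}_N^d}p(x)$, and let $\mu_R=h^d\sum_x f(x)p(x)$ be the natural Riemann sum for $\mu=\int fp\,dx$. Cube-by-cube Lipschitz bounds using $\|y-x\|\leq \sqrt{d}\,h$ inside each cube of side $h$ yield $|h^d S - 1|\leq \ell_p \sqrt{d}\,L^{d+1}/N$ and $|\mu_R-\mu|\leq \|f\|_{\mathsf{L}^\infty}\ell_p\sqrt{d}\,L^{d+1}/N + \ell_f\sqrt{d}\,L/N$; the $\ell_f$ term here uses $\int p\,dx=1$ to avoid a factor of $\|p\|_{\mathsf{L}^\infty}$. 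Writing $\mathbb{E}_{\bar{p}_N}[f]=\mu_R/(h^d S)$ and combining the two bounds yields $|\mathbb{E}_{\bar{p}_N}[f]-\mu|\leq 2\|f\|_{\mathsf{L}^\infty}\ell_p\sqrt{d}\,L^{d+1}/N + \ell_f\sqrt{d}\,L/N$. Repeating the argument for $f^2$ (whose Lipschitz constant is at most $2\|f\|_{\mathsf{L}^\infty}\ell_f$) and using $|\mu^2-\mathbb{E}_{\bar{p}_N}[f]^2|\leq 2\|f\|_{\mathsf{L}^\infty}|\mu-\mathbb{E}_{\bar{p}_N}[f]|$ gives $|\sigma^2-\mathrm{Var}_{\bar{p}_N}[f]|\leq 6\|f\|^2_{\mathsf{L}^\infty}\ell_p\sqrt{d}\,L^{d+1}/N + 4\|f\|_{\mathsf{L}^\infty}\ell_f\sqrt{d}\,L/N$. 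Triangle-inequality combination with the previous paragraph then yields $|\mu_q-\mu|\leq \epsilon_{\mathrm{mean}}$ and $|\sigma_q^2-\sigma^2|\leq \epsilon_{\mathrm{var}}$, so $\sigma_q\leq \sqrt{\sigma^2+\epsilon_{\mathrm{var}}}$; a final triangle inequality $|\tilde{\mu}-\mu|\leq|\tilde{\mu}-\mu_q|+|\mu_q-\mu|$ closes the argument with probability at least $1-\Delta$.

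\textbf{Main obstacle.} The delicate step is the first reduction: Theorem~\ref{thm:discreteq} is proven for exact qsample preparation access, and one must justify that substituting the $\epsilon$-perturbed unitary $U$ produces an output whose analysis carries over with the Born distribution $q$ of $U\ket{0}$ playing the role of $\bar{p}_N$. A clean route is either to inspect the amplitude-estimation reflections used in the proof of Ref.~\cite{qsubgaussian} to establish an explicit $\epsilon$-robustness statement, or to appeal to a general black-box lemma of the same flavor. The remainder---cube-by-cube Lipschitz estimates for Riemann sums, converting TV distance into additive error via boundedness of $f$, and repeated triangle inequalities---is routine bookkeeping.
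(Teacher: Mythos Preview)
Your proposal is correct and follows essentially the same route as the paper: decompose the error into (i) the sub-Gaussian guarantee of Theorem~\ref{thm:discreteq} applied with the perturbed unitary $U$, yielding concentration around $\mu(q)$ with scale $\sigma(q)$; (ii) a TV-based comparison of $q$ and $\bar{p}_N$ giving the $2\epsilon\|f\|_{\mathsf{L}^\infty}$ and $6\epsilon\|f\|^2_{\mathsf{L}^\infty}$ terms; and (iii) a Lipschitz/Riemann-sum comparison of $\bar{p}_N$ and $p$ giving the remaining terms, followed by triangle inequalities. The paper packages (ii) and (iii) as separate lemmas and handles (iii) via an auxiliary piecewise-constant density $r$ rather than your direct Riemann-sum manipulation, but the estimates and constants are identical.

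One remark on your ``main obstacle'': the paper does not in fact provide the robustness analysis you suggest (inspecting the amplitude-estimation reflections or invoking a black-box perturbation lemma). It simply asserts that feeding $U,U^{-1}$ to the estimator of Theorem~\ref{thm:discreteq} yields the guarantee with $\mu(q),\sigma(q)$ in place of $\mu(\bar p_N),\sigma(\bar p_N)$, on the grounds that $U$ is itself an exact qsample-preparation unitary for \emph{some} distribution, namely $q$. So you are being more careful here than the paper; if $U\ket{0}$ is not guaranteed to have nonnegative real amplitudes, your flagged concern is legitimate and the paper's treatment of this step is informal.
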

\begin{proof}
See Appendix~\ref{app:proof-cor-mean-estimation}.
\end{proof}
By comparing with the performance of the classical sub-Gaussian estimator given in Eq.~\eqref{eq:classical-sub-gaussian}, the take-away from this result is that when one takes into account the errors from both discretization and approximation, one can still achieve advantages quantumly for \textit{continuous} mean estimation with approximate discretized qsample preparation access, but only in certain parameter ranges, and in particular not for arbitrarily small target error. However, as is clear from Eqs.~\eqref{eq:mean-error-cor} and~\eqref{eq:variance-error-cor}, one can systematically reduce these errors (and the ranges in which one can obtain quantum advantages for mean estimation) by increasing the discretization scale $N$ or decreasing the approximation error $\epsilon$ of the assumed access model. The quantum resources (qubits, time etc.) needed to do this have been detailed precisely in Section~\ref{sec:qsim-algorithm}.

\section{Conclusion}

In recent years, flow models have emerged as a state-of-the-art method for learning and sampling from complex, high-dimensional probability distributions. Indeed, they have had remarkable success in learning complex distributions of practical relevance across a wide variety of domains, and remain an extremely active field of research~\cite{lipman2024flowmatchingguidecode}. In this work, we have constructed a bridge between flow models and quantum computing by showing in Section~\ref{sec:flow_quantum_mapping} that, given a trained flow model, one can construct a specific (continuous variable) Hamiltonian whose dynamics prepare a qsample, i.e., a coherent encoding, of the distribution learned by the flow model. Accordingly, we refer to time evolution under this Hamiltonian as a \textit{wavefunction flow}.

Motivated by this connection, we have developed in Section~\ref{sec:qsim-algorithm} a method for efficient quantum simulation of this specific class of Hamiltonians on digital quantum computers; i.e., a method to efficiently implement wavefunction flows. This immediately yields an efficient quantum algorithm for preparing qsamples for any distribution described by a flow model (satisfying mild regularity conditions)---a class of distributions which has been shown empirically to be extremely expressive. As such, this algorithm can vastly enlarge the class of distributions for which the efficient preparation of qsamples is possible. 

This fundamental connection between flow models and quantum computing enables a variety of new perspectives and opportunities. On one hand, from an ML perspective, it offers a different---and potentially more powerful---type of access to the distributions described by flow models. Indeed, training a flow model is often a first step towards probing properties of the learned distribution, and in Section~\ref{sec:applications}, building on prior work~\cite{qsubgaussian}, we have demonstrated that the expected value of any function with respect to the learned distribution can be estimated from fewer qsamples than standard classical samples. For the case of discrete distributions, it is known that qsamples also offer advantages for a wide variety of property testing tasks, such as identity testing~\cite{gilyen2020distributional}. We therefore leave as an interesting open direction for future research to uncover further advantages of qsamples for the continuous distributions described by flow models.

From a quantum computing perspective, it is well known that certain complexity-theoretic conjectures can be reduced to questions concerning the complexity of preparing qsamples for certain classes of distributions~\cite{aharonov2003adiabaticquantumstategeneration}. As this work has shown that qsamples for distributions learned by flow models can be efficiently prepared, these complexity theoretic conjectures could be recast as questions concerning the limitations of flow models. This work therefore brings these quantum complexity-theoretic questions firmly into the domain of mainstream research on the mathematical foundations of modern ML methods. Specific potential questions include:
\begin{question}[Hardness of Sampling]
Is there a velocity field $v_t$ such that $p(x) = |\langle x | \psi_T \rangle|^2$ is a classically hard-to-sample distribution under standard computational complexity conjectures? If so, can such a $v_t$ be learned efficiently?
\end{question}

\begin{question}[Quantum Circuits and Wavefunction Flows]
    Let $U$ be a quantum circuit that produces a qsample $\ket{\psi_T}$ when applied to a fiducial state. Is there an algorithm that, given the circuit description, produces $v_t$ and a corresponding continuity Hamiltonian \eqref{eq:continuity_Hamiltonian} whose dynamics \eqref{eq:U_time_ordered} approximate $U$? Is it possible to show, or refute, that efficient simulation of the continuity Hamiltonian is a $\textsf{BQP}$-complete task?     
\end{question}

In summary, we have attempted in this work to construct a meaningful bridge between state-of-the-art ML methods and quantum computing, and to explore some of its immediate consequences. Historically, novel connections between physics and ML have proven extremely fruitful for both sides, and we hope that this work can continue this tradition by catalyzing the development of new methods and insights in both ML and physics.

\section*{Acknowledgments}

We thank Srinivasan Arunachalam, Arkopal Dutt, Hari Krovi, Bryan O'Gorman, and Oles Shtanko for helpful discussions.

\bibliography{references}

\appendix
\section*{Appendix}
\setcounter{theorem}{0}
\renewcommand{\thetheorem}{\Alph{section}\arabic{theorem}}

\section{Continuity Hamiltonian}
\label{app:continuity_H}

In deriving the continuity Hamiltonian $\hat{\mathcal{H}}_t$, Eq.~\eqref{eq:schrodinger_derivation} divides by $\sqrt{p_t(x)}$ and therefore tacitly assumes that $p_t$ is positive everywhere. Here we show that this assumption is not necessary. Suppose $(p_t)_{t\in[0,T]}$ obeys the continuity equation \eqref{eq:continuity_eq} with respect to a velocity field $v_t$. Moreover, suppose a generic wavefunction $\Psi_t$ satisfies the Schr\"odinger equation $i \frac{\partial}{\partial t} \Psi_t = \hat{\mathcal{H}}_t \Psi_t$ with initial condition $\Psi_0(x) = \sqrt{p_0(x)}$, then the function $(x,t) \mapsto \Psi_t(x)^2$ (note the lack of absolute value) also obeys the continuity equation with respect to $v_t$:
\begin{align}
\frac{\partial}{\partial t} \Psi_t(x)^2
& =
-2i \Psi_t(x) \; \hat{\mathcal{H}}_t \Psi_t (x) \nonumber \\ 
& =
- \Psi_t(x) \; \nabla \cdot \big[ v_t(x) \Psi_t(x) \big]
- 
\big[ v_t(x) \Psi_t(x) \big] \cdot \nabla \Psi_t(x) \\ 
& = 
- \nabla \cdot  \big[ v_t(x) \Psi_t(x)^2 \big] \nonumber
\end{align}
with the same initial condition. Therefore, $\Psi_t(x) = \sqrt{p_t(x)}$ for $t\in[0,T]$. Notice that we need not assume $p_t(x)>0$.

\section{Hamiltonian simulation details---discretizing space}
\label{app:discretizing_space}

An absolutely integrable function $f : \mathbb{T}^d \rightarrow \mathbb{C}$ has Fourier coefficients
\begin{equation}
\widehat{f}\, (k)
=
\frac{1}{L^d} \int_{\mathbb{T}^d} f(x) \, e^{-i k \cdot x} dx
\end{equation}
for wave vectors $k \in \frac{2 \pi}{L} \mathbb{Z}^d$, and a corresponding Fourier series:
\begin{equation}
f(x) \sim \sum_{k \in \frac{2\pi}{L} \mathbb{Z}^d} \widehat{f}\, (k) \, e^{i k \cdot x}.
\end{equation}
If $f$ is also square integrable, then the squared sum of its Fourier coefficients is related to its $\mathsf{L}^2$ norm by Parseval's theorem:
\begin{equation}
\|f\|_{\mathsf{L}^2}^2
=
L^d \sum_{k \in \frac{2\pi}{L} \mathbb{Z}^d} \big| \widehat{f}(k) \big|^2.
\end{equation}

If $f \in \mathsf{C}^a$ for $a >d/2$, then its Fourier series converges both absolutely and uniformly \cite{taylor:2010} (\S3.1). Intuitively, the reason why $f$ must be smoother in higher dimensions is that its Fourier series has more terms with $\| k \| \le \text{const.}$ when $d$ is larger, so $\widehat{f}(k)$ must decay faster with $\|k\|$ for the series to converge. But the rate at which $\widehat{f}(k)$ decays with growing $\|k\|$ is closely related to the smoothness of $f$ \cite{grafakos:2014}.

In order to prove Theorem~\ref{thm:spatial_disc_error} and the supporting lemmas, it will be useful to define
\begin{equation}
\mathbb{K}_N = \frac{2\pi}{L} \Big \{ -\frac{N}{2}, -\frac{N}{2}+1, \dots, \frac{N}{2}-1\Big\},
\end{equation}
so that $\mathbb{K}_N^d$ is a set of $N^d$ wave vectors centered (roughly) around $k=(0,\dots,0)$. (We will sometimes denote $k=(0,\dots,0)$ just as $k=0$.) This lets us define the space of bandlimited functions from $\mathbb{T}^d$ to $\mathbb{C}$:
\begin{equation}
\mathsf{B} = \text{span} \big \{ \varphi_k \,  | \, k\in \mathbb{K}_N^d \big \} \;\; \subset \mathsf{C}^\infty,
\end{equation}
where $\varphi_k(x)=e^{i k \cdot x}$ is a plane wave with wave vector $k$. The main idea of the following proofs is to project the continuous-space Schr\"odinger equation onto $\mathsf{B}$, and bound the impact of doing so. Functions $f\in \mathsf{B}$ can be fully described using $N^d$ coordinates, and there is a simple (exact) expression for $\nabla^2 f$ in terms of these coordinates, so such functions lend themselves well to discretization. Of course, the original Schr\"odinger equation involves components both within $\mathsf{B}$ and outside it, so we can bound discretization error by bounding the latter components.

One might expect that the natural way to project a function $f : \mathbb{T}^d \rightarrow \mathbb{C}$ onto $\mathsf{B}$ is to truncate its Fourier series by discarding $\widehat{f}(k)$ for $k \notin \mathbb{K}_N^d$, while keeping the low-frequency coefficients unchanged. This is a common approach in other contexts, but it would give a more complicated discrete Hamiltonian that is harder to simulate \cite{lubich2008quantum, su:2021}. Instead, the proofs below use the oblique (i.e., not orthogonal) projector $\mathcal{P}$ defined by
\begin{equation}
(\mathcal{P}f)(x)
=
\frac{1}{N^d} 
\sum_{k \in \mathbb{K}_N^d} \sum_{x' \in \mathbb{X}_N^d} f(x') \,  e^{i k\cdot (x-x')}.
\label{eq:P_def}
\end{equation}
Since $\mathcal{P}f\in \mathsf{B}$ manifestly, its Fourier coefficients $\widehat{\mathcal{P}f}(k)$ vanish for $k \notin \mathbb{K}_N^d$, but they do not exactly coincide with $\widehat{f}(k)$ for $k\in \mathbb{K}_N^d$. Rather, $\mathcal{P}$ is constructed so that $(\mathcal{P}f)(x)=f(x)$ for all grid points $x\in \mathbb{X}_N^d$, even though $\mathcal{P}f \neq f$ in general. That is, we construct $\mathcal{P}f$ to agree with $f$ at grid points $x\in \mathbb{X}_N^d$ in position space, rather than at $k \in \mathbb{K}_N^d$ in reciprocal space.

We will group the supporting lemmas here into three sections: first, those that characterize functions within $\mathsf{B}$, then those characterizing the projector $\mathcal{P}$ onto $\mathsf{B}$, and finally, those bounding the solutions of differential equations for finite-dimensional vectors, which is how we will ultimately prove Theorem~\ref{thm:spatial_disc_error}.

\subsection{Characterizing functions in $\mathsf{B}$}

\begin{lemma}
Let $f \in \mathsf{B}$, then 
\begin{equation}
\sqrt{\sum_{x \in \mathbb{X}_N^d} \big |f(x) \big|^2} = \left(\frac{N}{L}\right)^{d/2} \big \|f  \big\|_{\mathsf{L}^2}.
\end{equation}
\label{lemma:parseval_X}
\end{lemma}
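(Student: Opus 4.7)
Since $f \in \mathsf{B}$, expand it in the plane wave basis as
\begin{equation}
f(x) = \sum_{k \in \mathbb{K}_N^d} c_k \, e^{i k \cdot x},
\end{equation}
where the coefficients $c_k \in \mathbb{C}$ are unique because $\{\varphi_k\}_{k \in \mathbb{K}_N^d}$ is an orthogonal family in $\mathsf{L}^2(\mathbb{T}^d)$. By the same orthogonality (with $\int_{\mathbb{T}^d} e^{i (k-k')\cdot x} dx = L^d \delta_{k,k'}$), these coefficients coincide with the Fourier coefficients $\widehat{f}(k)$, and Parseval's identity stated in the excerpt gives the continuous side:
\begin{equation}
\|f\|_{\mathsf{L}^2}^2 = L^d \sum_{k \in \mathbb{K}_N^d} |c_k|^2.
\end{equation}

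For the discrete side, the plan is to compute
\begin{equation}
\sum_{x \in \mathbb{X}_N^d} |f(x)|^2 = \sum_{k,k' \in \mathbb{K}_N^d} c_k \bar{c}_{k'} \sum_{x \in \mathbb{X}_N^d} e^{i (k-k') \cdot x},
\end{equation}
and to invoke a discrete orthogonality relation asserting that
\begin{equation}
\sum_{x \in \mathbb{X}_N^d} e^{i (k-k') \cdot x} = N^d \, \delta_{k,k'} \quad \text{for } k,k' \in \mathbb{K}_N^d.
\end{equation}
This is the only step that requires genuine verification: it factorises over dimensions into one-dimensional sums $\sum_{j=0}^{N-1} e^{i 2\pi (m-m') j / N}$, where $m,m' \in \{-N/2, \ldots, N/2-1\}$ are the integer indices parameterising $k,k'$. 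The geometric series vanishes unless $m - m' \equiv 0 \pmod{N}$, and the crucial point is that $m - m' \in \{-(N-1), \ldots, N-1\}$, so this congruence forces $m = m'$. Hence no aliasing occurs within the window $\mathbb{K}_N^d$, which is precisely why the oblique projector $\mathcal{P}$ from Eq.~\eqref{eq:P_def} is consistent with collocation on $\mathbb{X}_N^d$.

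Combining the two ingredients gives $\sum_{x \in \mathbb{X}_N^d} |f(x)|^2 = N^d \sum_k |c_k|^2 = (N/L)^d \|f\|_{\mathsf{L}^2}^2$, and taking the square root yields the claim. The main (mild) obstacle is simply checking the no-aliasing condition above; everything else is mechanical expansion and an application of the continuous Parseval identity. This result can then be reused elsewhere in Appendix~\ref{app:discretizing_space} to relate the Euclidean norm on the discretised Hilbert space $\mathbb{C}^{N^d}$ to the $\mathsf{L}^2$ norm on $\mathsf{B}$, with the rescaling factor $(N/L)^{d/2}$ accounting for the grid spacing.
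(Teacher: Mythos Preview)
Your proof is correct and follows essentially the same approach as the paper's: expand $f$ in its finite Fourier series, evaluate the discrete sum via the geometric-series orthogonality relation on $\mathbb{X}_N^d$ (noting that $k,k' \in \mathbb{K}_N^d$ rules out aliasing), and combine with the continuous Parseval identity. Your explicit remark that $m-m' \in \{-(N-1),\ldots,N-1\}$ forces $m=m'$ is exactly the point the paper defers to the proof of its aliasing lemma.
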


\begin{proof}
Since $f \in \mathsf{B}$, we can express it as a Fourier series with a finite number of terms:
\begin{equation}
f(x)=\sum_{k \in \mathbb{K}_N^d} \widehat{f}(k)\,  e^{i k\cdot x},
\end{equation}
so
\begin{equation}
\sum_{x\in \mathbb{X}_N^d} \big| f(x) \big|^2
=
\sum_{k, k' \in \mathbb{K}_N^d} \widehat{f}(k') \widehat{f}(k)^* \sum_{x \in \mathbb{X}_N^d} e^{i (k'-k) \cdot x},
\end{equation}
where the star denotes a complex conjugate. We can evaluate the last sum by expressing it as a finite geometric series in each dimension (see the proof of  Lemma~\ref{lemma:aliasing} for details) to get: 
\begin{equation}
\sum_{x\in \mathbb{X}_N^d} e^{i (k'-k)\cdot x}
=
\begin{cases}
N^d & \text{if } (k'-k) \in  \frac{2\pi N}{L} \mathbb{Z}^d \\ 
0 & \text{otherwise}
\end{cases}
\;=\;
N^d \, \delta_{k,k'},
\end{equation}
since $k,k' \in \mathbb{K}_N^d$. Therefore, by Parseval's theorem:
\begin{equation}
\sum_{x\in \mathbb{X}_N^d} \big| f(x) \big|^2
= 
N^d \sum_{k, k' \in \mathbb{K}_N^d} \widehat{f}(k') \widehat{f}(k)^* \, \delta_{k,k'}
=
N^d \sum_{k \in \mathbb{K}_N^d} \big| \widehat{f}(k) \big|^2
=
\left(\frac{N}{L} \right)^d \big \|f  \big\|_{\mathsf{L}^2}^2.
\end{equation}
\end{proof}

\begin{lemma}
Let $f\in \mathsf{B}$, then for the $N^d \times N^d$-dimensional matrix $K$ defined in Eq.~\eqref{eq:K_def}:
\begin{equation}
-\frac{1}{2} \left[ \sum_{x \in \mathbb{X}_N^d} \big (\nabla^2 f \big)(x)  \, \ket{x} \right]
=
K \left[ \sum_{x \in \mathbb{X}_N^d} f(x)  \, \ket{x} \right] .
\end{equation}
\label{lemma:laplacian}
\end{lemma}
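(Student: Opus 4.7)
My plan is to exploit linearity and the Fourier structure of $\mathsf{B}$ to reduce the claim to a single eigenvalue computation. Since every $f\in\mathsf{B}$ can be written as a finite sum $f(x)=\sum_{k\in\mathbb{K}_N^d}\widehat{f}(k)\,e^{ik\cdot x}$, and since both sides of the lemma are linear in $f$, it suffices to check the identity for each plane-wave component separately. To that end I would introduce the ``lattice plane-wave'' vector
\begin{equation}
\ket{\varphi_k} \;=\; \sum_{x\in\mathbb{X}_N^d} e^{ik\cdot x}\,\ket{x}, \qquad k\in\mathbb{K}_N^d,
\end{equation}
and reduce the lemma to showing $K\ket{\varphi_k}=\tfrac{1}{2}\|k\|^{2}\ket{\varphi_k}$. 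Granting this, the right-hand side of the lemma becomes $\sum_{k}\tfrac{1}{2}\|k\|^2\widehat{f}(k)\ket{\varphi_k}$, while the left-hand side equals the same thing because $-\tfrac{1}{2}(\nabla^2 f)(x)=\tfrac{1}{2}\sum_k\|k\|^2\widehat{f}(k)\,e^{ik\cdot x}$ pointwise on $\mathbb{X}_N^d$.

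The eigenvalue claim for $K$ then reduces to one spatial dimension. Because $\ket{\varphi_k}$ factorizes as $\bigotimes_{i=1}^d \ket{\varphi_{k^{(i)}}}_1$, and because $K$ is a Kronecker sum of $d$ copies of the one-dimensional operator $K_1 := \tfrac{1}{2} S F D_K F^\dag S^\dag$, it suffices to show $K_1\ket{\varphi_{k'}}_1 = \tfrac{1}{2}(k')^2\ket{\varphi_{k'}}_1$ for every $k'\in\mathbb{K}_N$; summing over the $d$ tensor factors then produces $\tfrac{1}{2}\sum_i (k^{(i)})^2 = \tfrac{1}{2}\|k\|^2$ as required.

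The bulk of the work is to evaluate $K_1\ket{n}$ explicitly. The point of sandwiching $D_K$ between $S$ and $S^\dag$ is that the sign factors $(-1)^{\ell}$ relabel the QFT index $\ell\in\{0,\dots,N-1\}$ as $m=\ell-N/2\in\{-N/2,\dots,N/2-1\}$, so $D_K$ effectively acts as multiplication by $k_m^2=(2\pi m/L)^2$ with momenta centered at zero. Tracking the phases from $S^\dag$, $F^\dag$, $D_K$, $F$, and $S$ in turn, and writing $x_n=nL/N$, one obtains the matrix element
\begin{equation}
(K_1)_{j,n} \;=\; \frac{1}{2N}\sum_{m=-N/2}^{N/2-1} k_m^2\, e^{i k_m(x_j-x_n)}.
\end{equation}
Applying this to $\ket{\varphi_{k'}}_1=\sum_n e^{ik' x_n}\ket{n}$ and using the finite orthogonality relation $\sum_{n=0}^{N-1} e^{i(k'-k_m)x_n} = N\,\delta_{k',k_m}$---which holds exactly because $k'$ and $k_m$ both lie in $\mathbb{K}_N$ and so their difference has modulus strictly less than $2\pi N/L$---collapses the double sum to $\tfrac{1}{2}(k')^2\ket{\varphi_{k'}}_1$, closing the argument. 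The main obstacle I anticipate is the bookkeeping at the centering step: verifying that the $S$, $S^\dag$ sandwich, together with the offset $j - N/2$ built into $D_K$, really does implement the change of index $\ell \mapsto m = \ell - N/2$ without leaving a spurious phase. This careful centering is also why the identity holds exactly on $\mathsf{B}$ as opposed to a more naive Fourier truncation of continuous functions.
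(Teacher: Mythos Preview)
Your proposal is correct and follows essentially the same approach as the paper. Both arguments hinge on the same core computation: unpacking $SFD_KF^\dag S^\dag$ in one dimension and verifying that the sign factors from $S$ and $S^\dag$ together with the offset in $D_K$ implement the re-indexing $\ell\mapsto m=\ell-N/2$, yielding the matrix element $(K_1)_{j,n}=\frac{1}{2N}\sum_{m}k_m^2 e^{ik_m(x_j-x_n)}$. The only organizational difference is that the paper assembles the full $d$-dimensional matrix $K=\frac{1}{2N^d}\sum_{x,x',k}\|k\|^2 e^{ik\cdot(x-x')}\ket{x}\!\bra{x'}$ and then matches it against the grid expansion of $-\frac{1}{2}\nabla^2 f$ for general $f\in\mathsf{B}$, whereas you reduce to plane waves first and phrase the result as an eigenvector statement $K\ket{\varphi_k}=\tfrac{1}{2}\|k\|^2\ket{\varphi_k}$ before invoking linearity; your route is slightly cleaner conceptually but equivalent in substance.
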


\begin{proof}
From the definitions of $F$, $S$, $D_K$ in Eqs.~\eqref{eq:F}--\eqref{eq:D_K},
\begin{equation}
SFD_K F^\dag S^\dag = 
\frac{1}{N} \left( \frac{2\pi}{L} \right)^2 \; 
\sum_{j,\ell=0}^{N-1} \left[ \sum_{k=-N/2}^{N/2-1} k^2 e^{i 2\pi k(j-\ell)/N} \right] \ket{j} \! \bra{\ell}.
\end{equation}
Notice that the inner sum (over $k$) would simplify greatly if we removed the factor of $k^2$. That is, for $0 \le j, \ell \le N-1$:
\begin{equation}
\sum_{k=-N/2}^{N/2-1} e^{i 2\pi k(j-\ell)/N}
=
e^{i \pi (\ell-j)}  \sum_{k=0}^{N-1} \Big( e^{i 2\pi (j-\ell)/N} \Big)^k
=
N \delta_{j \ell},
\end{equation}
so we can express the $N\times N$ identity matrix $I$ as
\begin{equation}
I = \frac{1}{N} \sum_{j,\ell=0}^{N-1} \left[ \sum_{k=-N/2}^{N/2-1} e^{i 2\pi k(j-\ell)/N} \right] \ket{j} \! \bra{\ell} .
\end{equation}
Therefore, the first term of $K$ can be written as
\begin{align}
\frac{1}{2}\big( SFD_K F^\dag S^\dag \big) \otimes I^{\otimes (d-1)}
=
\frac{1}{2N^d} \left( \frac{2\pi}{L} \right)^2 \; 
\sum_{j, \ell \in \{0, \dots, N-1\}^d } \left[
\sum_{k \in \{-\frac{N}{2}, \dots, \frac{N}{2}-1\}^d}
k_1^2 \; e^{i 2\pi k \cdot (j-\ell)/N} 
\right] \ket{j}\!\bra{\ell},
\end{align}
or using the notation from Eq.~\eqref{eq:|x>_notation}:
\begin{align}
\frac{1}{2}
\big( SFD_K F^\dag S^\dag \big) \otimes I^{\otimes (d-1)}
=
\frac{1}{2N^d} \sum_{x, x' \in \mathbb{X}_N^d} \; \sum_{k \in \mathbb{K}_N^d}
\; k_1^2 \,  e^{i k \cdot (x-x')} \ket{x} \!\bra{x'}.
\end{align}
An analogous result holds for every term in the Kronecker sum $\big( SFD_K F^\dag S^\dag \big)^{\oplus d}$, just with $k_1^2 \rightarrow k_j^2$ in the $j^\text{th}$ term, so
\begin{equation}
K
=
\frac{1}{2N^d} \sum_{j=1}^d \; \sum_{x, x' \in \mathbb{X}_N^d} \; \sum_{k \in \mathbb{K}_N^d}
\; k_j^2 \,  e^{i k \cdot (x-x')} \ket{x} \!\bra{x'}
=
\frac{1}{2N^d} \sum_{x, x' \in \mathbb{X}_N^d} \; \sum_{k \in \mathbb{K}_N^d}
\; \|k\|^2 \,  e^{i k \cdot (x-x')} \ket{x} \!\bra{x'}.
\end{equation}
Finally, we can express any function $f\in \mathsf{B}$ in terms of its nontrivial Fourier coefficients $\widehat{f}(k)$ as
\begin{equation}
f(x) = \sum_{k \in \mathbb{K}_N^d} \widehat{f}(k) \, e^{i k \cdot x}
\end{equation}
for all $x \in \mathbb{T}^d$. Equivalently, we can express these coefficients in terms of $f$ evaluated on grid points $x\in \mathbb{X}_N^d$ as
\begin{equation}
\widehat{f}(k)
=
\frac{1}{N^d} \sum_{x \in \mathbb{X}_N^d} f(x) \, e^{-i k \cdot x}
\end{equation}
for $k \in \mathbb{K}_N^d$. Therefore,
\begin{align}
-\frac{1}{2} \nabla^2 f(x) 
=  
\frac{1}{2} \sum_{k \in \mathbb{K}_N^d} \|k\|^2 \widehat{f}(k) \, e^{i k \cdot x}
=
\frac{1}{2N^d}\sum_{x' \in \mathbb{X}_N^d} \sum_{k \in \mathbb{K}_N^d}  \|k\|^2 e^{i k \cdot (x-x')} f(x')
\end{align}
for any $x \in \mathbb{T}^d$. Evaluating this equation at each grid point $x \in \mathbb{X}_N^d$ and writing the resulting linear system in vector form gives the desired result:
\begin{equation}
-\frac{1}{2} \sum_{x \in \mathbb{X}_N^d} (\nabla^2 f)(x) \ket{x}
=
\frac{1}{2N^d} \sum_{k \in \mathbb{K}_N^d} \sum_{x, x' \in \mathbb{X}_N^d} \|k\|^2 e^{i k \cdot (x-x')} f(x') \ket{x} 
=
K \sum_{x \in \mathbb{X}_N^d} f(x) \ket{x}.
\end{equation}
\end{proof}

\subsection{Characterizing the projector $\mathcal{P}$ onto $\mathsf{B}$}

\begin{lemma}[Aliasing formula] Let $f \in \mathsf{C}^a$ for $a > d/2$, then the nontrivial Fourier coefficients of $\mathcal{P}f$ are related to those of $f$ by
\begin{equation}
\widehat{\mathcal{P}f} (k) = \sum_{k' \in \frac{2\pi}{L} \mathbb{Z}^d} \widehat{f} \, (k + N k'), \qquad \mathrm{for}\text{ } k\in \mathbb{K}_N^d.
\label{eq:aliasing}
\end{equation}
\label{lemma:aliasing}
\end{lemma}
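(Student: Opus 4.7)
The plan is to read off the Fourier coefficients of $\mathcal{P}f$ directly from its definition, insert the Fourier series of $f$, swap the order of summation, and evaluate the resulting discrete exponential sum using a geometric-series identity modulo $N$.

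First, I would observe that the definition \eqref{eq:P_def} of $\mathcal{P}f$ already exhibits it as an explicit Fourier series supported on $\mathbb{K}_N^d$: rearranging the double sum gives
\begin{equation}
(\mathcal{P}f)(x) = \sum_{k\in \mathbb{K}_N^d} \left[ \frac{1}{N^d} \sum_{x' \in \mathbb{X}_N^d} f(x') \, e^{-i k \cdot x'} \right] e^{i k \cdot x},
\end{equation}
so by uniqueness of Fourier series one reads off
\begin{equation}
\widehat{\mathcal{P}f}(k) = \frac{1}{N^d} \sum_{x' \in \mathbb{X}_N^d} f(x') \, e^{-i k \cdot x'}, \qquad k \in \mathbb{K}_N^d.
\end{equation}
This reduces the lemma to showing that the discrete sum on the right equals the aliasing sum on the right of \eqref{eq:aliasing}.

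Next, I would invoke the hypothesis $f \in \mathsf{C}^a$ with $a>d/2$, which by the result cited just before Lemma~\ref{lemma:aliasing} guarantees that the Fourier series $f(x') = \sum_{k'' \in \frac{2\pi}{L}\mathbb{Z}^d} \widehat{f}(k'')\, e^{i k'' \cdot x'}$ converges absolutely and uniformly. This justifies substituting the Fourier series into the expression for $\widehat{\mathcal{P}f}(k)$ and interchanging the (finite) sum over $x' \in \mathbb{X}_N^d$ with the (absolutely convergent) sum over $k''$, yielding
\begin{equation}
\widehat{\mathcal{P}f}(k) = \sum_{k'' \in \frac{2\pi}{L}\mathbb{Z}^d} \widehat{f}(k'') \cdot \frac{1}{N^d} \sum_{x' \in \mathbb{X}_N^d} e^{i(k''-k)\cdot x'}.
\end{equation}

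The heart of the proof is then to evaluate the inner discrete exponential sum. Writing $x' = (L/N)(j_1,\dots,j_d)$ with $j_i \in \{0,\dots,N-1\}$ and $k''-k = (2\pi/L)(m_1,\dots,m_d)$ with $m_i \in \mathbb{Z}$, the sum factorizes as $\prod_{i=1}^d \sum_{j_i=0}^{N-1} e^{i 2\pi m_i j_i / N}$; each factor is a finite geometric series that equals $N$ when $N \mid m_i$ and $0$ otherwise. Hence the inner sum equals $N^d$ precisely when $k''-k \in (2\pi N/L)\mathbb{Z}^d$, and vanishes otherwise. Reindexing the surviving terms by writing $k'' = k + N k'$ with $k' \in \frac{2\pi}{L}\mathbb{Z}^d$ yields exactly \eqref{eq:aliasing}.

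The only subtle step is the swap of summations, which is clean thanks to absolute uniform convergence from the smoothness hypothesis; everything else is routine algebra and a geometric-series evaluation analogous to the one already used in Lemma~\ref{lemma:parseval_X}. I do not expect any serious obstacle here, but I would take care to state explicitly where the hypothesis $a > d/2$ is used, since that is what underwrites the Fubini-style interchange and hence the entire calculation.
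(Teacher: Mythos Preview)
Your proposal is correct and follows essentially the same route as the paper: read off $\widehat{\mathcal{P}f}(k)$ from the definition of $\mathcal{P}$, substitute the (absolutely/uniformly convergent) Fourier series of $f$ into the grid sum, and evaluate the resulting factorized geometric series to pick out the terms with $k''-k \in \tfrac{2\pi N}{L}\mathbb{Z}^d$. If anything, you are slightly more explicit than the paper about where the hypothesis $a>d/2$ enters to justify the swap of summations.
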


\begin{proof}
Notice from the definition of $\mathcal{P}$ in Eq.~\eqref{eq:P_def} that the nontrivial Fourier coefficients of $\mathcal{P}f$ are
\begin{equation}
\widehat{\mathcal{P}f}(k)
=\frac{1}{N^d} \sum_{x' \in \mathbb{X}_N^d} f(x') e^{-i k\cdot x'}, \qquad k \in \mathbb{K}_N^d.
\end{equation}
Since $f \in \mathsf{C}^a$ for $a >d/2$, the Fourier series of $f$ converges uniformly, so for every grid point $x' \in \mathbb{X}_N^d$ we have
\begin{equation}
f(x')=\sum_{k' \in \frac{2\pi}{L} \mathbb{Z}^d} \widehat{f}\,(k') e^{i k' \cdot x'}.
\end{equation}
Substituting this equation into the preceding one gives
\begin{equation}
\widehat{\mathcal{P}f}(k) 
=
\sum_{k' \in \frac{2\pi}{L} \mathbb{Z}^d} \widehat{f}\,(k') 
\left[  \frac{1}{N^d} \sum_{x' \in \mathbb{X}_N^d} e^{i (k'-k) \cdot x'}
\right]
=
\sum_{k' \in \frac{2\pi}{L} \mathbb{Z}^d} \widehat{f}\,(k') 
\prod_{j=1}^d \left[ \frac{1}{N} \sum_{\ell=0}^{N-1} e^{i(k_j'-k_j) \ell L/N }\right].
\label{eq:aliasing_sum}
\end{equation}
Notice that $e^{i(k_j'-k_j) L/N} = 1$ if and only if $(k_j-k_j')\in \frac{2\pi N}{L} \mathbb{Z}$, so the last term in square brackets evaluates to
\begin{equation}
\frac{1}{N} \sum_{\ell=0}^{N-1} \left( e^{i(k_j'-k_j) L/N } \right)^\ell
=
\begin{cases}
1 & \text{if } (k_j-k_j')\in \frac{2\pi N}{L} \mathbb{Z} \\
\frac{e^{i(k_j'-k_j) L}-1}{N[e^{i(k_j'-k_j) L/N }-1]} & \text{otherwise}
\end{cases}
=
\begin{cases}
1 & \text{if } (k_j-k_j')\in \frac{2\pi N}{L} \mathbb{Z} \\
0 & \text{otherwise},
\end{cases}
\end{equation}
since $k_j, k_j'\in \frac{2\pi}{L}\mathbb{Z}$, so $e^{i(k_j'-k_j) L}=1$. Therefore, 
\begin{equation}
\widehat{\mathcal{P}f}(k) 
=
\sum_{k'\, |\, (k-k') \in \frac{2\pi N}{L} \mathbb{Z}^d} \widehat{f}\,(k') 
=
\sum_{k' \in \frac{2\pi}{L} \mathbb{Z}^d}  \widehat{f}\,(k + N k'). 
\end{equation}
\end{proof}

\begin{lemma}
Let $y \in \mathbb{R}^d$ be a vector with $\|y\|_\infty \le 1/2$ for $d \ge 1$, and let $q \ge d+2\pi$, then
\begin{equation}
\sum_{x\in \mathbb{Z}^d \setminus \{0\} } \frac{1}{ \| x+y \|^q}  \le \Big[ (1+\sqrt{d}) \sqrt{d+3} \Big]^q.
\end{equation}
\label{lemma:sum}
\end{lemma}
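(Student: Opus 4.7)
The plan is to combine a pointwise lower bound on $\|x+y\|$ with an integral comparison that converts the lattice sum into a radial integral, so that the two factors $(1+\sqrt{d})^q$ and $(d+3)^{q/2}$ in the target bound arise naturally from these two steps.

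First, since $x \in \mathbb{Z}^d \setminus \{0\}$ has $\|x\|_\infty \ge 1$ and $\|y\|_\infty \le 1/2$, a coordinatewise estimate (for each $i$ with $x_i \ne 0$, $|x_i + y_i| \ge |x_i| - 1/2 \ge |x_i|/2$; for $x_i = 0$, $(x_i+y_i)^2 \ge 0 = x_i^2/4$) gives the universal bound $\|x+y\| \ge \|x\|/2$, and in particular $\|x+y\| \ge 1/2$. In the regime $\|x\| \ge (1+\sqrt{d})/2$ the sharper triangle-inequality estimate $\|x+y\| \ge \|x\|-\sqrt{d}/2$ rearranges to $\|x+y\| \ge \|x\|/(1+\sqrt{d})$, which is the bound that will ultimately produce the factor $(1+\sqrt{d})^q$.

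Second, I would split $\mathbb{Z}^d\setminus\{0\}$ into a ``near'' set $A = \{x : 0<\|x\| < (1+\sqrt{d})/2\}$ and a ``far'' set $B = \{x : \|x\|\ge (1+\sqrt{d})/2\}$. Terms from $A$ are controlled by a trivial count, $|A| \le (2+\sqrt{d})^d$, against the universal bound $2^q$. Terms from $B$ inherit the factor $(1+\sqrt{d})^q$, reducing the problem to bounding $\sum_{x\in B}\|x\|^{-q}$. For this I would use a unit cube-packing argument: each cube $Q_x = x + [-1/2,1/2]^d$ has volume $1$, and for $z \in Q_x$ the estimate $\|z\| \le \|x\| + \sqrt{d}/2 \le \|x\|(1 + \sqrt{d}/2)$ (valid since $\|x\| \ge 1$) gives $\|x\|^{-q} \le (1+\sqrt{d}/2)^q \|z\|^{-q}$. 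Integrating this over $Q_x$ and summing majorizes the discrete sum by $(1+\sqrt{d}/2)^q \int_{\|z\|\ge 1/2}\|z\|^{-q}\,dz$, using $Q_0 \supseteq B(0,1/2)$. The radial integral evaluates in spherical coordinates to $\omega_{d-1}\,2^{q-d}/(q-d)$ with $\omega_{d-1}=2\pi^{d/2}/\Gamma(d/2)$.

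Third, I would assemble the contributions and compare to $[(1+\sqrt{d})\sqrt{d+3}]^q = (1+\sqrt{d})^q (d+3)^{q/2}$. The hypothesis $q \ge d + 2\pi$ provides the margin used to absorb the polynomial-in-$d$ prefactors ($\omega_{d-1}/\Gamma(d/2)$, the $1/(q-d)$, the $2^{q-d}$, and the near-shell term $(2+\sqrt{d})^d 2^q$) into the slack factor $(d+3)^{q/2}$, by Stirling-type estimates on $\Gamma(d/2)$ together with elementary comparisons.

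The main obstacle I expect is this last bookkeeping: the target constant $(1+\sqrt{d})\sqrt{d+3}$ is sharp enough that one must carefully charge each source of slack to the right factor, and small-dimensional cases (say $d \in \{1,2,3\}$) may need to be verified by direct numerical comparison, since Stirling-type asymptotics are loosest there while the quantitative margin $q-d \ge 2\pi$ remains ample.
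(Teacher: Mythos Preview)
Your overall architecture (reduce to the centered sum via $\|x+y\|\ge \|x\|/(1+\sqrt d)$, then control $\sum_{x\ne 0}\|x\|^{-q}$ by an integral comparison with unit cubes) matches the paper's. However, the specific integral comparison you propose loses too much to reach the stated constant, and this cannot be fixed by bookkeeping.

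Concretely, your far-sum bound is
\[
\sum_{x\in B}\|x\|^{-q}\;\le\;(1+\tfrac{\sqrt d}{2})^{q}\int_{\|z\|\ge 1/2}\|z\|^{-q}\,dz
\;=\;(1+\tfrac{\sqrt d}{2})^{q}\,\omega_{d-1}\,\frac{2^{q-d}}{q-d},
\]
so after multiplying by $(1+\sqrt d)^q$ and dividing by the target $[(1+\sqrt d)\sqrt{d+3}]^{q}$ you are left with
\[
\Big(\tfrac{2+\sqrt d}{\sqrt{d+3}}\Big)^{q}\cdot \frac{\omega_{d-1}}{2^{d}(q-d)}.
\]
The base $(2+\sqrt d)/\sqrt{d+3}$ is strictly greater than $1$ for every $d\ge 1$ (since $(2+\sqrt d)^2=d+4+4\sqrt d>d+3$), so the first factor grows exponentially in $q$ while the second only decays like $1/q$. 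Hence your bound exceeds the target for all large $q$; the slack $q\ge d+2\pi$ and Stirling estimates on $\omega_{d-1}$ cannot absorb an exponential-in-$q$ loss. The near/far split is also unnecessary: the inequality $\|x+y\|\ge \|x\|/(1+\sqrt d)$ in fact holds for \emph{all} $x\in\mathbb Z^d\setminus\{0\}$, because $\|x+y\|\ge 1/2$ gives $(1+\sqrt d)\|x+y\|\ge \sqrt d/2+\|x+y\|\ge \|x\|$.

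The paper avoids the blow-up by regularizing the integrand before comparing. It uses $\|x\|^{-q}\le \big(2/(\|x\|^2+1)\big)^{q/2}$ (valid since $\|x\|\ge 1$) together with the cube estimate $\|z\|^2+1\le \tfrac{d+3}{2}(\|x\|^2+1)$ for $z\in x+[-\tfrac12,\tfrac12]^d$, so that
\[
\|x\|^{-q}\;\le\;\int_{C_x}\Big(\tfrac{d+3}{\|z\|^2+1}\Big)^{q/2}dz.
\]
Summing over $x\ne 0$ and extending to all of $\mathbb R^d$ gives $(d+3)^{q/2}\int_{\mathbb R^d}(1+\|z\|^2)^{-q/2}dz$, where the integral is finite (no truncation needed) and is shown to be $\le 1$ for $d\ge 2$ under $q\ge d+2\pi$ via a Gamma-ratio bound; $d=1$ is handled directly. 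The key point is that the $2^{q/2}$ from the first step cancels the $2^{q/2}$ in the cube comparison, leaving exactly the factor $(d+3)^{q/2}$ with no residual exponential loss.
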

\begin{proof}
We will analyze the $y=0$ case first, and use the result to derive a bound for any $y$. For $x \in \mathbb{Z}^d \setminus \{0\}$, $\|x\| \ge 1$, so $\|x\|^2 \ge (\|x\|^2+1)/2 > 0$, and therefore \\
\begin{equation}
\frac{1}{\|x\|^q} \le \left( \frac{2}{\|x\|^2 + 1} \right)^{q/2}.
\end{equation}
Define $C_x = [-\frac{1}{2}, \frac{1}{2}]^d  \,+ \, x $, the unit cube centered around this $x$. For any $z=(z_1, \dots, z_d)^\top \in C_x$,
\begin{equation}
z_i^2 \le \Big(|x_i| + \frac{1}{2}\Big)^2 \le 2x_i^2 + \frac{1}{2},
\end{equation}
so
\begin{equation}
\|z\|^2+1 \le 2 \|x\|^2+ \frac{d}{2}+1 \le \Big(\frac{d+3}{2}\Big) \Big( \|x\|^2 + 1 \Big).
\label{eq:inf_sum_bound}
\end{equation}
Therefore, 
\begin{align}
\sum_{x\in \mathbb{Z}^d\setminus \{ 0\} } \frac{1}{\|x\|^q}
&\le 
\sum_{x\in \mathbb{Z}^d\setminus \{ 0\} } \left( \frac{2}{\|x\|^2 + 1} \right)^{q/2} \\ 
&= 
\sum_{x\in \mathbb{Z}^d\setminus \{ 0\} } \int_{C_x} \left( \frac{2}{\|x\|^2 + 1} \right)^{q/2} dz \\
&\le 
\sum_{x\in \mathbb{Z}^d\setminus \{ 0\} } \int_{C_x} \left( \frac{d+3}{\|z\|^2 + 1} \right)^{q/2} dz \\ 
& \le 
\sum_{x\in \mathbb{Z}^d } \int_{C_x} \left( \frac{d+3}{\|z\|^2 + 1} \right)^{q/2} dz \\ 
&= 
\int_{\mathbb{R}^d} \left( \frac{d+3}{\|z\|^2 + 1} \right)^{q/2} dz \\ 
&=
\frac{2 \pi^{d/2}}{\Gamma(d/2)} \int_0^\infty \left( \frac{d+3}{r^2+1} \right)^{q/2} r^{d-1}dr \\
&= 
(d+3)^{q/2} \, \pi^{d/2} \, \frac{\Gamma[(q-d)/2]}{\Gamma[q/2]} \qquad \qquad (\text{since }d<q),
\end{align}
where we used spherical coordinates to evaluate the integral over $\mathbb{R}^d$. Then using the identity \cite{NIST:DLMF} (5.6.8)
\begin{equation}
\left| \frac{\Gamma(w+a)}{\Gamma(w+b)} \right| \le |w|^{a-b} 
\qquad \qquad \text{for } \text{Re}(w)>0 \text{ and } 0 \le a \le b-1,
\end{equation}
with $w=(q-d)/2$, $a=0$, and $b=d/2 \ge 1$, gives
\begin{align}
\sum_{x\in \mathbb{Z}^d\setminus \{ 0\} } \frac{1}{\|x\|^q}
& \le 
(d+3)^{q/2}  \left(\frac{2\pi}{q-d} \right)^{d/2} \le (d+3)^{q/2} 
\end{align}
for $d\ge 2$, since $q \ge d + 2 \pi$ by assumption. We will now use this result to prove Eq.~\eqref{eq:inf_sum_bound} for $d \ge 2$, then show that the resulting bound also holds for $d=1$. For any $d\ge 1$ we have
\begin{equation}
\frac{1}{\| x+y \|} \le \frac{1+\sqrt{d}}{\sqrt{d}/2 + \|x+y \|} \le \frac{1+\sqrt{d}}{\|x\|},
\end{equation}
where the first step follows from $\|x+y\| \ge 1/2$, and the second from $\|y\| \le \sqrt{d} \|y\|_\infty \le \sqrt{d}/2$, therefore $\|x+y\| \ge \|x\| - \|y \| \ge \|x\| - \sqrt{d}/2$. For $d \ge 2$ this gives Eq.~\eqref{eq:inf_sum_bound}:
\begin{equation}
\sum_{x\in \mathbb{Z}^d \setminus \{0\} } \frac{1}{ \| x+y \|^q} 
\le 
(1+\sqrt{d})^q \sum_{x\in \mathbb{Z}^d \setminus \{0\} } \frac{1}{ \| x \|^q} 
\le 
\Big[ (1+\sqrt{d}) \sqrt{d+3} \Big]^q.
\end{equation}

To complete the proof, we show that the $d=1$ case has the same upper bound:
\begin{align}
\sum_{x \in \mathbb{Z}\setminus\{0\}}
\frac{1}{|x+y|^q}
&= 
\sum_{x=1}^\infty \frac{1}{|x+y|^q}
+ \sum_{x=1}^\infty \frac{1}{|x-y|^q} \\
& \le 
2 \sum_{x=1}^\infty \frac{1}{|x-1/2|^q} \\
& \le 
2 \left[ 2^q + \int_1^\infty \frac{dx}{(x-1/2)^q} \right] \\ 
&=
\frac{2^q(2q-1)}{q-1} \; \le \; 4^q.
\end{align}
\end{proof}

\begin{lemma}
Let $f \in \mathsf{C}^{2(m+s)}$ for integers $m \ge 0$ and $s \ge  (d+7)/4$, then:
\begin{equation}
\big \| \nabla^{2m} ( \mathcal{P} f-f) \big \|_{\mathsf{L}^2} \le \left( \frac{Ld}{N} \right)^{2s} \big\| \nabla^{2(m+s)} f \big\|_{\mathsf{L}^2}.
\end{equation}
\label{lemma:Pf-f}
\end{lemma}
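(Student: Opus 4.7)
The plan is to work in Fourier space, split the reciprocal lattice $(2\pi/L)\mathbb{Z}^d$ into the low-frequency box $\mathbb{K}_N^d$ and its complement, and estimate each piece separately. Applying Parseval's theorem to $g=\mathcal{P}f-f$, whose Fourier coefficients are given by the aliasing formula (Lemma~\ref{lemma:aliasing}) on $\mathbb{K}_N^d$ and by $\widehat{g}(k)=-\widehat{f}(k)$ off it, gives
\begin{equation}
\|\nabla^{2m}(\mathcal{P}f-f)\|_{\mathsf{L}^2}^2 = L^d\sum_{k\in\mathbb{K}_N^d}\|k\|^{4m}\bigg|\sum_{k'\neq 0}\widehat{f}(k+Nk')\bigg|^2 + L^d\sum_{k\notin\mathbb{K}_N^d}\|k\|^{4m}|\widehat{f}(k)|^2.
\end{equation}

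For the high-frequency tail, any $k\notin\mathbb{K}_N^d$ has $|k_i|\geq \pi N/L$ for some $i$, so $\|k\|^{4m}\leq (L/\pi N)^{4s}\|k\|^{4(m+s)}$ and the second sum is at most $(L/(\pi N))^{4s}\|\nabla^{2(m+s)}f\|_{\mathsf{L}^2}^2$. For the aliasing sum I would apply Cauchy--Schwarz with weight $\|k+Nk'\|^{2(s+m)}$ to obtain
\begin{equation}
\bigg|\sum_{k'\neq 0}\widehat{f}(k+Nk')\bigg|^2 \leq \bigg(\sum_{k'\neq 0}\|k+Nk'\|^{-4(s+m)}\bigg)\bigg(\sum_{k'\neq 0}\|k+Nk'\|^{4(s+m)}|\widehat{f}(k+Nk')|^2\bigg),
\end{equation}
and then invoke the componentwise inequality $\|k+Nk'\|\geq \|k\|$, valid whenever $k\in\mathbb{K}_N^d$ and $k'\neq 0$: coordinates with $k'_i=0$ give $k_i^2$ on both sides, while coordinates with $k'_i\neq 0$ satisfy $|k_i+Nk'_i|\geq |Nk'_i|-|k_i|\geq \pi N/L\geq |k_i|$. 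This lets me absorb the Parseval factor $\|k\|^{4m}$ into the denominator: $\|k\|^{4m}/\|k+Nk'\|^{4(s+m)}\leq \|k+Nk'\|^{-4s}$.

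The residual sum $\sum_{k'\neq 0}\|k+Nk'\|^{-4s}$ reduces, via the substitution $k+Nk'=(2\pi N/L)(x+y)$ with $x\in\mathbb{Z}^d\setminus\{0\}$ and $\|y\|_\infty\leq 1/2$, to Lemma~\ref{lemma:sum} with $q=4s\geq d+7\geq d+2\pi$, giving $(LC_d/(2\pi N))^{4s}$ for $C_d=(1+\sqrt{d})\sqrt{d+3}$. Swapping the $k,k'$ summations then repackages the remaining indices into the tail $\sum_{k''\notin\mathbb{K}_N^d}\|k''\|^{4(s+m)}|\widehat{f}(k'')|^2\leq L^{-d}\|\nabla^{2(m+s)}f\|_{\mathsf{L}^2}^2$. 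Combining the two contributions and using the elementary estimate $(1+\sqrt{d})\sqrt{d+3}+2\leq 2\pi d$ for $d\geq 1$, together with $a^p+b^p\leq (a+b)^p$ for $p\geq 1$, yields $(C_d/(2\pi))^{4s}+(1/\pi)^{4s}\leq d^{4s}$, which produces the clean bound $\|\nabla^{2m}(\mathcal{P}f-f)\|_{\mathsf{L}^2}\leq (Ld/N)^{2s}\|\nabla^{2(m+s)}f\|_{\mathsf{L}^2}$. The main obstacle is precisely this componentwise lower bound: a crude estimate $\|k\|\leq \sqrt{d}\pi N/L$ would introduce a parasitic $d^{2m}$ factor that cannot be reabsorbed into $(Ld/N)^{2s}$ for $m>0$, so the sharper per-coordinate comparison is essential to get a bound whose exponent in $N$ is independent of $m$.
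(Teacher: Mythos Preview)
Your proof is correct and follows essentially the same route as the paper's: split via Parseval into an aliasing term on $\mathbb{K}_N^d$ and a tail term off it, handle the aliasing term by Cauchy--Schwarz together with the inequality $\|k\|\le\|k+Nk'\|$ (which the paper uses without comment, and which you correctly justify componentwise) and Lemma~\ref{lemma:sum} with $q=4s$, then combine the two pieces via $a^p+b^p\le(a+b)^p$ and an elementary bound on $(1+\sqrt{d})\sqrt{d+3}$. The only cosmetic differences are in the constants carried through Term~2 (you use $\|k\|\ge\pi N/L$, the paper writes $\pi N\sqrt{d}/L$) and in the final arithmetic, but both reach the same clean $(Ld/N)^{2s}$ prefactor.
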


\begin{proof}
The Fourier series of $\nabla^2 f$ is 
\begin{equation}
(\nabla^2 f)(x) = \sum_{k\in \frac{2\pi}{L} \mathbb{Z}^d} \big( - \|k\|^2 \big) \widehat{f}\, (k) \, e^{i k\cdot x},
\end{equation}
so by induction, that of $\nabla^{2m} f$ is
\begin{equation}
(\nabla^{2m} f)(x) = \sum_{k\in \frac{2\pi}{L} \mathbb{Z}^d} \big( - \|k\|^2 \big)^m \widehat{f}\, (k) \, e^{i k\cdot x}.
\end{equation}
Term-wise differentiation is justified since $f \in \mathsf{C}^{2(m+s)}$ for positive $s$. Similarly, the Fourier series of $\nabla^{2m} \mathcal{P}f$ is
\begin{equation}
\big (\nabla^{2m} \mathcal{P} f \big)(x) = \sum_{k\in \mathbb{K}_N^d} \big( - \|k\|^2 \big)^m \, \widehat{\mathcal{P} f}\, (k) \, e^{i k\cdot x},
\end{equation}
so that of $\nabla^{2m}(\mathcal{P}f-f)$ is
\begin{equation}
\big[ \nabla^{2m} (\mathcal{P}f-f) \big](x)
=
\sum_{k \in \mathbb{K}_N^d} \Big( - \|k\|^2 \Big)^m \Big[  \widehat{\mathcal{P}f}\,(k) - \widehat{f}(k) \Big] \, e^{i k\cdot x}
-
\sum_{k \in (\frac{2\pi}{L} \mathbb{Z}^d) \setminus \mathbb{K}_N^d} \Big( - \|k\|^2 \Big)^m  \widehat{f}(k) \, e^{i k\cdot x}.
\end{equation}
Since $\nabla^{2m} ( \mathcal{P} f-f)$ is continuous and therefore square-integrable, we can express its norm using Parseval's theorem:
\begin{equation}
\big \| \nabla^{2m} ( \mathcal{P} f-f) \big \|_{\mathsf{L}^2}^2
=
\underbrace{L^d \vphantom{\sum_{k \in (\frac{2\pi}{L} \mathbb{Z}^d) \setminus \mathbb{K}_N^d}} \sum_{k \in \mathbb{K}_N^d}  \|k\|^{4m} \; \Big|  \widehat{\mathcal{P}f}\,(k) - \widehat{f}(k) \Big|^2}_\text{Term 1 (aliasing)}
+
\underbrace{L^d \sum_{k \in (\frac{2\pi}{L} \mathbb{Z}^d) \setminus \mathbb{K}_N^d}  \|k\|^{4m} \; \big|  \widehat{f}(k) \big|^2}_\text{Term 2 (bandlimiting)},
\end{equation}
where the first sum describes aliasing (high spatial frequencies of $f$ being folded into low frequencies of $\mathcal{P}f$, following Eq.~\eqref{eq:aliasing}), and the second reflects the absence of high frequencies in $\mathcal{P}f$.\\

We will start by bounding the first term, using the aliasing formula in Lemma \ref{lemma:aliasing} to compare $\widehat{\mathcal{P}f}$ and $\widehat{f}$, then Lemma~\ref{lemma:sum} to bound the resulting sum. We are justified in using aliasing formula because $f\in \mathsf{C}^{2(m+s)}$ for $2(m+s) >d/2$.
\begin{align}
\text{Term 1} 
&=
L^d \sum_{k \in \mathbb{K}_N^d}  \|k\|^{4m} \bigg| \widehat{f}(k) - \sum_{k' \in \frac{2\pi}{L} \mathbb{Z}^d} \widehat{f} \, (k+N k') \bigg|^2 \\
&=
L^d \sum_{k \in \mathbb{K}_N^d} \|k\|^{4m}  \bigg| \sum_{k' \in \frac{2\pi}{L} \mathbb{Z}^d \setminus \{0\} } \widehat{f} \, (k+N k') \bigg|^2 \\
&=
L^d \sum_{k \in \mathbb{K}_N^d} \Bigg| \sum_{k' \in \frac{2\pi}{L} \mathbb{Z}^d \setminus \{0\} } \frac{\|k\|^{2m}}{\|k+N k'\|^{2(m+s)}} \, \|k+Nk'\|^{2(m+s)} \, \widehat{f} \, (k+N k') \Bigg|^2 \\
& \le 
L^d \sum_{k \in \mathbb{K}_N^d} \left[ \sum_{\ell \in \frac{2\pi}{L} \mathbb{Z}^d \setminus \{0\} } \frac{\|k\|^{4m}}{\|k +N \ell\|^{4(m+s)}} \right] 
\left[  \sum_{k' \in \frac{2\pi}{L} \mathbb{Z}^d \setminus \{0\} } \|k+Nk'\|^{4(m+s)} \, \big| \widehat{f} \, (k+N k') \big|^2 \right] \\
& \le 
L^d \sum_{k \in \mathbb{K}_N^d} \left[ \sum_{\ell \in \frac{2\pi}{L} \mathbb{Z}^d \setminus \{0\} } \frac{1}{\|k +N \ell\|^{4s}} \right] 
\left[  \sum_{k' \in \frac{2\pi}{L} \mathbb{Z}^d \setminus \{0\} } \|k+Nk'\|^{4(m+s)} \, \big| \widehat{f} \, (k+N k') \big|^2 \right] \\ 
& \le 
L^d \left(\frac{L}{2\pi N}\right)^{4s} \left[ \max_{k \in \mathbb{K}_N^d} \sum_{\ell \in \mathbb{Z}^d \setminus \{0\} } \frac{1}{\|\ell + \frac{k L}{2\pi N}\|^{4s}} \right] \sum_{k \in \mathbb{K}_N^d} \sum_{k' \in \frac{2\pi}{L} \mathbb{Z}^d \setminus \{0\} } \|k+Nk'\|^{4(m+s)} \, \big| \widehat{f} \, (k+N k') \big|^2  \\
& \le 
L^d \left(\frac{L}{2\pi N}\right)^{4s} \Big[ (1+\sqrt{d}) \sqrt{d+3} \Big]^{4s} \sum_{k \in \frac{2\pi}{L} \mathbb{Z}^d} \|k\|^{4(m+s)} \, \big| \widehat{f} \, (k) \big|^2 \\ 
& =
\left(\frac{L}{2\pi N}\right)^{4s} \Big[ (1+\sqrt{d}) \sqrt{d+3} \Big]^{4s} \, \big \| \nabla^{2(m+s)} f \big \|_{\mathsf{L}^2}^2
\end{align}
Here we have used Cauchy-Schwarz to split the initial sum over $k'$, then the fact that $\|\frac{k L}{2\pi N}\|_\infty \le \frac{1}{2}$ for all $k \in \mathbb{K}_N^d$ to apply Lemma~\ref{lemma:sum} with $q=4s$. The use of that lemma is justified since $4s \ge d+7 > d+2\pi$. Similarly, since $\nabla^{2(m+s)} f$ is continuous, applying Parseval's theorem and term-wise differentiation to it is justified.

The second term can be bounded similarly, with the same justification:
\begin{align}
\text{Term 2} &= 
L^d \sum_{k \in (\frac{2\pi}{L} \mathbb{Z}^d) \setminus \mathbb{K}_N^d}  \frac{1}{\|k\|^{4s}} \;  \|k\|^{4(m+s)} \; \big|  \widehat{f}(k) \big|^2 \\
& \le  
L^d \left( \frac{L}{\pi N \sqrt{d}} \right)^{4s} \sum_{k \in (\frac{2\pi}{L} \mathbb{Z}^d) \setminus \mathbb{K}_N^d}  \|k\|^{4(m+s)} \; \big|  \widehat{f}(k) \big|^2 \\ 
& \le 
L^d \left( \frac{L}{\pi N \sqrt{d}} \right)^{4s} \sum_{k \in \frac{2\pi}{L} \mathbb{Z}^d} \|k\|^{4(m+s)} \; \big|  \widehat{f}(k) \big|^2 \\
&=
\left( \frac{L}{\pi N \sqrt{d}} \right)^{4s} \big \| \nabla^{2(m+s)} f \big \|_{\mathsf{L}^2}^2.
\end{align}
Combining the bounds for both terms gives the desired result:
\begin{align}
\big \| \nabla^{2m} ( \mathcal{P} f-f) \big \|_{\mathsf{L}^2}^2 
& \le 
\left[ \frac{L}{\pi N} \left( \frac{ (1+\sqrt{d}) \sqrt{d+3}}{2} + \frac{1}{\sqrt{d}} \right) \right]^{4s}  \big \| \nabla^{2(m+s)} f \big \|_{\mathsf{L}^2}^2 \\ 
& \le 
 \left( \frac{Ld}{ N} \right)^{4s} \big \| \nabla^{2(m+s)} f \big \|_{\mathsf{L}^2}^2.
\end{align}
\end{proof}

\begin{corollary}
Under the same conditions as Lemma~\ref{lemma:Pf-f}:
\begin{equation}
\big\| \big [\nabla^{2m} , \mathcal{P} \big ] f \big\|_{\mathsf{L}^2}  \le 2 \left( \frac{Ld}{N} \right)^{2s} \big\| \nabla^{2(m+s)} f \big\|_{\mathsf{L}^2}.
\end{equation}
\label{cor:commutator_norm}
\end{corollary}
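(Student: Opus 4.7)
The plan is to reduce the commutator bound to two applications of Lemma~\ref{lemma:Pf-f} via a simple add-subtract trick. Concretely, I would write
\begin{equation}
[\nabla^{2m}, \mathcal{P}] f
= \nabla^{2m}\mathcal{P} f - \mathcal{P}\nabla^{2m} f
= \nabla^{2m}(\mathcal{P} f - f) \; - \; \bigl(\mathcal{P} - I\bigr)\bigl(\nabla^{2m} f\bigr),
\end{equation}
so that by the triangle inequality in $\mathsf{L}^2$,
\begin{equation}
\bigl\| [\nabla^{2m}, \mathcal{P}] f \bigr\|_{\mathsf{L}^2}
\le
\bigl\| \nabla^{2m}(\mathcal{P} f - f) \bigr\|_{\mathsf{L}^2}
+
\bigl\| \mathcal{P}(\nabla^{2m} f) - \nabla^{2m} f \bigr\|_{\mathsf{L}^2}.
\end{equation}

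Next, I would bound each summand by directly invoking Lemma~\ref{lemma:Pf-f}. The first term is immediate: under the hypothesis $f\in \mathsf{C}^{2(m+s)}$, the lemma gives $\|\nabla^{2m}(\mathcal{P}f-f)\|_{\mathsf{L}^2}\le (Ld/N)^{2s}\,\|\nabla^{2(m+s)} f\|_{\mathsf{L}^2}$. For the second term, I would apply the lemma to the function $g = \nabla^{2m} f$ with index $m$ replaced by $0$ and the same $s$. This requires $g \in \mathsf{C}^{2(0+s)} = \mathsf{C}^{2s}$, which is guaranteed since $f \in \mathsf{C}^{2(m+s)}$ implies all derivatives of $f$ of total order up to $2(m+s)$ are continuous, so $\nabla^{2m} f \in \mathsf{C}^{2s}$. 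This second application yields $\|\mathcal{P}(\nabla^{2m} f)-\nabla^{2m} f\|_{\mathsf{L}^2} \le (Ld/N)^{2s}\,\|\nabla^{2s}(\nabla^{2m} f)\|_{\mathsf{L}^2} = (Ld/N)^{2s}\,\|\nabla^{2(m+s)} f\|_{\mathsf{L}^2}$.

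Summing the two identical bounds produces the claimed factor of $2$. I do not anticipate any real obstacle: the only subtle points are checking that the regularity hypothesis propagates correctly when swapping $f$ for $\nabla^{2m} f$ (which it does, since $s\ge (d+7)/4$ is unchanged and only $2s$ derivatives of $\nabla^{2m} f$ are needed), and that both invocations of Lemma~\ref{lemma:Pf-f} use the same power $\bigl(Ld/N\bigr)^{2s}$ so that the two summands collapse into a single constant times one Sobolev-type seminorm on the right-hand side.
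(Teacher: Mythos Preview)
Your proposal is correct and follows essentially the same approach as the paper: the same add--subtract decomposition, triangle inequality, and two invocations of Lemma~\ref{lemma:Pf-f} (once as stated, once with $m\to 0$ applied to $\nabla^{2m} f$). If anything, you are slightly more explicit than the paper in verifying that $\nabla^{2m} f\in\mathsf{C}^{2s}$ so the second invocation is legitimate.
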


\begin{proof}
\begin{align}
\big\| \big [\nabla^{2m} , \mathcal{P} \big ] f \big\|_{\mathsf{L}^2}
&=
\big\| \nabla^{2m}(\mathcal{P}f-f) - (\mathcal{P} \nabla^{2m} f - \nabla^{2m}f ) \big\|_{\mathsf{L}^2} \\ 
& \le
\big\| \nabla^{2m}(\mathcal{P}f-f) \big\|_{\mathsf{L}^2}
+
\big\| \mathcal{P} (\nabla^{2m}f) - (\nabla^{2m}f) \big\|_{\mathsf{L}^2}\\ 
& \le 
2 \left( \frac{Ld}{N} \right)^{2s} \big\| \nabla^{2(m+s)} f \big\|_{\mathsf{L}^2}.
\end{align}
\end{proof}

\subsection{Bounding (finite-dimensional) vector differential equations}

\begin{lemma}

Suppose $z_t\in \mathbb{C}^D$ satisfies $i \frac{d}{dt} z_t = H_t \, z_t + b_t$, where $H_t$ is a $D\times D$ Hermitian matrix and $b_t \in \mathbb{C}^D$, for all $t\in[0,T]$. If the elements of $H_t$ and $b_t$ are all continuous function of $t$, then 
\begin{equation}
\| z_T \| \le \| z_0 \| + \int_0^T \|b_t\| \, dt.
\end{equation}
\label{lemma:DE_bound}
\end{lemma}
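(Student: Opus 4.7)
The plan is to control $\|z_t\|$ by differentiating the squared norm $\|z_t\|^2 = z_t^\dagger z_t$ and exploiting the Hermiticity of $H_t$ to cancel the contribution from the homogeneous part of the ODE. Because the entries of $H_t$ and $b_t$ are continuous in $t$, the ODE $i\frac{d}{dt} z_t = H_t z_t + b_t$ has a unique continuously differentiable solution on $[0,T]$, so the manipulations below are all justified.

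First, I would compute
\begin{equation}
\frac{d}{dt} \|z_t\|^2 = \dot z_t^\dagger z_t + z_t^\dagger \dot z_t = i(z_t^\dagger H_t^\dagger + b_t^\dagger) z_t - i z_t^\dagger (H_t z_t + b_t).
\end{equation}
Using $H_t^\dagger = H_t$, the two $H_t$ terms cancel exactly, leaving
\begin{equation}
\frac{d}{dt} \|z_t\|^2 = i \big(b_t^\dagger z_t - z_t^\dagger b_t\big) = -2\,\mathrm{Im}\big(z_t^\dagger b_t\big),
\end{equation}
and Cauchy--Schwarz then gives $\bigl|\frac{d}{dt} \|z_t\|^2 \bigr| \le 2\|b_t\|\,\|z_t\|$.

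Second, I would convert this into a bound on $\|z_t\|$ itself. The only subtlety is that $t \mapsto \|z_t\|$ is not differentiable at times where $z_t = 0$, so I would regularize by introducing, for $\epsilon > 0$, the smooth positive function $\phi_\epsilon(t) = \sqrt{\|z_t\|^2 + \epsilon^2}$. Differentiating gives
\begin{equation}
\dot\phi_\epsilon(t) = \frac{1}{2\phi_\epsilon(t)}\,\frac{d}{dt}\|z_t\|^2, \qquad |\dot\phi_\epsilon(t)| \le \frac{\|z_t\|}{\phi_\epsilon(t)}\,\|b_t\| \le \|b_t\|.
\end{equation}
Integrating this inequality on $[0,T]$ yields $\phi_\epsilon(T) \le \phi_\epsilon(0) + \int_0^T \|b_t\|\,dt$, and sending $\epsilon \to 0$ gives the claimed bound $\|z_T\| \le \|z_0\| + \int_0^T \|b_t\|\,dt$.

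There is no real obstacle here; the entire content is the Hermiticity-based cancellation that makes the homogeneous part norm-preserving, together with the routine Duhamel-style inhomogeneous estimate. The only technical point that requires a brief comment in the write-up is the non-smoothness of $\|z_t\|$ at zeros of $z_t$, which the $\epsilon$-regularization handles cleanly.
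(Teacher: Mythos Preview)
Your argument is correct (modulo an inconsequential sign slip: the derivative of $\|z_t\|^2$ works out to $+2\,\mathrm{Im}(z_t^\dagger b_t)$, not $-2\,\mathrm{Im}(z_t^\dagger b_t)$, but this vanishes under the absolute value). The paper, however, takes a different route: it passes to the interaction picture by setting $U_t = \mathcal{T}\exp\big(-i\int_0^t H_\tau\,d\tau\big)$ and $y_t = U_t^\dagger z_t$, so that $\|y_t\| = \|z_t\|$ and $i\frac{d}{dt}y_t = U_t^\dagger b_t$; then $\|z_T\| - \|z_0\| = \|y_T\| - \|y_0\| \le \|y_T - y_0\| = \big\|\int_0^T \frac{d}{dt}y_t\,dt\big\| \le \int_0^T \|b_t\|\,dt$ follows from the reverse triangle inequality and the fundamental theorem of calculus. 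Both proofs rest on the same fact---Hermiticity of $H_t$ makes the homogeneous evolution norm-preserving---but package it differently: the paper never differentiates a norm and so avoids your $\epsilon$-regularization, at the price of invoking existence and unitarity of the time-ordered propagator $U_t$; your approach is more self-contained in that it needs no auxiliary propagator, at the price of the regularization step to handle possible zeros of $z_t$.
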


\begin{proof}
Let $U_t = \mathcal{T} \exp(-i \int_0^t H_\tau \, d\tau )$ be the unique matrix-valued function satisfying $i \frac{d}{dt} U_t = H_t U_t$ and $U_0=I$, then $U_t$ is unitary and its elements are continuously differentiable functions of $t$. Define $y_t = U_t^\dag z_t$, then $\|y_t\| = \|z_t\|$, and $y_t$ satisfies the differential equation (DE)
\begin{equation}
i\frac{d}{dt} \, y_t
=
\Big (-i \frac{d}{dt} U_t \Big)^\dag z_t + U_t^\dag \Big(i \frac{d}{dt} z_t \Big)
=
(-U_t^\dag H_t) z_t + U_t^\dag (H_t z_t + b_t)
=
U_t^\dag b_t,
\label{eq:DE_for_y}
\end{equation}
so $\| \frac{d}{dt} \, y_t \| = \|b_t \|$. Moreover, since the elements of $U_t$ and $b_t$ are continuous in $t$, so are those of $\frac{d}{dt} y_t$, so we can use the fundamental theorem of calculus to write
\begin{equation}
y_T - y_0 = \int_0^T \frac{d}{dt} \, y_t \, dt.
\end{equation}
Therefore, we can use the reverse triangle inequality to get the desired result:
\begin{align}
\|z_T\| -  \|z_0\| 
=
\|y_T\| -  \|y_0\| 
\le 
\|y_T - y_0\| 
=
\Big \| \int_0^T \frac{d}{dt} \, y_t \, dt \Big \|
\le 
\int_0^T \big \| \frac{d}{dt} \, y_t \big \| \, dt
=
\int_0^T \big \| b_t \big \| \, dt.
\end{align}
\end{proof}

\begin{proof}[Proof of Theorem~\ref{thm:spatial_disc_error}]

Define the real \textbf{normalization factor} $a_t >0$ such that
\begin{equation}
\ket{\psi_t}
=
a_t \left( \frac{L}{N} \right)^{d/2} \sum_{x\in \mathbb{X}_N^d} \Psi_t(x) \ket{x}
\end{equation}
is a unit vector, where $\Psi_t(x) = \sqrt{p_t(x)}$ is the true wavefunction evaluated at the grid point $x \in \mathbb{X}_N^d$. Since $(L/N)^d \sum_{x\in \mathbb{X}_N^d} |\Psi_t(x)|^2$ is a Riemann sum for $\int_{\mathbb{T}^d} p_t(x) dx = 1$, this $a_t$ will be close to 1 for large $N$. Specifically, 
\begin{equation}
a_t^{-1}
=
\left( \frac{L}{N} \right)^{d/2} \sqrt{\sum_{x\in \mathbb{X}_N^d} \big| \Psi_t(x) \big|^2 }
=
\left( \frac{L}{N} \right)^{d/2} \sqrt{\sum_{x\in \mathbb{X}_N^d} \big| (\mathcal{P}\Psi_t)(x) \big|^2 }
= 
\big \| \mathcal{P} \Psi_t \big \|_{\mathsf{L}^2}
\label{eq:at_inv}
\end{equation}
by Lemma \ref{lemma:parseval_X}, since $\mathcal{P} \Psi_t \in \mathsf{B}$ and $(\mathcal{P}\Psi_t)(x) = \Psi_t(x)$ for all $x \in \mathbb{X}_N^d$. Clearly $a_t^{-1} < \infty$. We can also lower-bound it using the reverse triangle inequality and Lemma~\ref{lemma:Pf-f}, since $\Psi_t \in \mathsf{C}^{2(s+1)}$:
\begin{equation}
a_t^{-1} = \big \|\mathcal{P}\Psi_t - \Psi_t + \Psi_t \big \|_{\mathsf{L}^2}
\ge 
1 -  \big \|\mathcal{P}\Psi_t - \Psi_t \big \|_{\mathsf{L}^2}
\ge 
1 - \left( \frac{L d}{N} \right)^{2(s+1)} \big \| \nabla^{2(s+1)} \Psi_t \big \|_{\mathsf{L}^2}.
\end{equation}
Note that $\delta \le T$ implies $c_s (Ld/N)^{2s} \le 1$, and therefore $Ld \le N$ since $c_s \ge 1$. Moreover, $\big \| \nabla^{2(s+1)} \Psi_t \big \|_{\mathsf{L}^2} \le c_s/3$, therefore:
\begin{equation}
a_t^{-1} 
\ge 
1 - \left( \frac{L d}{N} \right)^{2s} \big \| \nabla^{2(s+1)} \Psi_t \big \|_{\mathsf{L}^2}
\ge 
1 - \frac{c_s}{3} \left( \frac{L d}{N} \right)^{2s} 
\ge \frac{2}{3},
\label{eq:a_bound}
\end{equation}
so $0 < a_t \le 3/2$ for any $t \in [0,T]$. We will use this result in the rest of the proof.\\

To bound the distance between the target state $\ket{\psi_T}$ and the state $\ket{\phi_T}$ generated by $H_t^c$, we will use the triangle inequality to split the spatial discretization error into two terms:
\begin{equation}
\big \| \ket{\psi_T} - \ket{\phi_T} \big \|
\le 
\underbrace{\Big \|\ket{\phi_T} - \frac{a_0}{a_T} \, \ket{\psi_T} \Big \|}_\text{Term 1 (drift)}
+
\underbrace{ \Big \| \Big( \frac{a_0}{a_T}-1 \Big) \, \ket{\psi_T}\Big \|}_\text{Term 2 (normalization)}.
\label{eq:spatial_triangle_ineq}
\end{equation}
The first term will describe how $\ket{\phi_t}$ drifts away from an unnormalized version of the target state, while the second term will describe how the normalization factor $a_t$ in the target state changes over time.\\ 

To bound \textbf{Term 1}, we will apply the projector $\mathcal{P}$ to both sides of the continuous-space Schr\"odinger equation $i\frac{\partial}{\partial t} \Psi_t = \hat{\mathcal{H}}_t^c \Psi_t$ to get a DE where both sides are contained in $\mathsf{B}$:
\begin{align}
i \mathcal{P} \frac{\partial}{\partial t} \Psi_t
=
i \frac{\partial}{\partial t} \mathcal{P} \Psi_t
&= 
\frac{i}{2} \mathcal{P} \Big[ V_t \nabla^2 \Psi_t - \nabla^2 (V_t \, \Psi_t) \Big]\\
&=
\frac{i}{2} \Big\{ \mathcal{P}(V_t \nabla^2 \mathcal{P} \Psi_t)  - \nabla^2 \mathcal{P} (V_t \, \Psi_t) \Big\}
+
\frac{i}{2} \Big \{ \mathcal{P} \big( V_t  [\mathcal{P}, \nabla^2] \Psi_t \big) - [\mathcal{P}, \nabla^2](V_t \, \Psi_t) \Big \}. 
\label{eq:projected_schrodinger}
\end{align}
The first step, namely $\mathcal{P} \frac{\partial}{\partial t} \Psi_t = \frac{\partial}{\partial t} \mathcal{P} \Psi_t$, follows immediately from the definition of $\mathcal{P}$ in Eq.~\eqref{eq:P_def}. Likewise, to get the second line, we added and subtracted terms using the fact that for any function $f$, its projection $\mathcal{P}f \in \mathsf{B}$ is fully determined by the values $f(x)$ at grid points $x\in \mathbb{X}_N^d$, and $f(x)=(\mathcal{P}f)(x)$ for all $x\in \mathbb{X}_N^d$. Therefore for any functions $f$ and $g$, $\mathcal{P}(fg) = \mathcal{P}[(\mathcal{P}f)g] = \mathcal{P}[f(\mathcal{P}g)]$.\\

The point of writing the equation this way is that the first term in \eqref{eq:projected_schrodinger} involves only Laplacians of functions in $\mathsf{B}$, for which we have an exact expression (from Lemma~\ref{lemma:laplacian}), while the second term involves $[\mathcal{P}, \nabla^2]$, which is small for large $N$ (in the sense of Corollary~\ref{cor:commutator_norm}).\\

Evaluating the previous equation at a grid point gives the following DE (in the time parameter $t$):
\begin{align}
i \frac{\partial}{\partial t} \Psi_t(x)
&=
-i V_t(x) \;\bra{x} K \! \left [\; \sum_{x'\in \mathbb{X}_N^d} \Psi_t(x') \ket{x'} \right]
+ i \bra{x} K \! \left[ \, \sum_{x'\in \mathbb{X}_N^d} V_t(x') \Psi_t(x') \ket{x'}  \right] \\
& \qquad \qquad \qquad \qquad \; \,
+\frac{i}{2} \Big \{ \mathcal{P} \big( V_t  [\mathcal{P}, \nabla^2] \Psi_t \big)(x) - [\mathcal{P}, \nabla^2](V_t \, \Psi_t)(x) \Big \} \nonumber \\[1ex]
&=
-i \bra{x} \Big( D_{V_t} K - K D_{V_t} \Big) \sum_{x'\in \mathbb{X}_N^d} \Psi_t(x') \ket{x'}
+ \frac{i}{2} \Big \{ \mathcal{P} \big( V_t  [\mathcal{P}, \nabla^2] \Psi_t \big)(x) - [\mathcal{P}, \nabla^2](V_t \, \Psi_t)(x) \Big \},
\end{align}
for any $x \in \mathbb{X}_N^d$. Equivalently, we can combine the DEs for each such $x$, and multiply by $a_0 (L/N)^{d/2}$, to get a single DE for the $N^d$-dimensional vector $\frac{a_0}{a_t} \ket{\psi_t} =  (L/N)^{d/2} \sum_{x\in \mathbb{X}_N^d}\Psi_t(x) \ket{x}$:
\begin{equation}
i \frac{d}{dt} \Big( \frac{a_0}{a_t} \ket{\psi_t} \Big) 
=
H_t^c \Big( \frac{a_0}{a_t} \ket{\psi_t} \Big) 
+ \underbrace{
\frac{i a_0}{2} \left( \frac{L}{N} \right)^{d/2}
\sum_{x\in \mathbb{X}_N^d} \Big \{ \mathcal{P} \big( V_t  [\mathcal{P}, \nabla^2] \Psi_t \big)(x) - [\mathcal{P}, \nabla^2](V_t \, \Psi_t)(x) \Big \} \ket{x}
}_{b_t \; \in \; \mathbb{C}^{N^d}}.
\label{eq:schrodinger+b}
\end{equation}
Notice that this is not a Schr\"odinger equation because of the $b_t$ term, which reflects the fact that the original continuous-space Schr\"odinger equation $i \frac{\partial}{\partial t} \Psi_t = \hat{\mathcal{H}}_t^c \Psi_t$ involves components both inside and outside of $\mathsf{B}$, which are coupled by $\hat{\mathcal{H}}_t^c$. Therefore, the unitary dynamics it generates appear non-unitary when projected onto $\mathsf{B}$. Note, however, that since $\ket{\phi_t}$ obeys the finite-dimensional Schr\"odinger equation $i \frac{d}{d t} \ket{\phi_t} = H_t^c \ket{\phi_t}$ by definition, $z_t = \ket{\phi_t} - \frac{a_0}{a_t} \ket{\psi_t}$ obeys
\begin{equation}
i \frac{d}{dt} z_t = H_t^c z_t - b_t
\end{equation}
with the initial condition $z_0 = \ket{\phi_0} - \ket{\psi_0}$. We do not know much about $b_t$, but we can use Lemma~\ref{lemma:parseval_X} to express its Euclidean norm in terms of an $\mathsf{L}^2$ norm, which we can then bound using Corollary~\ref{cor:commutator_norm}:
\begin{align}
\big \|b_t \big \| 
& \stackrel{(\text{Lemma}~\ref{lemma:parseval_X})}{=} \frac{a_0}{2} \; \Big \| \mathcal{P} \big( V_t  [\mathcal{P}, \nabla^2] \Psi_t \big) - [\mathcal{P}, \nabla^2](V_t \, \Psi_t) \Big\|_{\mathsf{L}^2} \\
&  
\, \stackrel{(\text{Cor.}~\ref{cor:commutator_norm})}{\le} \;\,
a_0 \left( \frac{Ld}{N} \right)^{2s} \bigg[ \big \| V_t \big \|_{\mathsf{L}^\infty} \, \big \|\nabla^{2(s+1)}  \Psi_t \big \|_{\mathsf{L}^2}
+
\big \|\nabla^{2(s+1)} (V_t \, \Psi_t) \big \|_{\mathsf{L}^2}
\bigg] \\ 
& \quad \; \; \; \le 
 \quad \; \; \frac{c_s}{2} \left( \frac{Ld}{N} \right)^{2s},
\end{align}
since the term in square brackets is less than $c_s/3$, and $a_0 \le 3/2$ from Eq.~\eqref{eq:a_bound}. The use of Corollary~\ref{cor:commutator_norm} is justified since $\Psi_t, \; V_t \Psi_t \in \mathsf{C}^{2(s+1)}$. Moreover, since $V_t(x)$ is continuous in $t$, the elements of $H_t^c$ are too. Likewise, since $\nabla^2 \Psi_t(x)$ and $\nabla^2 [V_t(x) \Psi_t(x)]$ are also continuous in $t$, the elements of $b_t$ are too. We can therefore use Lemma~\ref{lemma:DE_bound} to get
\begin{align}
\text{Term 1} 
\;\; = \;\;   
\big \|z_T \big \| \le \big\| \ket{\phi_0} - \ket{\psi_0} \big \| + \int_0^T \big \|b_t \big \| dt
\;\; \le \;\; 
\big\| \ket{\phi_0} - \ket{\psi_0} \big \| + \frac{T c_s}{2} \left( \frac{Ld}{N} \right)^{2s}.
\end{align}

To bound \textbf{Term 2}, following the proof of Lemma~\ref{lemma:DE_bound}, let $U_t$ be the unique matrix-valued function satisfying $i \frac{d}{dt} U_t = H_t^c U_t$ and $U_0=I$, so that $\ket{\phi_t}=U_t \ket{\phi_0}$. Let $y_t = \frac{a_0}{a_t} U_t^\dag \ket{\psi_t}$, then $\|y_t \| = \frac{a_0}{a_t}$ and
\begin{equation}
i \frac{d}{dt} y_t 
= 
\Big(-i \frac{d}{dt} U_t \Big)^\dag \, \Big(\frac{a_0}{a_t} \ket{\psi_t}\Big) + i U_t^\dag \frac{d}{dt} \Big( \frac{a_0}{a_t} \ket{\psi_t} \Big)
=
-U_t^\dag  H_t^c \Big(\frac{a_0}{a_t} \ket{\psi_t}\Big) + U_t^\dag \bigg[ H_t^c \Big(\frac{a_0}{a_t} \ket{\psi_t}\Big) +  b_t\bigg]
= U_t^\dag b_t
\end{equation}
using Eq.~\eqref{eq:schrodinger+b}. Note that $a_0/a_t$ is continuously differentiable since it depends only on $\Psi_t(x)$ at grid points $x\in \mathbb{X}_N^d$ as given by Eq.~\eqref{eq:at_inv}, and $\hat{\mathcal{H}}_t^c \Psi_t(x)$ is continuous so $i \frac{\partial}{\partial t} \Psi_t(x) = \hat{\mathcal{H}}^c_t \Psi_t(x)$ is too. We can use the previous equation to get
\begin{equation}
\Big| \frac{d}{dt} \, \Big(\frac{a_0}{a_t}\Big) \Big|
=
\Big| \frac{d}{dt} \|y_t \| \Big | \le \Big\| \frac{d}{dt} y_t \Big\| = \|b_t \|,
\end{equation}
where we used the reverse triangle inequality to bring the derivative inside the norm. Finally, using the continuity of $\frac{d}{dt} (a_0/a_t)$, we can use the fundamental theorem of calculus to write
\begin{equation}
\text{Term 2} 
=
\left| \int_0^T \frac{d}{dt} \left( \frac{a_0}{a_t} \right)  dt \right|
\le 
\int_0^T  \left| \frac{d}{dt} \left( \frac{a_0}{a_t} \right) \right| dt
\le 
\int_0^T \|b_t \| dt
\le \frac{T c_s}{2} \left( \frac{L d}{N} \right)^{2s}.
\end{equation}

\end{proof}

\section{Hamiltonian simulation details---discretizing time}
\label{app:discretizing_time}

We proceed in two steps: first we approximate $U(t_1,t_0)$ by the exponential of a constant Hamiltonian in Lemma~\ref{lemma:time_dep_approx}, then we approximate the latter unitary by a product formula that can be efficiently implemented in Lemma~\ref{lemma:PF_error}. Combining both results, plus rescaling to minimize the aggregate error bound, immediately proves Theorem~\ref{thm:time_disc_error}.

The simplest way to approximate evolution by a time-dependent Hamiltonian $H(t)$ over an interval $[t_0,t_1]$ is with the time-independent Hamiltonian $H(t_0)$. Note that a simple, randomized variant of this approach \cite{poulin:illusion} gives a similar bound that depends on $\|H(t)\|$ rather than $\|H'(t)\|$, which would simplify the overall error bound, at the cost of introducing mixed states and the diamond norm. We therefore leave this and similar known improvements \cite{berry:2020, an:2022} as subjects for future work.

\begin{lemma} 
\label{lemma:time_dep_approx}
Let $H(t)$ be a finite-dimensional Hamiltonian that is continuously differentiable in $t$, and let $U(t_1, t_0) = \mathcal{T} \exp[-i \int_{t_0}^{t_1} H(t) \, dt]$ for $t_1 \ge t_0$ be the unitary generated by $H(t)$ over the time interval $[t_0, t_1]$. Then $U(t_1, t_0)$ can be approximated by the unitary $e^{-i H(t_0) \Delta t}$, which describes evolution by the time-independent Hamiltonian $H(t_0)$ for time $\Delta t = t_1 - t_0$, with a spectral norm error of
\begin{equation}
\big\| U(t_1, t_0) - e^{-i H(t_0) \Delta t} \big \|
\le 
\frac{\Delta t^2}{2} \max_{t \in [t_0, t_1]} \big \| H'(t) \big\|.
\end{equation}
\end{lemma}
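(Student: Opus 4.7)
The plan is to compare the two unitaries by viewing them as solutions of closely related matrix-valued initial value problems and then applying Duhamel's principle (i.e., variation of parameters). Specifically, fix $t_0$, let $V(t) = U(t,t_0) - e^{-iH(t_0)(t-t_0)}$, and note that both terms agree at $t=t_0$, so $V(t_0)=0$. Differentiating, the left term satisfies $i\tfrac{d}{dt}U(t,t_0) = H(t)U(t,t_0)$, while the right term satisfies $i\tfrac{d}{dt}e^{-iH(t_0)(t-t_0)} = H(t_0) e^{-iH(t_0)(t-t_0)}$. Subtracting and adding $H(t)e^{-iH(t_0)(t-t_0)}$ gives the inhomogeneous equation
\begin{equation}
i\frac{d}{dt}V(t) = H(t)\,V(t) + \bigl[H(t)-H(t_0)\bigr]\,e^{-iH(t_0)(t-t_0)}.
\end{equation}

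Next, I would solve this inhomogeneous linear ODE by variation of parameters. The homogeneous part is governed by the unitary propagator $U(t,s)$, so the unique solution with $V(t_0)=0$ is
\begin{equation}
V(t_1) = -i \int_{t_0}^{t_1} U(t_1,t)\,\bigl[H(t)-H(t_0)\bigr]\,e^{-iH(t_0)(t-t_0)}\,dt.
\end{equation}
Taking spectral norms and using unitarity of $U(t_1,t)$ and $e^{-iH(t_0)(t-t_0)}$, together with submultiplicativity, yields $\|V(t_1)\| \le \int_{t_0}^{t_1} \|H(t)-H(t_0)\|\,dt$.

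Finally, I would bound $\|H(t)-H(t_0)\|$ via the fundamental theorem of calculus: since $H$ is continuously differentiable in $t$, $H(t)-H(t_0) = \int_{t_0}^{t} H'(s)\,ds$, so $\|H(t)-H(t_0)\| \le (t-t_0)\,\max_{s\in[t_0,t_1]}\|H'(s)\|$. Substituting and evaluating the remaining elementary integral $\int_{t_0}^{t_1}(t-t_0)\,dt = \Delta t^{2}/2$ produces exactly the stated bound.

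I don't anticipate a genuine obstacle here; the only subtlety is writing the correct Duhamel formula with the correct sign convention and checking that the integrand is continuous (which follows from $C^{1}$ dependence of $H$ and the smoothness of $U(t,s)$ in both arguments), so that the fundamental theorem of calculus applies and norms can be pulled through the integral by the triangle inequality. A minor verification step is that applying $\tfrac{d}{dt}$ to the proposed expression for $V(t_1)$ does recover the inhomogeneous ODE, which justifies the Duhamel representation in the matrix setting.
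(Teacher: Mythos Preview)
Your proposal is correct and essentially matches the paper's proof. The paper differentiates $A(t)^\dagger B(t)$ (with $A(t)=U(t_0+t,t_0)$, $B(t)=e^{-iH(t_0)t}$) directly via the fundamental theorem of calculus to obtain an integral of $iA(t)^\dagger[H(t_0+t)-H(t_0)]B(t)$, whereas you set up the inhomogeneous ODE for $V(t)$ and invoke Duhamel to get $-i\int U(t_1,t)[H(t)-H(t_0)]e^{-iH(t_0)(t-t_0)}\,dt$; these are the same integral representation up to a unitary factor, and both proofs then apply the fundamental theorem of calculus to $H(t)-H(t_0)$ and evaluate the same elementary double integral.
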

\begin{proof}
Let $A(t) = U(t_0 + t, \, t_0)$ and $B(t)=e^{-iH(t_0) t}$, then $i A'(t)= H(t_0+t) A(t)$ and $i B'(t) = H(t_0) B(t)$, so
\begin{align}
\big\|  U(t_1, t_0) - e^{-i H(t_0) \Delta t} \big \|
&=
\big\| A(\Delta t)^\dagger B(\Delta t) - I\big \| \\ 
&= 
\bigg \| \int_0^{\Delta t} \Big[ A'(t)^\dagger B(t) + A(s)^\dagger B'(t) \Big] dt \bigg \| \\ 
&= 
\bigg \| \, i \int_0^{\Delta t} A(t)^\dagger \big[ H(t_0+t) - H(t_0) \big] B(t) \, dt \bigg \| \\ 
&= 
\bigg \| \, i \int_0^{\Delta t} \int_{t_0}^{t_0+t}  A(t)^\dagger H'(s) B(t) \, ds \, dt \bigg \| \\ 
& \le 
\max_{\tau \in [t_0, t_1]} \big \| H'(\tau) \big\| \int_0^{\Delta t} \int_{t_0}^{t_0+t} ds \, dt,
\end{align}
where we used the fundamental theorem of calculus to get from the first to the second line, and from the third to the forth line. 
\end{proof}

Next, we bound the error from simulating a constant Hamiltonian, given as a commutator, using a product formula. This is somewhat different than the usual setting where a Hamiltonian is given as a sum of Hermitian or unitary terms---here instead, it is expressed as a difference of terms that are neither. The simplest appropriate product formula, from Ref.~\cite{childs_wiebe_comm}, would incur an error of $O(\Delta t^{3/2})$ for a timestep $\Delta t$. Instead, we use the next simplest one, which gives an error of $O(\Delta t^2)$ like Lemma~\ref{lemma:time_dep_approx}, thereby making the two nicely compatible. Note that it is essential here to find exact error bounds, rather than just asymptotic expressions in $\Delta t$, since the prefactors will not be constant, but will instead depend on various problem parameters (e.g., $T$, $d$ etc.) because this Hamiltonian comes from a spatial discretization. We give a crude error bound in the following lemma, which we expect could be slightly improved by adapting the proof techniques from Refs.~\cite{childs:trotter_error_comm, Gluza2024}.

\begin{lemma} \label{lemma:PF_error}
Let $H=i[A,B]$ for Hermitian matrices $A$ and $B$, and let $S(t) = e^{itB} e^{itA} e^{-itB} e^{-itA}$ for $t\ge 0$, then
\begin{equation}
\Big \| S \big (\sqrt{\Delta t/2} \big) S \big(-\sqrt{\Delta t/2} \big) - e^{-iH  \Delta t} \Big\|
\le \left[ \frac{8}{3} \big(\|A\|+\|B\| \big)^4 + \frac{1}{2} \big \| H \big \|^2 \right] \Delta t^2.
\end{equation}
\end{lemma}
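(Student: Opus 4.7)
The plan is to compare $F(s) := S(s) S(-s)$ and $G(s) := e^{-2is^2 H}$ via Taylor expansion about $s = 0$, then set $s = \sqrt{\Delta t/2}$ at the end. The strategy is to show $F$ and $G$ agree through third order in $s$ so that their discrepancy enters only at $s^4 \propto \Delta t^2$, and then bound the fourth-order remainders explicitly.

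For the agreement step, direct differentiation gives $S(0) = I$ and $S'(0) = i(B + A - B - A) = 0$, while the standard group-commutator identity yields $S(s) = I + s^2[A,B] + s^3 S_3 + O(s^4)$ for some operator $S_3$ built from nested commutators of $A$ and $B$. Substituting $-s$ gives $S(-s) = I + s^2[A,B] - s^3 S_3 + O(s^4)$, so the $\pm s^3$ contributions cancel in the product while the $s^2$ terms simply add, yielding
\begin{equation}
F(s) = I + 2 s^2 [A,B] + O(s^4) = I - 2is^2 H + O(s^4).
\end{equation}
Meanwhile $G(s) = I - 2is^2 H + O(s^4)$ from the exponential series, so both share the common cubic Taylor polynomial $P_3(s) := I - 2is^2 H$.

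For the remainders, I would use the integral form of Taylor's theorem to write $\|F(s) - P_3(s)\| \le \tfrac{s^4}{24} \sup_{t \in [0,s]} \|F^{(4)}(t)\|$ and similarly for $G$. Since $F(s)$ is a product of eight unitary factors $e^{i c_j s X_j}$ with $c_j \in \{\pm 1\}$ and $X_j \in \{A,B\}$ (four of each), and $\|\tfrac{d^k}{ds^k} e^{i c_j s X_j}\| \le \|X_j\|^k$ by unitarity, the generalized Leibniz rule followed by the multinomial theorem gives
\begin{equation}
\|F^{(4)}(s)\| \le \sum_{k_1 + \cdots + k_8 = 4} \binom{4}{k_1, \dots, k_8} \prod_{j=1}^{8} \|X_j\|^{k_j} = \bigl(4\|A\| + 4\|B\|\bigr)^4 = 256 \bigl(\|A\| + \|B\|\bigr)^4,
\end{equation}
whence $\|F(s) - P_3(s)\| \le \tfrac{32}{3} s^4 (\|A\|+\|B\|)^4$. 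For $G$, setting $\tau = 2s^2$ and iterating $e^{-iH\tau} - I = -i \int_0^\tau H e^{-iHu}\, du$ twice, then using unitarity of the exponentials, yields $\|e^{-iH\tau} - I + i\tau H\| \le \tfrac{1}{2}\tau^2 \|H\|^2$, so $\|G(s) - P_3(s)\| \le 2 s^4 \|H\|^2$. Combining via the triangle inequality and substituting $s^4 = \Delta t^2 / 4$ reproduces the stated bound $[\tfrac{8}{3}(\|A\|+\|B\|)^4 + \tfrac{1}{2}\|H\|^2]\Delta t^2$. The main obstacle I anticipate is the bookkeeping for $F'''(0) = 0$; the cleanest way to handle it is to expand $S(\pm s)$ as Taylor polynomials of order three with remainders (as above), which makes the cancellation of the $\pm s^3$ coefficients manifest without having to differentiate the eight-fold product directly.
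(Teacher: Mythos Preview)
Your proposal is correct and follows essentially the same approach as the paper's proof: both compare the product formula and $e^{-iH\Delta t}$ to their shared cubic Taylor polynomial $I - i\Delta t\, H$, bound the fourth-order remainder of the eight-factor product via the generalized Leibniz rule and the multinomial theorem, and bound the exponential's remainder by $\tfrac{1}{2}\|H\|^2\Delta t^2$. The only differences are cosmetic (you substitute $s=\sqrt{\Delta t/2}$ whereas the paper uses $\tau=\sqrt{\Delta t}$ with $1/\sqrt{2}$ factors in the exponents) and that you give a slightly more explicit argument, via the $\pm s^3 S_3$ cancellation, for why the odd-order Taylor coefficients of $F$ vanish, which the paper simply asserts.
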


\begin{proof}
Let $\tau=\sqrt{\Delta t}$, then $F(\tau) = S \big(\tau/\sqrt{2} \big)S \big(-\tau/\sqrt{2} \big)$ is a product of matrix exponentials, so its elements are all smooth functions of $\tau$. We can therefore write it as a finite Taylor series about $\tau=0$ with an integral remainder term of order $O(\tau^4)$:
\begin{equation}
F(\tau) = I - i \tau^2 H +  \int_0^\tau \frac{(\tau-s)^3}{3!} \frac{d^4}{ds^4} F(s) \, ds.
\end{equation}
Likewise, the elements of $e^{-i H \Delta t}$ are smooth functions of $\Delta t$, so we have the Taylor series:
\begin{align}
e^{-i H \Delta t}
&=
I - i  H \Delta t + \int_0^{\Delta t} (\Delta t-s) \frac{d^2}{ds^2} e^{-i s H} ds \\ 
&= 
I - i \tau^2 H - \int_0^{\tau^2} (\tau^2 -s) H^2 e^{-i s H} ds.
\end{align}
Therefore, 
\begin{align}
\big \| F(\tau) - e^{-i \tau^2 H} \big \|
& \le 
\big \| F(\tau) - (I - i \tau^2 H) \big \|
+
\big \| e^{-i \tau^2 H} - (I - i \tau^2 H) \big \| \\ 
&=
\left \| \int_0^\tau \frac{(\tau-s)^3}{3!} \frac{d^4}{ds^4} F(s) \, ds \right\|
+
\left\| \int_0^{\tau^2} (\tau^2 -s) H^2 e^{-i s H} ds \right\| \\
& \le 
\max_{s \in[0,\tau]} \left\| \frac{d^4}{ds^4} F(s) \right\| \frac{\Delta t^2}{4!}
+
 \frac{\|H^2\| \Delta t^2}{2}.
\end{align}
We can get the claimed result by bounding $\|H^2\| \le \|H\|^2$, and using the general Leibniz rule to write the fourth derivative of $F$ in terms of a multi-index $c=(c_1, \dots, c_8)$ as:
\begin{equation}
\frac{d^4}{ds^4} F(s)
=
\sum_{\substack{c_1, \dots, c_8 \ge 0 \\ \|c\|_1 = 4}}
\frac{4!}{c_1! \cdots c_8!} 
\left( \frac{d^{c_1}}{ds^{c_1} } \; e^{i s B /\sqrt{2}} \right)
\left( \frac{d^{c_2}}{ds^{c_2} } \; e^{i s A /\sqrt{2}} \right)
\cdots 
\left( \frac{d^{c_8}}{ds^{c_8} } \; e^{i s A /\sqrt{2}} \right),
\end{equation}
so
\begin{align}
\left\| \frac{d^4}{ds^4} F(s) \right\|
&\le 
\sum_{\substack{c_1, \dots, c_8 \ge 0 \\ \|c\|_1 = 4}}
\frac{4!}{c_1! \cdots c_8!} 
\left \| \frac{d^{c_1}}{ds^{c_1} } \; e^{i s B /\sqrt{2}} \right \|
\left \| \frac{d^{c_2}}{ds^{c_2} } \; e^{i s A /\sqrt{2}} \right \|
\cdots 
\left \| \frac{d^{c_8}}{ds^{c_8} } \; e^{i s A /\sqrt{2}} \right \| \\ 
& \le
\sum_{\substack{c_1, \dots, c_8 \ge 0 \\ \|c\|_1 = 4}}
\frac{4!}{c_1! \cdots c_8!} 
\left \| B/\sqrt{2} \right \|^{c_1}
\left \| A/\sqrt{2} \right \|^{c_2} 
\cdots \;
\left \| A/\sqrt{2} \right \|^{c_8} \\ 
& = 
\Big( 4 \|A\|/\sqrt{2} +  4 \|B\|/\sqrt{2} \Big)^4,
\end{align}
using the multinomial theorem in the last step.
\end{proof}

\begin{proof}[Proof of Theorem~\ref{thm:time_disc_error}]
First, we simplify the error bound from Lemma~\ref{lemma:PF_error} as
\begin{equation}
\left[ \frac{8}{3} \big(\|A\|+\|B\| \big)^4 + \frac{1}{2} \big \| H \big \|^2 \right] \Delta t^2 
\le 
\left[ \frac{8}{3} \big(\|A\|+\|B\| \big)^4 + 2 \|A \|^2 \|B \|^2 \right] \Delta t^2
\le 
3 \big(\|A\|+\|B\| \big)^4 \Delta t^2.
\end{equation}
Naively, one could identify $A$ and $B$ with $K$ and $D_{V_{t_0}}$, respectively, but choosing $A=\gamma^{-1} K$ and $B=\gamma D_{V_{t_0}}$ instead for some $\gamma$ that makes $\|A\|$ and $\|B\|$ more similar can reduce the bound above without changing $i[A,B]$. To that end, we pick $\gamma= \|K\|^{1/2} = \frac{N\pi}{L} \sqrt{\frac{d}{2}}$ so that $\|A\|, \|B\| = O(N \sqrt{d}/L)$. (We could do something similar with $\|D_{V_t}\|$ too, but we opt not to here since that norm will not be known, in general, and should therefore not figure in the algorithm.) With this choice, the product formula in Lemma~\ref{lemma:PF_error} becomes that in Eqs.~\eqref{eq:PF} and \eqref{eq:PF_angles}. Invoking Lemmas~\ref{lemma:time_dep_approx} and \ref{lemma:PF_error} with the notation of Sec.~\ref{sec:discretizing_time}, and noting that $\|D_{V_t}\|= \|V_t\|_{\mathsf{L}^\infty}$ and $\|\frac{d}{dt} D_{V_t}\|= \|\frac{\partial}{\partial t} V_t\|_{\mathsf{L}^\infty}$ gives the desired result:
\begin{align}
\Big \| U(t_1, t_0) - W(t_0) \Big \|
& \le 
\Big \| W(t_0) - e^{-i H_{t_0}^c \Delta t} \Big \|
+ \Big \| U(t_1, t_0) - e^{-i H_{t_0}^c \Delta t} \Big \| \\ 
& \le 
3 \big( \|A\| + \|B\| \big)^4 \Delta t^2 + \frac{1}{2} \max_{t\in[t_0, t_1]} \Big \| \frac{d}{dt} H_t^c \Big\| \Delta t^2 \\ 
& \le 3 \left( \frac{\pi N}{L} \sqrt{\frac{d}{2}} +  \frac{\pi N}{L} \sqrt{\frac{d}{2}} \; \|V_{t_0}\|_{\mathsf{L}^\infty}  \right)^4 \Delta t^2
+
\|K \| \max_{t\in[t_0, t_1]} \Big \| \frac{d}{dt} D_{V_t} \Big\| \Delta t^2 \\ 
& = 
\frac{3\pi^4}{4} \frac{d^2 N^4}{L^4} \Big(1 + \|V_{t_0}\|_{\mathsf{L}^\infty}  \Big)^4 \Delta t^2
+ \frac{\pi^2}{2} \frac{d N^2}{L^2}  \max_{t\in[t_0, t_1]} \Big \| \frac{\partial}{\partial t} V_t \Big\|_{\mathsf{L}^\infty} \Delta t^2
\end{align}

\end{proof}

\section{Proof of Corollary~\ref{c:qmeansapprox}}\label{app:proof-cor-mean-estimation}

\textit{Proof of Corollary~\ref{c:qmeansapprox}}. We start by setting some notation. Define $h := L/N$ and $[N]^d := \{0,\ldots, N-1\}^d$, and denote $\|f\|_{\mathsf{L}^\infty} = M$. For any $k = (k_1,\ldots,k_d)\in [N]^d$ define $v_k = (k_1h,\ldots,k_dh)$, so that $\mathbb{X}^d_N = \{v_k\,|\,k\in [N]^d\}$. Additionally, recall that $\bar{p}_N$ is defined via \begin{equation}
    \bar{p}_N(x) = \frac{p(x)}{S}
\end{equation}
for all $x\in \mathbb{X}^d_N$, with
\begin{equation}
S = \sum_{k\in[N]^d}p(v_k) = \sum_{x\in\mathbb{X}^d_N}p(x).
\end{equation}
Recall also from Definition~\ref{def:app-disc-qsample-prep-access} that under the assumption of $\epsilon$-approximate discretized qsample preparation access with discretization scale $N$, we are given the ability to implement $U$ and $U^{-1}$ for the unitary satisfying $\|U|0\rangle - |\psi_{\bar{p}_N} \rangle \| \leq \epsilon$. Let us denote with $q:\mathbb{X}^d_N\rightarrow [0,1]$ the discrete probability distribution over $\mathbb{X}^d_N$ defined via
\begin{equation}
q(x) = |\langle x|U|0\rangle|^2 \text{ for all } x\in\mathbb{X}^d_N.
\end{equation}
Note that $q$ is the distribution sampled from when measuring $U|0\rangle$ in the computational basis, and it follows from $\| U|0\rangle -|\psi_{\bar{p}_N}\rangle\| \leq \epsilon$ that $\mathrm{d}_\mathrm{TV}(\bar{p}_N,q)\leq \epsilon$, where $\mathrm{d}_\mathrm{TV}$ denotes the total variation distance. Additionally, for any distribution $Q$ define 
\begin{align}
\mu(Q) &= \mathbb{E}_{x\sim Q}[f(x)], \\
m_2(Q) &= \mathbb{E}_{x\sim Q}[f^2(x)], \\
\sigma^2(Q) &= m_2(Q) - [\mu(Q)]^2.
\end{align}
With this in hand, we state the following helpful technical lemmas, whose proofs are given in Section~\ref{ss:lemma_proof}:

\begin{lemma}\label{lem:mean-variance-bound} Under the assumption that the PDF $p:[0,L]^d\rightarrow [0,\infty)$ is Lipschitz continuous with Lipschitz constant $\ell_p$, and that $f:[0,L]^d\rightarrow \mathbb{R}$ is Lipschitz continuous with Lipschitz constant $\ell_f$, we have that
\begin{align}
&|\mu(p) - \mu(\bar{p}_N)|\leq 2\|f\|_{\mathsf{L}^\infty}\frac{\ell_p\sqrt{d}L^{d+1}}{N} + \ell_f\sqrt{d}\frac{L}{N} \label{eq:mean_bound}\\
& |\sigma^2(p) - \sigma^2(\bar{p}_N)| \leq 6\|f\|^2_{\mathsf{L}^\infty}\frac{\ell_p\sqrt{d}L^{d+1}}{N} + 4 \|f\|_{\mathsf{L}^\infty} \ell_f\sqrt{d}\frac{L}{N} \label{eq:variance_bound}
\end{align}
\end{lemma}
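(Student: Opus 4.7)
The plan is to reduce both bounds to cell-by-cell Lipschitz estimates over the partition of $[0,L]^d$ into the $N^d$ axis-aligned cubes $C_k = v_k + [0,h)^d$ of volume $h^d$, where $h = L/N$. On each cell, for any $y \in C_k$ the diameter bound $\|y - v_k\| \le h\sqrt{d}$ gives $|p(y) - p(v_k)| \le \ell_p h \sqrt{d}$ and $|f(y) - f(v_k)| \le \ell_f h \sqrt{d}$.

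For the mean bound, the natural move is to multiply numerator and denominator of $\mu(\bar{p}_N) = [\sum_x f(x) p(x)]/S$ by $h^d$ and set $A = h^d \sum_{x \in \mathbb{X}_N^d} f(x) p(x)$ and $B = h^d S$, so that $A$ is a midpoint-type Riemann sum for $\mu(p) = \int f p \, dx$ and $B$ is one for $1 = \int p \, dx$. The key identity to exploit is
\begin{equation}
A - \mu(p) B \;=\; \sum_{k \in [N]^d} \int_{C_k} \Bigl\{ [f(v_k) - \mu(p)]\,[p(v_k) - p(y)] \,+\, p(y)\,[f(v_k) - f(y)] \Bigr\} \, dy,
\end{equation}
which simultaneously absorbs both the numerator error and the renormalization error. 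Bounding the integrand by $2 \|f\|_{\mathsf{L}^\infty} \ell_p h \sqrt{d} + p(y) \ell_f h \sqrt{d}$ and integrating against $\mathrm{vol}([0,L]^d) = L^d$ and $\int p = 1$ respectively yields exactly $|A - \mu(p) B| \le 2\|f\|_{\mathsf{L}^\infty} \ell_p \sqrt{d} L^{d+1}/N + \ell_f \sqrt{d} L/N$, from which \eqref{eq:mean_bound} follows (after writing $\mu(\bar{p}_N) - \mu(p) = (A - \mu(p) B)/B$ and noting that $B$ is within $O(1/N)$ of $1$ by the same Lipschitz estimate applied to $p$ alone).

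For the variance bound, I would decompose
\begin{equation}
|\sigma^2(p) - \sigma^2(\bar{p}_N)| \le |m_2(p) - m_2(\bar{p}_N)| + |\mu(p)^2 - \mu(\bar{p}_N)^2|,
\end{equation}
and apply the mean-estimate machinery with $f$ replaced by $f^2$. Since $|f^2(x) - f^2(y)| \le 2\|f\|_{\mathsf{L}^\infty} \ell_f \|x-y\|$, the function $f^2$ is Lipschitz with constant $2 \|f\|_{\mathsf{L}^\infty} \ell_f$ and $\|f^2\|_{\mathsf{L}^\infty} = \|f\|_{\mathsf{L}^\infty}^2$; plugging these into \eqref{eq:mean_bound} gives $|m_2(p) - m_2(\bar{p}_N)| \le 2\|f\|_{\mathsf{L}^\infty}^2 \ell_p \sqrt{d} L^{d+1}/N + 2\|f\|_{\mathsf{L}^\infty} \ell_f \sqrt{d} L/N$. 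For the second piece, factor $\mu(p)^2 - \mu(\bar{p}_N)^2 = (\mu(p) - \mu(\bar{p}_N))(\mu(p) + \mu(\bar{p}_N))$ and use $|\mu(p) + \mu(\bar{p}_N)| \le 2\|f\|_{\mathsf{L}^\infty}$ together with \eqref{eq:mean_bound}. Summing the two contributions produces coefficients $2+4 = 6$ in front of $\|f\|_{\mathsf{L}^\infty}^2 \ell_p \sqrt{d} L^{d+1}/N$ and $2 + 2 = 4$ in front of $\|f\|_{\mathsf{L}^\infty} \ell_f \sqrt{d} L/N$, exactly matching \eqref{eq:variance_bound}.

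The main subtlety—and what the identity for $A - \mu(p) B$ is designed to avoid—is that one cannot separately control the numerator and denominator errors in $\mu(\bar{p}_N) = A/B$ without picking up spurious factors of $1/B$ that would degrade the constants. Writing the combined error as a single sum of cell integrals whose integrand vanishes pointwise when $f$ and $p$ are constant on $C_k$ is what lets the two Lipschitz constants $\ell_f$ and $\ell_p$ appear cleanly and with the sharp coefficients demanded by the lemma; everything else is a routine Lipschitz-on-cells estimate.
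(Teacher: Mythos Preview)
Your identity $A-\mu(p)B=\sum_k\int_{C_k}\bigl\{[f(v_k)-\mu(p)][p(v_k)-p(y)]+p(y)[f(v_k)-f(y)]\bigr\}\,dy$ is correct and does give exactly the bound $|A-\mu(p)B|\le 2\|f\|_{\mathsf{L}^\infty}\ell_p\sqrt{d}L^{d+1}/N+\ell_f\sqrt{d}L/N$ that you claim. The gap is in the last step: writing $\mu(\bar p_N)-\mu(p)=(A-\mu(p)B)/B$ and then ``noting that $B$ is within $O(1/N)$ of $1$'' is not enough to recover \eqref{eq:mean_bound} with the stated constants. If $B<1$---which can certainly happen, since $B=h^dS$ is merely a vertex Riemann sum for $\int p=1$---dividing by $B$ inflates your bound by a factor $1/B>1$, yielding $(\text{RHS})/(1-\ell_p\sqrt{d}L^{d+1}/N)$ rather than RHS. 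That extra factor does not go away, and the lemma is stated non-asymptotically with explicit constants.

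The paper avoids this by inserting a \emph{normalized} piecewise-constant proxy $r(x)=p(v_k)/(h^dS)$ on each cell $C_k$ and splitting $|\mu(p)-\mu(\bar p_N)|\le|\mu(p)-\mu(r)|+|\mu(r)-\mu(\bar p_N)|$. The first piece is $\le\|f\|_{\mathsf{L}^\infty}\|p-r\|_1\le 2\|f\|_{\mathsf{L}^\infty}\ell_p\sqrt{d}L^{d+1}/N$; in the second piece the common prefactor $h^d/Z$ multiplies $\sum_k p(v_k)[\bar f(C_k)-f(v_k)]$, and since $Z=h^d\sum_k p(v_k)$ the weights sum to one exactly, giving $\le\ell_f h\sqrt{d}$ with no residual $1/B$. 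Your variance argument (replace $f$ by $f^2$, factor $\mu(p)^2-\mu(\bar p_N)^2$) matches the paper's four-term decomposition and goes through once the mean bound is repaired.
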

\begin{lemma}\label{lem:TV-consequences} For any two discrete distributions $P,Q$ satisfying $\mathrm{d}_\mathrm{TV}(P,Q)\leq\epsilon$ one has
\begin{align}
&|\mu(P) - \mu(Q)| \leq 2\|f\|_{\infty}\epsilon \label{eq:mean-tv-bound}\\
&|\sigma^2(P) - \sigma^2(Q)| \leq 6\|f\|^2_{\infty}\epsilon\label{eq:var-tv-bound}
\end{align}
\end{lemma}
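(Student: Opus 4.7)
The plan is to prove both bounds by reducing everything to the elementary fact that for any bounded function $g$ on the sample space, $|\mathbb{E}_P[g] - \mathbb{E}_Q[g]| \leq 2\|g\|_{\mathsf{L}^\infty} \cdot \mathrm{d}_\mathrm{TV}(P,Q)$. This is immediate from the identity $\mathrm{d}_\mathrm{TV}(P,Q) = \tfrac{1}{2}\sum_x |P(x)-Q(x)|$ combined with the triangle inequality:
\begin{equation}
\Big| \sum_x g(x)\big[P(x)-Q(x)\big] \Big| \leq \|g\|_{\mathsf{L}^\infty} \sum_x |P(x)-Q(x)| = 2\|g\|_{\mathsf{L}^\infty}\,\mathrm{d}_\mathrm{TV}(P,Q).
\end{equation}

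First, I apply this directly with $g = f$ to get $|\mu(P) - \mu(Q)| \leq 2\|f\|_{\mathsf{L}^\infty}\,\mathrm{d}_\mathrm{TV}(P,Q) \leq 2\|f\|_{\mathsf{L}^\infty}\epsilon$, which is exactly Eq.~\eqref{eq:mean-tv-bound}.

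For the variance bound, I decompose $\sigma^2(P) - \sigma^2(Q) = \big[m_2(P) - m_2(Q)\big] - \big[\mu(P)^2 - \mu(Q)^2\big]$ and bound each piece separately. For the second-moment piece, I apply the elementary fact with $g = f^2$, noting $\|f^2\|_{\mathsf{L}^\infty} = \|f\|_{\mathsf{L}^\infty}^2$, to obtain $|m_2(P) - m_2(Q)| \leq 2\|f\|_{\mathsf{L}^\infty}^2 \epsilon$. For the squared-mean piece, I factor $\mu(P)^2 - \mu(Q)^2 = \big[\mu(P)-\mu(Q)\big]\big[\mu(P)+\mu(Q)\big]$, use the already-proved mean bound on the first factor, and use $|\mu(P) + \mu(Q)| \leq 2\|f\|_{\mathsf{L}^\infty}$ on the second factor, giving $|\mu(P)^2 - \mu(Q)^2| \leq 4\|f\|_{\mathsf{L}^\infty}^2 \epsilon$. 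Combining by the triangle inequality yields $|\sigma^2(P) - \sigma^2(Q)| \leq 6\|f\|_{\mathsf{L}^\infty}^2 \epsilon$, which is Eq.~\eqref{eq:var-tv-bound}.

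There is no real obstacle here; the lemma is a standard stability-under-TV-perturbation statement, and the only minor item to watch is the bookkeeping of constants (getting exactly $2$ and $6$ rather than looser factors) and the consistent use of $\|f\|_{\mathsf{L}^\infty}$ as the uniform bound on $|f|$. I would keep the proof short and self-contained, with no appeal to machinery beyond the definition of TV distance and elementary algebraic identities.
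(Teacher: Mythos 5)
Your proof is correct and follows essentially the same route as the paper's: both establish the mean bound directly from $\sum_x|P(x)-Q(x)|=2\,\mathrm{d}_\mathrm{TV}(P,Q)$, and both obtain the variance bound by splitting into a second-moment piece (bounded via $g=f^2$) and a squared-mean piece (bounded by factoring $\mu(P)^2-\mu(Q)^2$ and reusing the mean bound), yielding the same constants $2$ and $6$.
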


Using these lemmas, we then have
\begin{align}
|\mu(p) - \mu(q)| &\leq |\mu(p) - \mu(\bar{p}_N)| + |\mu(\bar{p}_N)- \mu(q)| \\
&\leq  2\|f\|_{\mathsf{L}^\infty}\frac{\ell_p\sqrt{d}L^{d+1}}{N} + \ell_f\sqrt{d}\frac{L}{N}  + 2\|f\|_{\infty}\epsilon,\\
&:= \epsilon_\mathrm{mean}\label{eq:mean-triangle}
\end{align}
where we have used both Eq.~\eqref{eq:mean_bound} from Lemma~\ref{lem:mean-variance-bound} and Eq.~\eqref{eq:mean-tv-bound} from Lemma~\ref{lem:TV-consequences}. Similarly, we also have  
\begin{align}
|\sigma^2(p) - \sigma^2(q)| &\leq |\sigma^2(p) - \sigma^2(\bar{p}_N)| + |\sigma^2(\bar{p}_N) - \sigma^2(q)| \\
&\leq 6\|f\|^2_{\mathsf{L}^\infty}\frac{\ell_p\sqrt{d}L^{d+1}}{2N} + 4 \|f\|_{\mathsf{L}^\infty} \ell_f\sqrt{d}\frac{L}{N} + 6\|f\|^2_{\infty}\epsilon\\
&:= \epsilon_\mathrm{var}.\label{eq:variance-bound}
\end{align}
where here we have used Eq.~\eqref{eq:variance_bound} from Lemma~\ref{lem:mean-variance-bound} as well as Eq.~\eqref{eq:var-tv-bound} from Lemma~\ref{lem:TV-consequences}.

Now, it follows immediately from Theorem~\ref{thm:discreteq}, via the fact that in this setting we have assumed access to the unitaries $U$ and $U^{-1}$, that the mean estimate $\tilde{\mu}$ output by the estimator after $O(m\log^{3/2}(m)\log\log (m))$ ``experiments'' satisfies
\begin{equation}\label{eq:first-bound}
\mathrm{Pr}\left[|\tilde{\mu}-\mu(q)|> \sigma(q)\frac{\log(1/\Delta)}{m}\right]\leq \Delta.
\end{equation}
In particular, when applying the mean estimator from Theorem~\ref{thm:discreteq} with access to $U$ and $U^{-1}$, one is essentially performing mean estimation with respect to $q$, and therefore gets a guaranteed estimator for $\mu(q)$. To be comparable to the classical estimator for $\mu$, we would now like to replace the dependencies on $\mu(q)$ and $\sigma(q)$ with dependencies on $\mu(p)$ and $\sigma(p)$. To do this, we first use the fact that $\sigma(q) \leq \sqrt{\sigma(p) + \epsilon_{\mathrm{var}}}$ (which follows from Eq.~\eqref{eq:variance-bound}), which together with Eq~\eqref{eq:first-bound} gives
\begin{equation}
\mathrm{Pr}\left(|\tilde{\mu}-\mu(q)|> \sqrt{\sigma^2(p) + \epsilon_{\mathrm{var}}}\frac{\log(1/\Delta)}{m} \right) \leq \mathrm{Pr}\left(|\tilde{\mu}-\mu(q)|> \sigma(q)\frac{\log(1/\Delta)}{m}\right)\leq   \Delta.
\end{equation}
Now, from the triangle inequality and Eq.~\eqref{eq:mean-triangle} we have 
\begin{equation}
|\tilde{\mu} - \mu(p)|\leq |\tilde{\mu} - \mu(q)| + |\mu(q) - \mu(p)| \leq |\tilde{\mu} - \mu(q)| + \epsilon_\mathrm{mean},
\end{equation}
and therefore
\begin{align}
\mathrm{Pr}\left(|\tilde{\mu}-\mu(p)|\leq \sqrt{\sigma^2(p) + \epsilon_{\mathrm{var}}}\frac{\log(1/\Delta)}{m} + \epsilon_{\mathrm{mean}}\right) \geq \mathrm{Pr}\left(|\tilde{\mu}-\mu(q)|\leq \sqrt{\sigma^2(p) + \epsilon_{\mathrm{var}}}\frac{\log(1/\Delta)}{m} \right) > 1-\Delta,
\end{align}
which proves the corollary.

\subsection{Proofs of Lemma~\ref{lem:mean-variance-bound} and Lemma~\ref{lem:TV-consequences}}
\label{ss:lemma_proof}

\begin{proof}[Proof of Lemma~\ref{lem:mean-variance-bound}]  For any $k = (k_1,\ldots, k_d)\in [N]^d$ define the cube-cell $C_k\subset [0,L]^d$ via
\begin{equation}
C_k = [k_1h,(k_1+1)h)\times \ldots \times [k_dh,(k_d+1)h).
\end{equation}
Note that $v_k = (k_1h,\ldots,k_dh)$ is the the lower-left vertex of $C_k$. We now define a helpful piecewise constant proxy $r$ for $p$ in two steps. Firstly, define $p^{(h)}$ via $p^{(h)}(x) = p(v_k)$ for all $x\in C_k$---i.e., $p^{(h)}$ is constant over each cell with the value of $p$ on the lower-left vertex of the cell. Next, normalize $p^{(h)}$ to be a valid PDF---i.e., define 
\begin{equation}
r(x) := \frac{p^{(h)}(x)}{Z}
\end{equation}
for all $x\in [0,L)^d = \cup_{k\in [N]^d}C_k$ where 
\begin{equation}
Z = \int_{[0,L)^d}p^{(h)}(x)dx = \sum_{k\in[N]^d}\int_{C_k} p(v_k) dx = h^d\sum_{k\in [N]^d}p(v_k) = h^d S.
\end{equation}
With this, we can proceed, and we start by proving Eq.~\eqref{eq:mean_bound}.

\textbf{Proof of Eq.~\eqref{eq:mean_bound}:} Via the triangle inequality we have
\begin{align}
|\mu(p) - \mu(\bar{p}_N)| \leq  \underbrace{|\mu(p) - \mu(r) |}_{\text{(A)}} + \underbrace{|\mu(r) - \mu(\bar{p}_N)|}_{\text{(B)}} \label{eq:error_decomposition}
\end{align}
Term (A) in the expression above is the difference in the mean of $f$ when approximating $p$ with its piecewise constant proxy, and term (B) is the difference when using the mean of $f$ over a cell (with respect to a constant probability over the cell), as opposed to simply the value of $f$ on the vertex. We now bound each of these two terms separately.

\textbf{Term (A):} Recalling the notation $\|f\|_{\mathsf{L}^\infty} = M$, we have that
\begin{equation}
\text{(A)} = \left|\int_{[0,1)^d} f(x)[p(x) - r(x)]dx\right| \leq M\|p - r\|_1.
\end{equation}
Now, using the fact that $r = p^{(h)}/Z$ one has
\begin{align}
\|p - r\|_1 &\leq \|p - p^{(h)}\|_1 + \|p^{(h)} - r\|_1 \\
&= \|p - p^{(h)}\|_1 + |Z-1|.
\end{align}
But, we also have 
\begin{align}
|Z-1| &= \left|\int p^{(h)}(x)dx - \int p(x)dx\right| \\
&=  \left|\int [p^{(h)}(x) - p(x)]dx\right|\\
&\leq \|p^{(h)}-p\|_1
\end{align}
and therefore
\begin{equation}
\|p - r\|_1 \leq 2\|p - p^{(h)}\|_1.
\end{equation}
We can now bound $\|p - p^{(h)}\|_1$ using the Lipschitz continuity of $p$. More specifically, for any $x\in C_k$ one has
\begin{align}
|p(x) - p^{(h)}(x)| &= |p(x) - p(v_k)| \\
&\leq \ell_p\|x-v_k\|_2 \\
&\leq \ell_p \mathrm{diam}(C_k)\\
&=\ell_ph\sqrt{d}.
\end{align}
Using this, we then have
\begin{align}
\|p - p^{(h)}\|_1 &= \int_{[0,1)^d}|p(x) - p^{(h)}(x)|dx \\
&= \sum_{k\in[N]^d}\int_{C_k}|p(x) - p^{(h)}(x)|dx \\
&\leq \sum_{k\in[N]^d}\int_{C_k}\ell_ph\sqrt{d}dx \\
&= \sum_{k\in[N]^d} \ell_ph\sqrt{d} \int_{C_k}dx \\
&= \sum_{k\in[N]^d} \ell_ph\sqrt{d}h^d\\
&= N^d\ell_ph\sqrt{d}h^d\\
&= \ell_p\sqrt{d}\frac{L^{d+1}}{N}.
\end{align}
As a result, we therefore have 
\begin{equation}\label{eq:1-norm-bound}
\|p-r\|_1 \leq 2 \ell_p\sqrt{d}\frac{L^{d+1}}{N}
\end{equation}
and 
\begin{equation}\label{eq:A_bound}
\text{(A)} \leq 2M \ell_p\sqrt{d}\frac{L^{d+1}}{N}.
\end{equation}
\textbf{Term (B):} As $r$ is constant on each cell $C_k$, we have
\begin{align}\label{eq:r_mean}
\mu(r) = \int_{[0,1)^d}f(x)r(x)dx = \frac{1}{Z}\sum_{k\in[N]^d}p(v_k)\int_{C_k}f(x)dx = \frac{h^d}{Z}\sum_{k\in[N]^d}p(v_k)\overline f(C_k),
\end{align}
where 
\begin{equation}
\overline{f}(C_k) = \frac{1}{h^d}\int_{C_k}f(x)dx
\end{equation}
is the average of $f$ on cell $C_k$. Additionally, by definition we have
\begin{equation}\label{eq:discrete_mean}
\mu(\bar{p}_N) = \sum_{k\in[N]^d}f(v_k)\frac{p(v_k)}{S} = \frac{h^d}{Z}\sum_{k\in[N]^d}f(v_k)p(v_k).
\end{equation}
Substituting Eqs.~\eqref{eq:r_mean} and~\eqref{eq:discrete_mean} into the expression for (B) then yields 
\begin{align}\label{eq:inter}
\text{(B)} &= \left| \frac{h^d}{Z}\sum_{k\in[N]^d}p(v_k)[\overline{f}(C_k) - f(v_k)]\right| \leq   \frac{h^d}{Z}\sum_{k\in[N]^d}p(v_k)\left|\overline{f}(C_k) - f(v_k)\right|.
\end{align}
Now, by Lipschitz continuity of $f$, and the fact that $\overline{f}(C_k)\leq \sup_{x\in C_k}f(x)$, we have
\begin{align}\label{eq:X_lip}
\left|\overline{f}(C_k) - f(v_k)\right| \leq \sup_{x\in C_k} |f(x) - f(v_k)| \leq \ell_f\mathrm{diam}(C_k) = \ell_fh\sqrt{d}.
\end{align}
Using this in Eq.~\eqref{eq:inter}, and recalling the definitions of $S$ and $Z$, we then have 
\begin{equation}\label{eq:B_bound}
\text{(B)} \leq \ell_f h\sqrt{d}.
\end{equation}
Substituting the bounds on terms (A) and (B) from Eqs.~\eqref{eq:A_bound} and~\eqref{eq:B_bound} respectively, into the error decomposition of Eq.~\eqref{eq:error_decomposition}, then proves Eq.~\eqref{eq:mean_bound}.
We now move onto the proof of Eq.~\eqref{eq:variance_bound}.

\textbf{Proof of Eq.~\eqref{eq:variance_bound}:} We again start with an error decomposition via the triangle inequality as follows
\begin{align}
|\sigma^2(p) -\sigma^2(\bar{p}_N)| &= |(m_2(p) - [\mu(p)]^2) - (m_2(\bar{p}_N) - [\mu(\bar{p}_N)]^2) | \\
&\leq \underbrace{|m_2(p)- m_2(r)|}_{\text{(C)}} +  \underbrace{|m_2(r)- m_2(\bar{p}_N)|}_{\text{(D)}} +  \underbrace{|[\mu(p)]^2- [\mu(r)]^2|}_{\text{(E)}} + \underbrace{|[\mu(r)]^2- [\mu(\bar{p})]^2|}_{\text{(F)}}.\label{eq:triangle_decomp}
\end{align}
We also again proceed term by term.

\textbf{Term (C)}: Similarly to Term (A), here we have
\begin{align}
\text{(C)} = \left|\int_{[0,L)^d} f^{2}(x)[p(x) - r(x)]dx\right| \leq M^2\|p-r\|_1 
\leq 2M^2\ell_p\sqrt{d}\frac{L^{d+1}}{N},
\end{align}
where in the last step we have used Eq.~\eqref{eq:1-norm-bound}.

\textbf{Term (E)}: Here we have
\begin{align}
\text{(E)} &= |[\mu(p)]^2- [\mu(r)]^2| \\
&= |\mu(p)- \mu(r)||\mu(p) +  \mu(r)|\\
&\leq |\mu(p)- \mu(r)|\left(|\mu(p)| + |\mu(r)|\right) \label{eq:first_step} \\
&\leq 2M|\mu(p)- \mu(r)\label{eq:second_step}|\\
&\leq 4M^2\ell_p\sqrt{d}\frac{L^{d+1}}{N}\label{eq:third_step}
\end{align}
where in going from line~\eqref{eq:first_step} to~\eqref{eq:second_step} we have used that $\mu(Q) \leq \|f\|_{\mathsf{L}^\infty} = M$ for any distribution $Q$, and in going from line~\eqref{eq:second_step} to~\eqref{eq:third_step} we have used the bound for Term (A) from Eq.~\eqref{eq:A_bound}.

\textbf{Term (D):} Similarly to term (B) we have
\begin{align}
& m_2(r) = \frac{h^d}{Z}\sum_{k\in[N]^d}p(v_k)\overline{f}^2(C_k) \\
& m_2(\bar{p}_N) = \frac{h^d}{Z}\sum_{k\in[N]^d}p(v_k)f^2(v_k),
\end{align}
where 
\begin{equation}
\overline{f}^2(C_k) = \frac{1}{h^d}\int_{C_k}f^2(x)dx.
\end{equation}
We therefore have that 
\begin{equation}\label{eq:d-simplification}
\text{(D)} \leq \frac{h^d}{Z}\sum_{k\in[N]^d}p(v_k)\left|\overline{f}^2(C_k)-f^2(v_k)\right|.
\end{equation}
Now, by Lipschitz continuity of $f$ we have
\begin{align}
\left|\overline{f}^2(C_k)-f^2(v_k)\right| &\leq \sup_{x\in C_k} \left|f^2(x) - f^2(v_k)\right|, \\
&=\sup_{x\in C_k} \left|[f(x) - f(v_k)][f(x) + f(v_k)]\right|,\\
&\leq 2M \sup_{x\in C_k} \left|f(x) - f(v_k)\right|,\\
&\leq 2M\ell_fh\sqrt{d},\label{eq:final}
\end{align}
where in the last line we have used Eq.~\eqref{eq:X_lip}. Now, substituting~\eqref{eq:final} into~\eqref{eq:d-simplification} and recalling the definitions of $S$ and $Z$ yields
\begin{equation}
\text{(D)}\leq 2M\ell_fh\sqrt{d}.
\end{equation}
\textbf{Term (F):} Similarly to term (E) we have
\begin{align}
\text{(F)} &= |[\mu(r)]^2- [\mu(\bar{p}_N)]^2| \\
&=  |\mu(r)- \mu(\bar{p})||\mu(r)]^2+ \mu(\bar{p})| \\
&\leq 2M |\mu(r)- \mu(\bar{p}_N)| \\
&\leq 2M \ell_fh\sqrt{d},
\end{align}
where the last line follows from noting that  $|\mu(r)- \mu(\bar{p}_N)|$ is precisely term (B) which has already been bounded in Eq.~\eqref{eq:B_bound}.

Finally, substituting the bounds for terms (A),(B),(C) and (D) into Eq.~\eqref{eq:triangle_decomp} gives the proof of Eq.~\eqref{eq:variance_bound}.
\end{proof}

\begin{proof}[Proof of Lemma~\ref{lem:TV-consequences}] We start by proving Eq.~\eqref{eq:mean-tv-bound} as follows:
\begin{align}
|\mu(P) - \mu(Q)| &= \left|\sum_{x}f(x)[P(x) - Q(x)]\right| \\
&\leq \|f\|_{\mathsf{L}^\infty} \sum_{x}|P(x) - Q(x)|\\
&\leq 2\|f\|_{\infty}\epsilon,\label{eq:TV_consequence}
\end{align}
exploiting the definition of the TV distance and the fact that $\mathrm{d}_\mathrm{TV}(\bar{p}_N,q)\leq \epsilon$ in the last line. Similarly Eq.~\eqref{eq:var-tv-bound} is obtained via
\begin{align}
|\sigma^2(P) - \sigma^2(Q)| &= \left| m_2(P) - [\mu(P)]^2 - (m_2(Q) - [\mu(Q)]^2)\right| \\
&\leq |m_2(P) - m_2(Q)| + \left|[\mu(P)]^2 - [\mu(Q]^2\right| \\
&\leq \left|\sum_{x} f^2(x)[P(x)-Q(x)]\right| + |\mu(P) + \mu(Q)||\mu(P) - \mu(Q)|\\
&\leq \|f\|^2_{\mathsf{L}^\infty}\sum_{x} |P(x)-Q(x)| + \left(|\mu(P)| + |\mu(Q)|\right)|\mu(P) - \mu(Q)|\\
&\leq 2\|f\|^2_{\mathsf{L}^\infty}\epsilon + 2\|f\|_{\mathsf{L}^\infty}|\mu(P) - \mu(Q)|\\
&\leq 2\|f\|^2_{\mathsf{L}^\infty}\epsilon + 4\|f\|^2_{\mathsf{L}^\infty}\epsilon\\
&= 6\|f\|^2_{\mathsf{L}^\infty}\epsilon.
\end{align}
\end{proof}

\end{document}